\newtheorem{theorem}{Theorem}[section]
\newtheorem{lemma}[theorem]{Lemma}
\newtheorem{observation}[theorem]{Observation}
\theoremstyle{definition}
\newtheorem{definition}[theorem]{Definition}
\newcommand{\df}{_{\rm def}}
\newcommand{\lmin}{\ell\text{-}\min}
\newcommand{\rmax}{r\text{-}\max}
\newcommand{\myspace}{2pt}
\newcommand{\ignore}[1]{}
\begin{document}

\title{Polynomial-time Algorithms for the Subset Feedback Vertex Set Problem on Interval Graphs and Permutation Graphs}

\author{
Charis Papadopoulos\thanks{Department of Mathematics, University of Ioannina, Greece. E-mail:  \texttt{charis@cs.uoi.gr}}
\and
Spyridon Tzimas\thanks{Department of Mathematics, University of Ioannina, Greece. E-mail: \texttt{roytzimas@hotmail.com}}
}

\date{}
\pagestyle{plain}
\maketitle

\begin{abstract}
Given a vertex-weighted graph $G=(V,E)$ and a set $S \subseteq V$, a subset feedback vertex set $X$ is a set of the vertices of $G$ such that the graph induced by $V \setminus X$ has no cycle containing a vertex of $S$.
The \textsc{Subset Feedback Vertex Set} problem takes as input $G$ and $S$ and asks for the subset feedback vertex set of minimum total weight.
In contrast to the classical \textsc{Feedback Vertex Set} problem which is obtained from the \textsc{Subset Feedback Vertex Set} problem for $S=V$,
restricted to graph classes the \textsc{Subset Feedback Vertex Set} problem is known to be NP-complete on split graphs and, consequently, on chordal graphs.
However as \textsc{Feedback Vertex Set} is polynomially solvable for AT-free graphs, no such result is known for the \textsc{Subset Feedback Vertex Set} problem on any subclass of AT-free graphs.
Here we give the first polynomial-time algorithms for the problem on two unrelated subclasses of AT-free graphs: interval graphs and permutation graphs.
As a byproduct we show that there exists a polynomial-time algorithm for circular-arc graphs by suitably applying our algorithm for interval graphs.
Moreover towards the unknown complexity of the problem for AT-free graphs, we give a polynomial-time algorithm for co-bipartite graphs.
Thus we contribute to the first positive results of the \textsc{Subset Feedback Vertex Set} problem when restricted to graph classes for which \textsc{Feedback Vertex Set} is solved in polynomial time.
\end{abstract}

\section{Introduction}
For a given set $S$ of vertices of a graph $G$, a {\it subset feedback vertex set} $X$ is a set of vertices such that every cycle of $G[V \setminus X]$ does not contain a vertex from $S$.
The \textsc{Subset Feedback Vertex Set} problem takes as input a graph $G=(V,E)$ and a set $S \subseteq V$ and asks for the subset feedback vertex set of minimum cardinality.
In the weighted version every vertex of $G$ has a weight and the objective is to compute a subset feedback vertex set with the minimum total weight.
The \textsc{Subset Feedback Vertex Set} problem is a generalization of the classical \textsc{Feedback Vertex Set} problem in which the goal is to remove a set of vertices $X$ such that $G[V \setminus X]$ has no cycles.
Thus by setting $S = V$ the problem coincides with the NP-complete \textsc{Feedback Vertex Set} problem \cite{GJ}.
Both problems find important applications in several aspects that arise in optimization theory, constraint satisfaction, and bayesian inference \cite{EvenNZ00,fvs:approx:becker:1996,fvs:app:bayes:bar-yehuda:1998,problems:survey:festa:2009}.
Interestingly the \textsc{Subset Feedback Vertex Set} problem for $|S| = 1$ also coincides with the NP-complete \textsc{Multiway Cut} problem \cite{FominHKPV14} in which the task is to disconnect a predescribed set of vertices \cite{GargVY04,Calinescu08}.

{\sc Subset Feedback Vertex Set} was first introduced by Even et al.~who obtained a constant factor approximation algorithm for its weighted version~\cite{EvenNZ00}.
The unweighted version in which all vertex weights are equal has been proved to be fixed parameter tractable \cite{CyganPPW13}.
Moreover the fastest algorithm for the weighted version in general graphs runs in $O^{*}(1.86^n)$ time\footnote{The $O^{*}$ notation is used to suppress polynomial factors.} by enumerating its minimal solutions \cite{FominHKPV14},  whereas for the unweighted version the fastest algorithm runs in $O^{*}(1.75^n)$ time \cite{FominGLS16}.
As the unweighted version of the problem is shown to be NP-complete even when restricted to split graphs \cite{FominHKPV14}, there is a considerable effort to reduce the running time on chordal graphs, a proper superclass of split graphs, and more general on other classes of graphs.
Golovach et al. considered the weighted version and gave an algorithm that runs in $O^{*}(1.67^n)$ time for chordal graphs \cite{GolovachHKS14}.
Reducing the existing running time even on chordal graphs has been proved to be quite challenging and only for the unweighted version of the problem a faster algorithm was given that runs in $O^{*}(1.61^n)$ time \cite{ChitnisFLMRS13}.
In fact the $O^{*}(1.61^n)$-algorithm given in \cite{ChitnisFLMRS13} runs for every graph class which is closed under vertex deletions and edge contractions, and on which the weighted \textsc{Feedback Vertex Set} problem can be solved in polynomial time.
Thus there is an algorithm that runs in $O^{*}(1.61^n)$ time for the unweighted version of the {\sc Subset Feedback Vertex Set} problem when restricted to AT-free graphs \cite{ChitnisFLMRS13}, a graph class that properly contains permutation graphs and interval graphs.
Here we show that for the classes of permutation graphs and interval graphs we design a much faster algorithm even for the weighted version of the problem.

As a generalization of the classical \textsc{Feedback Vertex Set} problem, let us briefly give an overview of the complexity of \textsc{Feedback Vertex Set} on graph classes related to permutation graphs and interval graphs.
Concerning the complexity of \textsc{Feedback Vertex Set} on restricted graphs classes it is known to be NP-complete on bipartite graphs \cite{Yannakakis81a} and planar graphs \cite{GJ},
whereas it becomes polynomial-time solvable on the classes of
chordal graphs \cite{fvs:chord:corneil:1988,Spinrad03},
circular-arc graphs \cite{Spinrad03},
interval graphs \cite{fvs:int:lu:1997},
permutation graphs \cite{fvs:perm:brandstadt:1985,fvs:perm:brandstadt:1987,fvs:perm:brandstadt:1993,fvs:perm:liang:1994},
cocomparability graphs \cite{fvs:cocomp:liang:1997}, and, more generally, AT-free graphs \cite{KratschMT08}.
Despite the positive and negative results of the \textsc{Feedback Vertex Set} problem, very few similar results are known concerning the complexity of the \textsc{Subset Feedback Vertex Set} problem.
Clearly for graph classes for which the \textsc{Feedback Vertex Set} problem is NP-complete, so does the \textsc{Subset Feedback Vertex Set} problem.
However as the \textsc{Subset Feedback Vertex Set} problem is more general that \textsc{Feedback Vertex Set} problem, it is natural to study its complexity for graph classes for which \textsc{Feedback Vertex Set} is polynomial-time solvable.
In fact restricted to graph classes there is only a negative result for the \textsc{Subset Feedback Vertex Set} problem regarding the NP-completeness of split graphs \cite{FominHKPV14}.
Such a result, however, implies that there is an interesting algorithmic difference between the two problems, as the \textsc{Feedback Vertex Set} problem is known to be polynomial-time computable for chordal graphs \cite{fvs:chord:corneil:1988,Spinrad03}, and, thus, also for split graphs.

\begin{figure}[t]
\centering
\includegraphics[scale= 1.0]{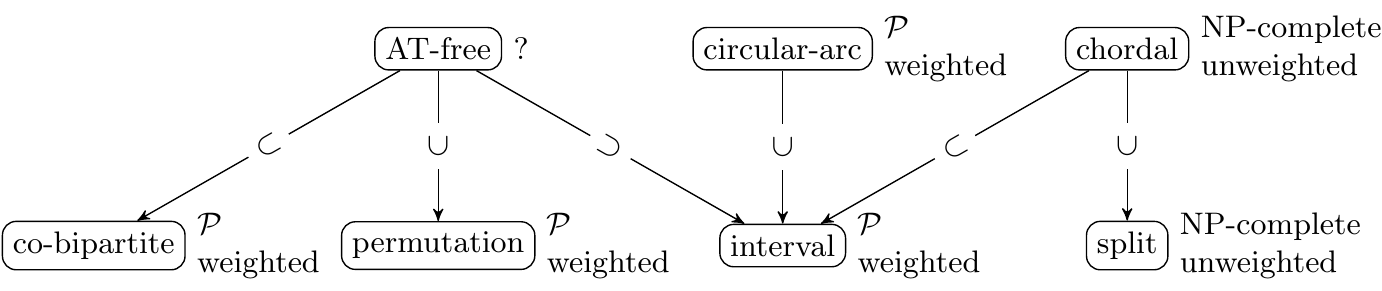}
\caption{The computational complexity of the {\sc Subset Feedback Vertex Set} problem restricted to the considered graph classes.
All polynomial-time results ($\mathcal{P}$) are obtained in this work, whereas the NP-completeness result of split graphs, and, consequently, of chordal graphs, is due to \cite{FominHKPV14}.}
\label{fig:classes}
\end{figure}

Here we initiate the study of \textsc{Subset Feedback Vertex Set} restricted on graph classes from the positive perspective.
We consider its weighted version and give the first positive results on permutation graphs and interval graphs, both being proper subclasses of AT-free graphs.
As already explained, we are interested towards subclasses of AT-free graphs since for chordal graphs the problem is already NP-complete \cite{FominHKPV14}.
Permutation graphs and interval graphs are unrelated to split graphs and are both characterized by a linear structure with respect to a given vertex ordering  \cite{graph:classes:brandstadt:1999,graph:classes:Go04,Spinrad03}.
For both classes of graphs we design polynomial-time algorithms based on dynamic programming of subproblems defined by passing the vertices of the graph according to their natural linear ordering.
One of our key ingredients is that during the pass of the dynamic programming we augment the considered vertex set and we allow the solutions to be chosen only from a specific subset of the vertices rather than the whole vertex set.
Although for interval graphs such a strategy leads to a simple algorithm, the case of permutation graphs requires further descriptions of the considered subsolutions by augmenting the considered part of the graph with a small number of additional vertices.
As a side result we show that the problem has a polynomial time solution on a larger class of interval graphs, namely to that of circular-arc graphs, by suitably applying our algorithm for interval graphs.
Moreover towards the unknown complexity of the problem for the class of AT-free graphs, we consider the class of co-bipartite graphs (complements of bipartite graphs) and settle the corresponding complexity status.
More precisely we show that the number of minimal solutions of a co-bipartite graph is polynomial which implies a polynomial-time algorithm of the \textsc{Subset Feedback Vertex Set} problem for the class of co-bipartite graphs.
Our overall results are summarized in Figure~\ref{fig:classes}.
Therefore, we contribute to provide the first positive results of the \textsc{Subset Feedback Vertex Set} problem on subclasses of AT-free graphs.

\section{Preliminaries}

All graphs in this text are undirected and simple. A graph is
denoted by $G=(V,E)$ with vertex set $V$ and edge set $E$. We use
the convention that $n=|V|$ and $m=|E|$. For a vertex subset $S
\subseteq V$, the {\it subgraph of $G$ induced by $S$} is $G[S] =
(S, \{\{u,v\} \in E \mid u,v \in S\})$. The {\it neighborhood} of
a vertex~$x$ of $G$ is $N(x)=\{v \mid xv \in E\}$ and the {\it
degree\/} of $x$ is $|N(x)|$. If $S \subseteq V$, then
$N(S)=\bigcup_{x \in S} N(x) \setminus S$. A \emph{clique} is a
set of pairwise adjacent vertices, while an \emph{independent set}
is a set of pairwise non-adjacent vertices.

A {\it path} is a sequence of distinct vertices $P=\langle v_1v_2\cdots v_k\rangle$ where each pair of consecutive vertices $v_iv_{i+1}$ forms an
edge of $G$. If additional $v_1v_k$ is an edge then we obtain a
{\it cycle}. In this paper, we distinguish between paths (or
cycles) and {\it induced paths} (or {\it induced cycles}). By an
induced path (or cycle) of $G$ we mean a chordless path (or
cycle). A graph is {\it connected} if there is a path between any
pair of vertices. A {\it connected component} of $G$ is a maximal
connected subgraph of $G$.
A {\it forest} is a graph that contains no cycles and a {\it tree}
is a forest that is connected.

\medskip

A {\it weighted graph} $G = (V, E)$ is a graph, where each vertex
$v \in V$ is assigned a {\it weight} that is a real number.
We denote by $w(v)$ the weight of each vertex $v \in V$.
For a vertex set $A \subset V$ the weight of $A$ is the sum of the
weights of all vertices in $A$.

The \textit{Subset Feedback Vertex Set} (SFVS) problem is defined
as follows: given a weighted graph $G$ and a vertex set $S\subseteq V$, find a vertex set $X \subset V$, such that all
cycles containing vertices of $S$, also contains a vertex of $X$
and $\sum_{v \in X} w(v)$ is minimized. In the unweighted version
of the problem all weights are equal. A vertex set $X$ is defined
as \textit{minimal} subset feedback vertex set if no proper subset
of $X$ is a subset feedback vertex set for $G$ and $S$. The
classical \textsc{Feedback Vertex Set} (FVS) problem is a special
case of the subset feedback vertex set problem with $S = V$.
Note that a {\it minimum weight} subset feedback vertex set is dependent on the
weights of the vertices, whereas a {\it minimal} subset feedback vertex
set is only dependent on the vertices and not their weights. Clearly, both
in the weighted and the unweighted versions, a minimum subset feedback
vertex set must be minimal.

An induced cycle of $G$ is called $S$-cycle if a vertex of $S$ is contained in the cycle.
We define an \emph{$S$-forest} of $G$ to be a vertex set $Y \subseteq V$ such that no cycle in $G[Y]$ is an $S$-cycle.
An $S$-forest $Y$ is \textit{maximal} if no proper superset of $Y$ is an $S$-forest.
Observe that $X$ is a minimal subset feedback vertex set if and only if $Y = V \setminus X$ is a maximal $S$-forest.
Thus, the problem of computing a minimum weighted subset feedback vertex set is equivalent to the problem of computing a maximum weighted $S$-forest.
Let us denote by $\mathcal{F}_S$ the class of $S$-forests.
In such terms, given the graph $G$ and the subset $S$ of $V$, we are interested in finding a $\max_w\left\{Y \subseteq V \mid  G[Y] \in \mathcal{F}_S \right\}$, where $\max_w$ selects a vertex set having the maximum sum of its weights.


\section{Computing SFVS on interval graphs and circular-arc graphs}
Here we present a polynomial-time algorithm for the SFVS problem
on interval graphs. 
A graph is an \emph{interval graph} if there is a bijection between its vertices and a
family of closed intervals of the real line such that two vertices are adjacent if and
only if the two corresponding intervals overlap.
Such a bijection is called an \emph{interval representation} of the graph, denoted by $\mathcal{I}$.
Notice that every induced subgraph of an interval graph is an interval graph.
Hereafter we assume that the input graph is connected; otherwise, we apply the described algorithm in each connected component and take the overall solution as the union of the sub-solutions.

As already mentioned, instead of finding a subset feedback vertex
set $X$ of minimum weight of $(G,S)$ we concentrate on the equivalent problem of finding a
maximum weighted $S$-forest $Y$ of $(G,S)$.
We first define the necessary vertex sets.
Let $G$ be a weighted interval graph and let $\mathcal{I}$ be its interval representation.
The left and right endpoints of an interval $i$, $1 \leq i \leq n$, are denoted by $\ell(i)$ and $r(i)$, respectively.
Each interval $i$ is labeled from 1 to $n$ according to their ascending $r(i)$.
For technical reasons of our algorithm, we add an interval with label $0$ that does not belong to $S$, has negative weight, and augment $\mathcal{I}$ to $\mathcal{I}^{+}$ by setting $\ell(0)=-1$ and $r(0)=0$.
Notice that interval 0 is non-adjacent to any vertex of $G$.
Clearly if $Y$ is a maximum weighted $S$-forest for $G[\mathcal{I}^{+}]$ then $Y \setminus \{0\}$ is a maximum weighted $S$-forest for $G[\mathcal{I}]$.
Moreover it is known that any induced cycle of an interval graph is an induced triangle \cite{fvs:int:lu:1997,Spinrad03}.

We consider the two relations on $V$ that are defined by the endpoints of the intervals as follows:
\begin{displaymath}
\begin{array}{lll}
i\leq_{\ell}j& \Longleftrightarrow & \ell(i)\leq \ell(j)\\
i\leq_{r}j& \Longleftrightarrow & r(i)\leq r(j)
\end{array}
\end{displaymath}
Since all endpoints of the collection's intervals are distinct, it is not difficult to show that $\leq_{\ell}$ and $\leq_{r}$ are total orders on $V$.
We write $\lmin$ to denote the interval having the minimum left-endpoint among its operands with respect to $\leq_{\ell}$;
and we write $\rmax$ to denote the interval having the maximum right-endpoint among its operands with respect to $\leq_{r}$.
We define two different types of predecessors of an interval $i$ with respect to $\leq_{r}$, which correspond to the subproblems that our dynamic programming algorithm wants to solve.
Let $i\in V\setminus\{0\}$. Then
\begin{itemize}
\item ${<}i=\df \rmax\{h\in V:h<_{r}i\}$ and
\item ${\ll} i=\df \rmax\{h\in V:h<_{r}i\textrm{ and }\{h,i\}\notin E\}$.
\end{itemize}
Moreover for a vertex $i\in V$ we let $V_{i}=\df \{h\in V:h\leq_{r}i\}$.
Observe that for two vertices $i,x \in V$ with $r(i)<r(x)$, $x \in V \setminus V_i$.
An example of interval graph with the given notation of $V_i$ is given in Figure~\ref{fig:inerval}.

\begin{figure}[t]
\centering
\includegraphics[scale= 0.93]{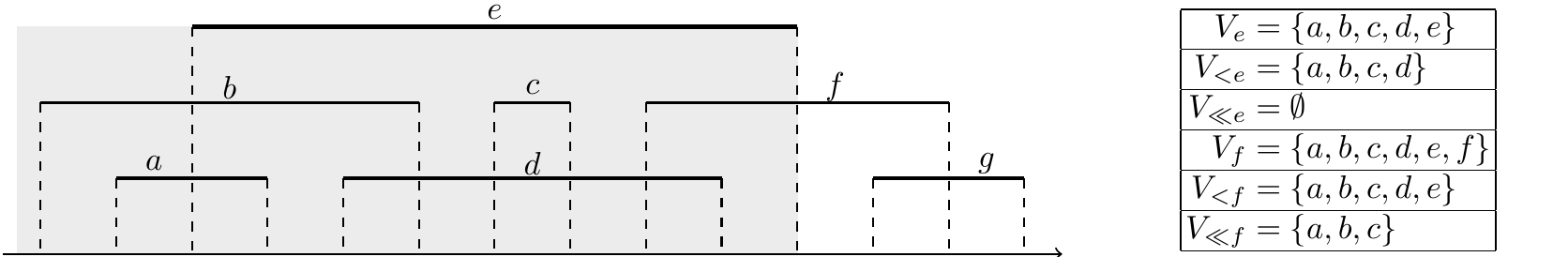}
\caption{An interval graph given by its interval representation and the corresponding sets of $V_{e}$ and $V_{f}$.
Observe that ${<}f = e$ whereas ${\ll}f=c$. Also notice that the intervals that are properly contained within the gray area form the set $V_{e}$.}
\label{fig:inerval}
\end{figure}

\begin{observation}\label{obs:sfvs:int}
Let $i\in V\setminus\{0\}$ and let $j\in V\setminus V_{i}$ such that $\{i,j\}\in E$. Then,
\begin{enumerate}[label={(\arabic*)}]
\item $V_{i}=V_{<i}\cup\{i\}$ and
\item $V_{<i}=V_{\ll j}\cup\{h\in V_{<i}:\{h,j\}\in E\}$.
\end{enumerate}
\end{observation}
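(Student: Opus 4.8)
The plan is to translate both equalities into statements about the right endpoints of intervals only, exploiting that all $2n$ endpoints are distinct (so $\leq_{r}$ is a total order) and that the auxiliary interval $0$ is non-adjacent to every vertex of $G$ and has the globally smallest right endpoint; the latter is what guarantees that ${<}i$ and ${\ll}j$ are always well-defined.

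First I would record the reformulation $V_{<i}=\{h\in V:r(h)<r(i)\}$. Indeed, by definition $V_{<i}=V_{k}$ for $k={<}i=\rmax\{h\in V:r(h)<r(i)\}$, so $V_{<i}=\{h:r(h)\le r(k)\}$; since $r(k)$ is exactly the largest right endpoint that is still strictly below $r(i)$, the sets $\{h:r(h)\le r(k)\}$ and $\{h:r(h)<r(i)\}$ coincide. Part~(1) then follows at once: $V_{i}=\{h:r(h)\le r(i)\}$ decomposes as $\{h:r(h)<r(i)\}\cup\{i\}$ because the endpoints are distinct, and the first set is precisely $V_{<i}$.

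For part~(2) set $A=V_{<i}$, $B=V_{\ll j}$, and $C=\{h\in V_{<i}:\{h,j\}\in E\}$; by the reformulation above $A=\{h:r(h)<r(i)\}$ and $B=\{h:r(h)\le r({\ll}j)\}$. The inclusion $C\subseteq A$ is immediate from the definition of $C$. For $B\subseteq A$ I would pass to endpoints: from $\{i,j\}\in E$ and $r(i)<r(j)$ (the latter because $j\in V\setminus V_{i}$) the intervals $i$ and $j$ overlap, hence $\ell(j)<r(i)$; and since ${\ll}j$ is non-adjacent to $j$ with $r({\ll}j)<r(j)$, the intervals ${\ll}j$ and $j$ are disjoint, hence $r({\ll}j)<\ell(j)$. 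Chaining the two inequalities gives $r({\ll}j)<r(i)$, i.e.\ $B\subseteq A$. For the reverse inclusion $A\subseteq B\cup C$, take any $h$ with $r(h)<r(i)$: if $\{h,j\}\in E$ then $h\in C$; otherwise $h$ is non-adjacent to $j$ and $r(h)<r(i)<r(j)$, so $h$ lies in the family over which ${\ll}j$ is the $\rmax$, whence $r(h)\le r({\ll}j)$ and $h\in B$. Thus $A=B\cup C$, which is part~(2).

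The only step that needs genuine care is the endpoint characterization of adjacency and non-adjacency for a pair of intervals whose right endpoints are ordered in a known way (together with the sanity check that ${<}i$ and ${\ll}j$ exist, which is exactly the reason interval $0$ was added). Once that one-line fact about interval representations is in place, both parts reduce to the elementary manipulations with the total order $\leq_{r}$ sketched above.
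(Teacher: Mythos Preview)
Your proof is correct and follows essentially the same route as the paper's: partition $V_{<i}$ into neighbours and non-neighbours of $j$, and identify the latter with $V_{\ll j}$. The paper's argument is extremely terse (one sentence per part) and in particular leaves implicit the inclusion $V_{\ll j}\subseteq V_{<i}$, which is precisely where the hypothesis $\{i,j\}\in E$ enters; you spell this out carefully via the endpoint chain $r({\ll}j)<\ell(j)<r(i)$, but this is unpacking rather than a different approach.
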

\begin{proof}
The first statement follows by the definitions of $V_{i}$ and ${<}i$.
For the second statement observe that $V_{<i}$ can be partitioned into the non-neighbors of $j$ in $V_{<i}$ and the neighbors of $j$ in $V_{<i}$.
The first set corresponds to $V_{\ll j}$ whereas the second set is exactly the set $\{h\in V_{<i}:\{h,j\}\in E\}$.
\end{proof}

Now we define the sets that our dynamic programming algorithm uses in order to compute the $S$-forest-inducing vertex set of $G$ that has maximum weight.
\begin{definition}[$A$-sets]\label{def:sfvs:int:a}
Let $i\in V$. Then,
\begin{displaymath}
A_{i}=\df\max_{w}\{X\subseteq V_{i}:G[X]\in\mathcal{F}_{S}\}.
\end{displaymath}
\end{definition}
\begin{definition}[$B$-sets]\label{def:sfvs:int:b}
Let $i\in V$ and let $x\in V\setminus V_{i}$. Then,
\begin{displaymath}
B_{i}^{x}=\df\max_{w}\{X\subseteq V_{i}:G[X\cup\{x\}]\in\mathcal{F}_{S}\}.
\end{displaymath}
\end{definition}
\begin{definition}[$C$-sets]\label{def:sfvs:int:c}
Let $i\in V$ and let $x,y\in V\setminus(V_{i}\cup S)$ such that $x<_{\ell}y$ and $\{x,y\}\in E$. Then,
\begin{displaymath}
C_{i}^{x,y}=\df\max_{w}\{X\subseteq V_{i}:G[X\cup\{x,y\}]\in\mathcal{F}_{S}\}.
\end{displaymath}
\end{definition}

Since $V_{0}=\{0\}$ and $w(0)\leq0$, $A_{0}=\varnothing$ and, since $V_{n}=V$, $A_{n}=\max_{w}\{X\subseteq V:G[X]\in\mathcal{F}_{S}\}$.
The following lemmas state how to recursively compute all $A$-sets, $B$-sets and $C$-sets besides $A_{0}$.

\begin{lemma}\label{lem:sfvs:int:i}
Let $i\in V\setminus\{0\}$. Then $A_{i}=\max_{w}\left\{A_{<i},B_{<i}^{i}\cup\{i\}\right\}$.
\end{lemma}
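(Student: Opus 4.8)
The plan is to prove the two inequalities $w(A_i) \geq w\left(\max_w\{A_{<i}, B_{<i}^i \cup \{i\}\}\right)$ and $w(A_i) \leq w\left(\max_w\{A_{<i}, B_{<i}^i \cup \{i\}\}\right)$, where throughout I use Observation \ref{obs:sfvs:int}(1), namely $V_i = V_{<i} \cup \{i\}$, so that every $X \subseteq V_i$ either avoids $i$ (hence $X \subseteq V_{<i}$) or contains $i$ (hence $X = X' \cup \{i\}$ with $X' \subseteq V_{<i}$).

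For the ``$\geq$'' direction I would exhibit both candidates on the right-hand side as valid competitors in the maximization defining $A_i$. First, $A_{<i} \subseteq V_{<i} \subseteq V_i$ and $G[A_{<i}] \in \mathcal{F}_S$ by definition, so $A_{<i}$ is feasible for $A_i$ and hence $w(A_i) \geq w(A_{<i})$. Second, $B_{<i}^i \subseteq V_{<i} \subseteq V_i$, and $i \in V_{<i+?}$—more precisely, $i \notin V_{<i}$ since $<i <_r i$, so $i$ plays the role of the external vertex $x$ in Definition \ref{def:sfvs:int:b} with $i \in V \setminus V_{<i}$; by that definition $G[B_{<i}^i \cup \{i\}] \in \mathcal{F}_S$, and $B_{<i}^i \cup \{i\} \subseteq V_{<i} \cup \{i\} = V_i$, so this set is also feasible for $A_i$, giving $w(A_i) \geq w(B_{<i}^i \cup \{i\})$. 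Combining, $w(A_i)$ is at least the larger of the two, as claimed.

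For the ``$\leq$'' direction I take an optimal $X = A_i$ and split into the two cases above. If $i \notin X$, then $X \subseteq V_{<i}$ with $G[X] \in \mathcal{F}_S$, so $X$ is feasible for $A_{<i}$ and $w(X) \leq w(A_{<i})$. If $i \in X$, write $X = X' \cup \{i\}$ with $X' \subseteq V_{<i}$; since $G[X] = G[X' \cup \{i\}] \in \mathcal{F}_S$, the set $X'$ is feasible in the maximization defining $B_{<i}^i$ (here again using that $i \in V \setminus V_{<i}$), so $w(X') \leq w(B_{<i}^i)$, hence $w(X) = w(X') + w(i) \leq w(B_{<i}^i) + w(i) = w(B_{<i}^i \cup \{i\})$. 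In either case $w(A_i)$ is at most one of the two right-hand quantities, so at most their maximum.

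I expect the only real point requiring care—rather than a genuine obstacle—is the bookkeeping about the external vertex in the $B$-set: one must check that $i \in V \setminus V_{<i}$ so that $B_{<i}^i$ is even well-defined, which is immediate because ${<}i <_r i$ forces $i \notin V_{<i}$. Everything else is a routine case split on membership of $i$ together with the fact that $\mathcal{F}_S$ is closed under taking induced subgraphs (so feasibility of $X$ transfers to feasibility of $X \setminus \{i\}$ and vice versa in the presence of $\{i\}$). No reasoning about $S$-cycles or triangles is needed at this level; that is hidden inside the definitions of the $A$- and $B$-sets.
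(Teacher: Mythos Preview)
Your proposal is correct and follows essentially the same approach as the paper's proof: both use Observation~\ref{obs:sfvs:int}(1) and a case split on whether $i\in A_i$. Your version is simply more explicit, spelling out both inequalities and verifying that $B_{<i}^{i}$ is well-defined, whereas the paper compresses this into two sentences.
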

\begin{proof}
By~Observation \ref{obs:sfvs:int}~(1), $V_{i}=V_{<i}\cup\{i\}$.
There are two cases to consider for $i\in V\setminus\{0\}$: either $i\notin A_{i}$ or $i\in A_{i}$.
In the former we have $A_{i}=A_{<i}$, whereas in the latter $i$ cannot induce an $S$-cycle in $B_{<i}^{i}$ by definition, which implies that $A_{i}=B_{<i}^{i}\cup\{i\}$.
\end{proof}

For a set of vertices $L \subseteq V$, the {\it leftmost} vertex is the vertex of $L$ having the minimum left endpoint.
That is, the leftmost vertex of $L$ is the vertex $x'$ such that $x' = \lmin\{L\}$.

\begin{lemma}\label{lem:sfvs:int:ix}
Let $i\in V$ and let $x\in V\setminus V_{i}$. 
Moreover, let $x'$ be the leftmost vertex of $\{i,x\}$ and let $y'$ be the vertex of $\{i,x\}\setminus \{x'\}$.
\begin{enumerate}[label={(\arabic*)}]
\item If $\{i,x\}\notin E$, then $B_{i}^{x}=A_{i}$.
\item If $\{i,x\}\in E$, then
$
B_{i}^{x}=\left\{\begin{array}{l@{}l}
\max_{w}\left\{B_{<i}^{x},B_{\ll y'}^{x'}\cup\{i\}\right\}&,\textrm{ if }i\in S\textrm{ or }x\in S\\
\max_{w}\left\{B_{<i}^{x},C_{<i}^{x',y'}\cup\{i\}\right\}&,\textrm{ if }i,x\notin S.\\
\end{array}\right.
$
\end{enumerate}
\end{lemma}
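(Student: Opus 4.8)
The plan is to fix $i\in V$ and $x\in V\setminus V_{i}$, so that $r(x)>r(i)$, and to analyze directly the family $\mathcal{X}=\{X\subseteq V_{i}:G[X\cup\{x\}]\in\mathcal{F}_{S}\}$ of which $B_{i}^{x}$ is, by definition, a maximum-weight member. Two elementary facts will be used throughout: first, every induced cycle of an interval graph is a triangle, so $G[W]\in\mathcal{F}_{S}$ holds exactly when $W$ contains no $S$-triangle; second, for intervals $p,q$ with $r(p)<r(q)$ one has $\{p,q\}\in E$ iff $\ell(q)\le r(p)$. For part~(1), assume $\{i,x\}\notin E$; together with $r(i)<r(x)$ this forces $\ell(x)>r(i)\ge r(a)$ for every $a\in V_{i}$, so $x$ is non-adjacent to all of $V_{i}$. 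Hence for every $X\subseteq V_{i}$ the vertex $x$ is isolated in $G[X\cup\{x\}]$, so $G[X\cup\{x\}]\in\mathcal{F}_{S}$ iff $G[X]\in\mathcal{F}_{S}$; thus $\mathcal{X}=\{X\subseteq V_{i}:G[X]\in\mathcal{F}_{S}\}$ and $B_{i}^{x}=A_{i}$.

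For part~(2) assume $\{i,x\}\in E$. Using $V_{i}=V_{<i}\cup\{i\}$ from Observation~\ref{obs:sfvs:int}(1) I split $\mathcal{X}$ according to whether $i\in X$. The members with $i\notin X$ are exactly the sets $X\subseteq V_{<i}$ with $G[X\cup\{x\}]\in\mathcal{F}_{S}$, whose maximum-weight representative is $B_{<i}^{x}$ (well defined since $x\notin V_{<i}$). For a member with $i\in X$, write $X=X'\cup\{i\}$ with $X'\subseteq V_{<i}$; the membership condition becomes $G[X'\cup\{i,x\}]\in\mathcal{F}_{S}$, and since the weight of $i$ is a fixed additive constant, the maximum-weight such $X$ is $\{i\}$ together with a maximum-weight $X'\subseteq V_{<i}$ satisfying $G[X'\cup\{i,x\}]\in\mathcal{F}_{S}$; writing $\{i,x\}=\{x',y'\}$, this last quantity is, by definition, $C_{<i}^{x',y'}$ when $i,x\notin S$, and the side conditions of the $C$-set hold because $i,x\notin S$, $i,x\notin V_{<i}$, $x'<_{\ell}y'$ and $\{x',y'\}\in E$. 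Taking the better of the two cases for $i$ gives $B_{i}^{x}=\max_{w}\{B_{<i}^{x},C_{<i}^{x',y'}\cup\{i\}\}$, which is the second line of the claimed formula.

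It remains to show that when $i\in S$ or $x\in S$ we may replace $C_{<i}^{x',y'}$ by $B_{\ll y'}^{x'}$. The pivotal observation is: a vertex $a\in V_{<i}$ is a common neighbor of $i$ and $x$ iff $\{a,y'\}\in E$; indeed $r(a)<r(i),r(x)$, so $\{a,i,x\}$ is a triangle exactly when $r(a)\ge\max\{\ell(i),\ell(x)\}=\ell(y')$, i.e.\ exactly when $\{a,y'\}\in E$. If $i\in S$ or $x\in S$, every such triangle is an $S$-triangle, so any $X'\subseteq V_{<i}$ with $G[X'\cup\{i,x\}]\in\mathcal{F}_{S}$ must avoid all those vertices, that is $X'\cap N(y')=\varnothing$; by the right-endpoint structure of these sets this says precisely $X'\subseteq V_{\ll y'}$, and conversely every $X'\subseteq V_{\ll y'}$ satisfies $X'\subseteq V_{<i}$ and $X'\cap N(y')=\varnothing$. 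For such an $X'$, the vertex $y'$ has its unique neighbor $x'$ inside $X'\cup\{x',y'\}$, hence lies on no cycle, so $G[X'\cup\{i,x\}]\in\mathcal{F}_{S}$ iff $G[X'\cup\{x'\}]\in\mathcal{F}_{S}$. Therefore the members of $\mathcal{X}$ containing $i$ are exactly the sets $X'\cup\{i\}$ with $X'\subseteq V_{\ll y'}$ and $G[X'\cup\{x'\}]\in\mathcal{F}_{S}$, whose maximum-weight representative is $B_{\ll y'}^{x'}\cup\{i\}$; combined with the $i\notin X$ case this yields the first line of the formula.

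The main obstacle is the bookkeeping behind the phrase ``by the right-endpoint structure'': one must verify, in both subcases $y'=i$ and $y'=x$, that $V_{\ll y'}\subseteq V_{<i}$ and $V_{<i}\setminus N(y')=V_{\ll y'}$. The subcase $y'=i$ is immediate from the definition of ${\ll}i$ (its associated set is exactly the non-neighbors of $i$ with right endpoint below $r(i)$), while the subcase $y'=x$ additionally needs $r({\ll}x)<r(i)$, which holds because $\{i,x\}\in E$ together with $r(i)<r(x)$ rules out a non-neighbor of $x$ whose right endpoint lies strictly between $r(i)$ and $r(x)$ (such an interval would be forced to overlap $x$). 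Beyond this, one checks routinely that each invoked set --- $B_{<i}^{x}$, $B_{\ll y'}^{x'}$, and $C_{<i}^{x',y'}$ --- meets the domain restrictions in its definition, which follows from $r(i)<r(x)$, the adjacency $\{i,x\}\in E$, and (in the relevant case) $i,x\notin S$.
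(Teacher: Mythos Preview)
Your proof is correct and follows essentially the same route as the paper's: both split on whether $i$ lies in the optimal set, handle the case $i\notin B_i^x$ via $B_{<i}^x$, and in the case $i\in B_i^x$ with $i\in S$ or $x\in S$ argue that any further vertex $h$ adjacent to $y'$ would create an $S$-triangle $\langle h,x',y'\rangle$, forcing the remainder into $V_{\ll y'}$ and reducing to $B_{\ll y'}^{x'}$; the case $i,x\notin S$ is dispatched directly via the definition of the $C$-set. Your write-up is more explicit than the paper's about the bookkeeping---in particular the verification that $V_{\ll y'}\subseteq V_{<i}$ in the subcase $y'=x$ (via $r({\ll}x)<r(i)$) and the check that the domain constraints for $B_{\ll y'}^{x'}$ and $C_{<i}^{x',y'}$ are met---but the underlying argument is the same.
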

\begin{proof}
Assume first that $\{i,x\}\notin E$. Then $r(i)<\ell(x)$, so that $x$ has no neighbour in $G[V_{i}\cup\{x\}]$.
Thus no subset of $V_{i}\cup\{x\}$ containing $x$ induces an $S$-cycle of $G$.
By Definitions \ref{def:sfvs:int:a} and \ref{def:sfvs:int:b}, it follows that $B_{i}^{x}=A_{i}$.

Next assume that $\{i,x\}\in E$.
If $i\notin B_{i}^{x}$ then according to Observation~\ref{obs:sfvs:int}~(1) it follows that $B_{i}^{x}=B_{<i}^{x}$.
So let us assume in what follows that $i\in B_{i}^{x}$.
Observe that $B_{i}^{x}\setminus\{i\}\subseteq V_{<i}$, by Observation~\ref{obs:sfvs:int}~(1).
We distinguish two cases according to whether $i$ or $x$ belong to $S$.
\begin{itemize}
\item Let $i\in S$ or $x\in S$.
Assume there is a vertex $h\in B_{i}^{x}\setminus\{i\}$ such that $\{h,y'\}\in E$.
Then we know that $\ell(y')< r(h)$ and by definition we have $\ell(x')<\ell(y')$ and $r(h) < r(x')$.
This particularly means that $h$ is adjacent to $x'$.
This however leads to a contradiction since $\langle h,x',y'\rangle$ is an induced $S$-triangle of $G$.
Thus for any vertex $h\in B_{i}^{x}\setminus\{i\}$ we know that $\{h,y'\}\notin E$.
By Observation~\ref{obs:sfvs:int}~(2) notice that $B_{i}^{x}\setminus\{i\}\subseteq V_{\ll y'}$.
Also observe that the neighbourhood of $y'$ in $G[V_{\ll y'}\cup\{x',y'\}]$ is $\{x'\}$.
Thus no subset of $V_{\ll y'}\cup\{x',y'\}$ that contains $y'$ induces an $S$-cycle of $G$.
Therefore $B_{i}^{x}=B_{\ll y'}^{x'}\cup\{i\}$.

\item Let $i,x\notin S$. By the fact $V_{i}=V_{<i}\cup\{i\}$ and since $x' <_{\ell} y'$ we get
$B_{i}^{x}=C_{<i}^{x',y'}\cup\{i\}$.
\end{itemize}
Therefore in all cases we reach the desired equations.
\end{proof}

\begin{lemma}\label{lem:sfvs:int:ixy}
Let $i\in V$ and let $x,y\in V\setminus(V_{i}\cup S)$ such that $x<_{\ell}y$ and $\{x,y\}\in E$.
Moreover, let $x'$ be the leftmost vertex of $\{i,x,y\}$ and let $y'$ be the vertex of $\{i,x,y\}\setminus \{x'\}$.
\begin{enumerate}
\item If $\{i,y\}\notin E$, then $C_{i}^{x,y}=B_{i}^{x}$.
\item If $\{i,y\}\in E$, then $C_{i}^{x,y}=\left\{\begin{array}{l@{}l}
C_{<i}^{x,y}&,\textrm{ if }i\in S\\
\max_{w}\left\{C_{<i}^{x,y},C_{<i}^{x',y'}\cup\{i\}\right\}&,\textrm{ if }i\notin S.\\
\end{array}\right.$
\end{enumerate}
\end{lemma}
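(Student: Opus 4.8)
The plan is to mirror the structure of the proof of Lemma~\ref{lem:sfvs:int:ix}, treating the two already-present vertices $x,y$ analogously to how $x$ alone was handled there. First I would dispose of statement~(1): if $\{i,y\}\notin E$, then since $x<_{\ell}y$ and $\{x,y\}\in E$ while $i\leq_r$-precedes both, we have $r(i)<\ell(y)$, so $y$ is adjacent in $G[V_i\cup\{x,y\}]$ only to $x$ (among $V_i$, $y$ sees nothing; it only sees $x$). Hence $y$ cannot lie on any induced triangle of $G[X\cup\{x,y\}]$ for $X\subseteq V_i$, so the constraint ``$G[X\cup\{x,y\}]\in\mathcal F_S$'' is equivalent to ``$G[X\cup\{x\}]\in\mathcal F_S$'', giving $C_i^{x,y}=B_i^{x}$ directly from Definitions~\ref{def:sfvs:int:b} and~\ref{def:sfvs:int:c}.

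For statement~(2), assume $\{i,y\}\in E$. As in the previous lemma, split on whether $i\in C_i^{x,y}$. If $i\notin C_i^{x,y}$, then by Observation~\ref{obs:sfvs:int}~(1) the optimal set lies in $V_{<i}$, so $C_i^{x,y}=C_{<i}^{x,y}$. So suppose $i\in C_i^{x,y}$; by Observation~\ref{obs:sfvs:int}~(1), $C_i^{x,y}\setminus\{i\}\subseteq V_{<i}$. The key point is that $i$, together with any two of $\{x,y\}$ it is adjacent to, may now form a new $S$-triangle: we know $\{x,y\}\in E$ and $\{i,y\}\in E$, and whether $\{i,x\}\in E$ is automatic from the ordering, since $x'$ (the leftmost of $\{i,x,y\}$) is adjacent to both of the other two (its interval starts before theirs and, being in $V_{<i}$-range or just after, its right endpoint exceeds their left endpoints because they pairwise overlap). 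Thus $\{i,x,y\}$ is a triangle; it is an $S$-triangle precisely when $i\in S$ (recall $x,y\notin S$). Hence if $i\in S$, no set containing $i$ together with $x,y$ is an $S$-forest, forcing $C_i^{x,y}=C_{<i}^{x,y}$. If $i\notin S$, the triangle $\{i,x,y\}$ is harmless, and the remaining danger is an $S$-triangle through $i$ and exactly one of $x,y$ together with some $h\in V_{<i}$; the argument of Lemma~\ref{lem:sfvs:int:ix} applies with the leftmost pair: letting $x'$ be the leftmost of $\{i,x,y\}$ and $y'$ the next one, any $h$ in the solution adjacent to $y'$ would be adjacent to $x'$ as well (by the endpoint inequalities $\ell(x')<\ell(y')<r(h)<r(x')$), creating an induced $S$-triangle $\langle h,x',y'\rangle$ if one of $x',y'$ is in $S$ — but both $x,y\notin S$, so the only way $\langle h,x',y'\rangle$ is an $S$-triangle is $h\in S$ or one of $x',y'$ equals $i\in S$, which is excluded — so in fact I should be careful here.

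Let me restate the delicate part cleanly: when $i\notin S$, the set $C_i^{x,y}\setminus\{i\}$ must avoid inducing, together with $\{x,y,i\}$, any $S$-triangle; the only candidate $S$-triangles using $i$ are $\langle h,i,z\rangle$ with $z\in\{x,y\}$, $\{i,z\}\in E$, $\{h,i\},\{h,z\}\in E$, $h\in S\cup\{i\}$ — since $i\notin S$ this needs $h\in S$. Following Lemma~\ref{lem:sfvs:int:ix}, one shows such $h$ cannot be adjacent to $y'$ (the second-leftmost of $\{i,x,y\}$) without forcing $\langle h,x',y'\rangle$ to be a triangle, but this triangle is an $S$-triangle only if it meets $S$, and here $h\in S$, so it \emph{is} an $S$-triangle — contradiction. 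Hence every $h\in C_i^{x,y}\setminus\{i\}$ with $h\in S$ is non-adjacent to $y'$; combined with $x'<_{\ell}y'$, $\{x',y'\}\in E$, and the observation that in $G[V_{<i}\cup\{x',y'\}]$ the vertex $y'$ sees only $x'$ among the new vertices, we conclude $C_i^{x,y}\setminus\{i\}$ is a valid choice for $C_{<i}^{x',y'}$, and conversely; thus $C_i^{x,y}=C_{<i}^{x',y'}\cup\{i\}$. Taking the $\max_w$ over the two sub-cases $i\notin C_i^{x,y}$ and $i\in C_i^{x,y}$ yields the stated formula. The main obstacle is the bookkeeping of which triple among $\{i,x,y,h\}$ actually induces a triangle and when that triangle meets $S$: one must exploit that $x,y\notin S$ throughout, so that the only way to create a forbidden $S$-triangle through $i$ is either via $i\in S$ (killed by dropping $i$) or via an $S$-vertex $h\in V_{<i}$ (killed by the leftmost-pair adjacency argument inherited from Lemma~\ref{lem:sfvs:int:ix}).
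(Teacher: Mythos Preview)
Your treatment of statement~(1) and of the sub-case $i\in S$ in statement~(2) is fine and matches the paper. The gap is in the sub-case $i\notin S$ with $i\in C_i^{x,y}$, where you must establish $C_i^{x,y}=C_{<i}^{x',y'}\cup\{i\}$. One direction (that $C_i^{x,y}\setminus\{i\}$ is feasible for the problem defining $C_{<i}^{x',y'}$) is immediate, since $\{x',y'\}\subseteq\{i,x,y\}$; your argument for it is unnecessarily elaborate. The substantive direction is the converse: that for any $X'\subseteq V_{<i}$ with $G[X'\cup\{x',y'\}]\in\mathcal{F}_S$, the larger graph $G[X'\cup\{i,x,y\}]$ is still in $\mathcal{F}_S$. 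You dismiss this with ``and conversely'' but never argue it.

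Moreover, your analysis of which new $S$-triangles could arise is aimed at the wrong vertex. You enumerate triangles through $i$, but the vertex added when passing from $\{x',y'\}$ to $\{i,x,y\}$ is $z'$, the $\ell$-rightmost of $\{i,x,y\}$, and $z'$ need not equal $i$ (for instance if $\ell(i)<\ell(x)<\ell(y)$ then $z'=y$). The correct argument, as in the paper, considers an arbitrary $S$-triangle containing $z'$: if some $v_1\in V_{<i}\cap S$ is adjacent to $z'$, then from $\ell(x')<\ell(y')<\ell(z')<r(v_1)$ together with $r(v_1)<r(i)\le\min\{r(x'),r(y')\}$ one gets that $v_1$ is adjacent to both $x'$ and $y'$, so $\langle v_1,x',y'\rangle$ is already an $S$-triangle, contradicting $G[X'\cup\{x',y'\}]\in\mathcal{F}_S$. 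Your inequality $\ell(x')<\ell(y')<r(h)<r(x')$ has the right shape but targets $y'$ instead of $z'$, and you also omit triangles $\langle v_1,v_2,z'\rangle$ with both $v_1,v_2\in V_{<i}$ (which arise precisely when $z'\neq i$). Once you redirect the adjacency argument to $z'$, the proof goes through along the lines you sketched.
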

\begin{proof}
Assume first that $\{i,y\}\notin E$. Then $r(i),\ell(x)<\ell(y)<r(x)$, so that the neighbourhood of $y$ in $G[V_{i}\cup\{x,y\}]$ is $\{x\}$.
Thus no subset of $V_{i}\cup\{x,y\}$ that contains $y$ induces an $S$-cycle of $G$.
By Definitions~\ref{def:sfvs:int:b} and \ref{def:sfvs:int:c}, it follows that $C_{i}^{x,y}=B_{i}^{x}$.

Assume next that $\{i,y\}\in E$. Then $\ell(x)<\ell(y)<r(i)<r(x),r(y)$, so that $\langle i,x,y\rangle$ is an induced triangle of $G$.
If $i\notin C_{i}^{x,y}$ then by Observation \ref{obs:sfvs:int}~(1) we have $C_{i}^{x,y}=C_{<i}^{x,y}$.
Suppose that $i\in C_{i}^{x,y}$.
If $i\in S$ then $\langle i,x,y\rangle$ is an induced $S$-triangle of $G$, contradicting the fact that $i\in C_{i}^{x,y}$.
By definition observe that $x,y \in C_{i}^{x,y}$ which means that also $x \notin S$ and $y \notin S$.
Hence $i\in C_{i}^{x,y}$ implies that $S \cap \{i,x,y\} = \emptyset$.
We will show that under the assumptions $\{i,y\}\in E$ and $i\in C_{i}^{x,y}$, we have $C_{i}^{x,y}=C_{<i}^{x',y'}\cup\{i\}$.
Notice that $C_{i}^{x,y}\setminus\{i\}\subseteq V_{<i}$ which means that every solution of $C_{i}^{x,y}$ is a solution of $C_{<i}^{x',y'}\cup\{i\}$.

To complete the proof we show that every solution of $C_{<i}^{x',y'}\cup\{i\}$ is indeed a solution of $C_{i}^{x,y}$.
Let $z'$ be the vertex of $\{i,x,y\}\setminus \{x',y'\}$.
Observe that by the leftmost ordering we have $\ell(x')<\ell(y')<\ell(z')$.
We consider the graph $G_{<i}$ induced by the vertices of $V_{<i}\cup\{x',y',z'\}$.
Assume for contradiction that an $S$-triangle of $G_{<i}$ is not an $S$-triangle in $C_{i}^{x',y'}$.
Every $S$-triangle that contains $x'$ or $y'$ of $G_{<i}$ remains an $S$-triangle in $C_{i}^{x',y'}$.
Thus $z'$ must be contained in such an $S$-triangle of $G_{<i}$.
Let $\langle v_{1},v_{2},z'\rangle$ be an induced $S$-triangle of $G_{<i}$, where $v_1,v_2 \in V_{<i}$.
Since $x',y',z'\notin S$, without loss of generality, assume that $v_{1}\in S$.
The $S$-triangle of $G_{<i}$ implies that $\ell(z')<r(v_{1})$.
By the fact that $v_1 \in \in V_{<i}$ we have $r(v_{1})<r(x'),r(y'),r(z')$.
Since $\ell(x')<\ell(y')<\ell(z')$ the previous inequalities imply that $\{v_{1},x'\},\{v_{1},y'\}\in E$.
Thus $\langle v_{1},x',y'\rangle$ is an induced $S$-triangle in $C_{i}^{x',y'}$, leading to a contradiction.
Therefore $C_{i}^{x,y}=C_{<i}^{x',y'}\cup\{i\}$ as desired.
\end{proof}

Now we are equipped with our necessary tools to obtain the main result of this section, namely a polynomial-time algorithm for SFVS on interval graphs.
\begin{theorem}\label{theo:interval}
{\sc Subset Feedback Vertex Set} can be solved in $O(n^3)$ time on interval graphs.
\end{theorem}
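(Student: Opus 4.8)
The plan is to convert Definitions~\ref{def:sfvs:int:a}--\ref{def:sfvs:int:c}, together with the recurrences of Lemmas~\ref{lem:sfvs:int:i}--\ref{lem:sfvs:int:ixy}, into a bottom-up dynamic program over the vertices of $G[\mathcal{I}^{+}]$ ordered by $\leq_{r}$. First I would preprocess in $O(n^{2})$ time: compute an interval representation of $G$, adjoin the isolated negatively-weighted vertex $0$ to obtain $\mathcal{I}^{+}$, relabel so that $0,1,\dots,n$ is the $\leq_{r}$-order, and precompute the total orders $\leq_{\ell},\leq_{r}$, an adjacency oracle, and the predecessors ${<}i$ and ${\ll}i$ for every $i$; computing all ${\ll}i$ is a single scan, since $h<_{r}i$ together with $\{h,i\}\notin E$ is equivalent to $r(h)<\ell(i)$. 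I would then fill three tables --- one for the $A$-sets, one for the $B$-sets, one for the $C$-sets --- storing for each entry only its total weight and a backpointer to the subexpression attaining the maximum, never the set itself.

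The tables would be filled in the order $i=0,1,\dots,n$, and for each fixed $i$ in the internal order ``$A_{i}$, then all $B_{i}^{x}$, then all $C_{i}^{x,y}$''. The base cases are $A_{0}=\varnothing$ (as already observed) and $B_{0}^{x}=C_{0}^{x,y}=\varnothing$ for all admissible indices, which hold because a single vertex, respectively a single edge, already induces an $S$-forest while $w(0)<0$. For $i\geq 1$, I would compute $A_{i}$ from $A_{<i}$ and $B_{<i}^{i}$ by Lemma~\ref{lem:sfvs:int:i}; each $B_{i}^{x}$, $x\in V\setminus V_{i}$, from $A_{i}$ together with $B_{<i}^{x}$ and one of $B_{\ll y'}^{x'}$ or $C_{<i}^{x',y'}$ by Lemma~\ref{lem:sfvs:int:ix}; and each admissible $C_{i}^{x,y}$ from $B_{i}^{x}$ together with $C_{<i}^{x,y}$ and $C_{<i}^{x',y'}$ by Lemma~\ref{lem:sfvs:int:ixy}. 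The correctness of each single transition is precisely the statement of the corresponding lemma, so by induction on $\leq_{r}$ each table entry equals the set named in its defining definition; in particular $A_{n}$ is a maximum-weight $S$-forest of $G[\mathcal{I}^{+}]$, and since $w(0)<0$ forces $0\notin A_{n}$ it is also a maximum-weight $S$-forest of $G$, so its complement inside the original vertex set is a minimum-weight subset feedback vertex set, which I would recover in $O(n)$ time by following backpointers.

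The only real obstacle is verifying that the recursion is well-posed: every set occurring on a right-hand side must be an admissible index combination in the sense of Definitions~\ref{def:sfvs:int:a}--\ref{def:sfvs:int:c} and must already be in the table when it is read. The points that need a small geometric argument are that ${<}i<_{r}i$, ${\ll}i<_{r}i$, and --- most delicately --- that ${\ll}y'<_{r}i$ whenever $y'\in\{i,x\}$ arises in Lemma~\ref{lem:sfvs:int:ix}: if $y'=x$, then $\{i,x\}\in E$ with $r(i)<r(x)$ forces $\ell(x)<r(i)$, while $r({\ll}x)<\ell(x)$, so ${\ll}x<_{r}i$ and moreover $r(x')=r(i)>r({\ll}x)$ so the index is admissible; the case $y'=i$ is immediate. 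One also checks that the auxiliary vertices $x',y'$ always lie in $V\setminus V_{<i}$ (they are members of $\{i,x\}$ or $\{i,x,y\}$, each of larger right endpoint than ${<}i$), that $x'<_{\ell}y'$ by construction, and that in every transition invoking a $C$-entry the relevant case hypotheses ($i,x\notin S$, respectively the triangle structure with $i\notin S$) guarantee that $x',y'$ are adjacent and avoid $S$ --- so each $C$-index is admissible. Since all cross-$i$ references decrease the $\leq_{r}$-rank and all same-$i$ references respect the order $A\to B\to C$, every entry is available when needed.

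For the running time, there are $O(n)$ $A$-entries, $O(n^{2})$ $B$-entries (a vertex and a vertex) and $O(nm)=O(n^{3})$ $C$-entries (a vertex and an edge); each is produced with a constant number of weight comparisons and one recorded union thanks to the weight-plus-backpointer representation, hence in $O(1)$ time. The backpointer chain reconstructing $A_{n}$ has length $O(n)$, since each step strictly decreases the leading $\leq_{r}$-index apart from the at most two ``stalling'' steps $C_{i}^{x,y}\to B_{i}^{x}\to A_{i}$ per value of $i$, and adds at most one vertex. Together with the $O(n^{2})$ preprocessing and $O(n)$ reconstruction this yields the claimed $O(n^{3})$ bound.
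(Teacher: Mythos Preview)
Your proof is correct and follows essentially the same approach as the paper: a bottom-up dynamic program over the $\leq_{r}$-order that fills the $A$-, $B$-, and $C$-tables using Lemmas~\ref{lem:sfvs:int:i}--\ref{lem:sfvs:int:ixy}, with $O(1)$ work per entry. Your version is in fact more careful than the paper's short proof --- you explicitly verify well-posedness (in particular the ${\ll}y'<_{r}i$ bound and the admissibility of the $C$-indices), spell out the base cases, and justify the backpointer reconstruction --- whereas the paper simply asserts the order of computation and the constant-time cost per entry.
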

\begin{proof}
We briefly describe such an algorithm based on Lemmas~\ref{lem:sfvs:int:i}, \ref{lem:sfvs:int:ix}, and \ref{lem:sfvs:int:ixy}.
In a preprocessing step we compute $<i$ and $\ll i$ for each interval $i\in V\setminus\{0\}$.
We scan all intervals from $0$ to $n$ in an ascending order with respect to $<_{\ell}$.
For every interval $i$ that we visit, we compute first $A_{i}$ according to Lemma~\ref{lem:sfvs:int:i} and
then compute $B_{i}^{x}$ and $C_{i}^{x,y}$ for every $x,y$ such that $\ell(i) <\ell(x) < \ell(y)$ according to Lemmas~\ref{lem:sfvs:int:ix} and \ref{lem:sfvs:int:ixy}, respectively.
At the end we output $A_{n}$ as already explained.
The correctness of the algorithm follows from Lemmas~\ref{lem:sfvs:int:i}--\ref{lem:sfvs:int:ixy}.

Regarding the running time, notice that computing $<i$ and $\ll i$ can be done in $O(n)$ time since the intervals are sorted with respect to their end-points.
The computation of a single $A$-set, $B$-set or $C$-set takes constant time.
Therefore the overall running time of the algorithm is $O(n^3)$.
\end{proof}

We will show that the previous algorithm can be
polynomially applied for a superclass of interval graphs, namely
to that of circular-arc graphs.
A graph is called {\it circular-arc} if it is the intersection graph of arcs of a circle.
Every vertex is represented by an arc, such that two vertices are adjacent if and only if the corresponding arcs intersect.
The family of arcs corresponding to the vertices of the graph constitute a {\it circular-arc model}.

\begin{theorem}\label{theo:circular}
{\sc Subset Feedback Vertex Set} can be solved in $O(n^4)$ time on circular-arc graphs.
\end{theorem}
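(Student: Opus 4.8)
The plan is to reduce, in linear time, to $O(n)$ instances that are (essentially) \textsc{Subset Feedback Vertex Set} on interval graphs, and to apply Theorem~\ref{theo:interval} to each. Fix a circular-arc model of $G$. For a point $p$ of the circle let $Q_p$ be the set of vertices whose arc contains $p$; then $G[Q_p]$ is a clique, and cutting the circle at $p$ and unrolling it turns $G-Q_p$ into an interval graph (every arc avoiding $p$ becomes an interval). Only the $O(n)$ points lying strictly between two consecutive endpoints of the model give combinatorially distinct sets $Q_p$. As in the interval case we search for a maximum weighted $S$-forest $Y$ rather than a minimum subset feedback vertex set.

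First, two observations reduce the problem to a bounded interaction with such a clique. If an optimal $S$-forest $Y^{*}$ is disjoint from $S$, then it is an $S$-forest for the trivial reason that every $S$-free set is one; hence a maximum weighted $S$-free $S$-forest is exactly $\{v\in V\setminus S:w(v)>0\}$, which we compute directly. Otherwise $Y^{*}$ contains some $s\in S$, and then some point of the circle is covered by at most two arcs of $Y^{*}$: if every point were covered by at least three arcs of $Y^{*}$, then some point interior to the arc of $s$ would lie in two arcs of $Y^{*}$ other than the arc of $s$, and those two arcs together with $s$ would induce an $S$-triangle in $G[Y^{*}]$, contradicting that $Y^{*}$ is an $S$-forest. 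Taking $p$ to be the left endpoint $\ell(s)$ witnesses this; moreover the arc of $s$ starts at $\ell(s)$, so $s$ is not among the arcs that straddle $p$, and therefore at most one arc of $Y^{*}$ straddles $p$. Since an induced subgraph of an $S$-forest is an $S$-forest, for any chosen point $p$ every $S$-forest of the interval graph $G-Q_p$ is an $S$-forest of $G$; so it remains only to account for the at most one straddling arc.

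Concretely, for each vertex $v$ we cut the circle at $\ell(v)$: the arc of $v$ becomes the leftmost interval, each arc in $Q_{\ell(v)}\setminus\{v\}$ becomes a \emph{straddling} interval with a head at the left end of the line and a tail at the right end, and all remaining arcs become ordinary intervals. It suffices, over all $v$, to find the best $S$-forest $Y$ with $v\in Y$ and $|Y\cap(Q_{\ell(v)}\setminus\{v\})|\le 1$. If $Y$ uses no straddling arc, then $Y$ is an $S$-forest of an interval graph and Theorem~\ref{theo:interval} applies directly. If $Y$ uses one straddling arc $t$, we split $t$ into its head and tail --- non-adjacent intervals, with the weight of $t$ placed on the head and $0$ on the tail --- obtaining an interval graph $G'$; a set $Y$ of $G$ containing $t$ corresponds to a set $Y'$ of $G'$ containing both halves of $t$, and $G[Y]$ is an $S$-forest if and only if $G'[Y']$ has no $S$-cycle \emph{and}, in addition, no path of $G'[Y']$ joining the two halves of $t$ closes, together with $t$, into a circle-wrapping $S$-cycle of $G[Y]$ (such a path is forbidden outright when $t\in S$, and otherwise only when it passes through a vertex of $S$). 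This is handled by augmenting the dynamic program behind Theorem~\ref{theo:interval} with the extra state needed to recognise and forbid exactly these wrap-around $S$-cycles through $t$. I expect the main obstacle to be twofold: formulating this extra state and its recurrences correctly, and arranging that ranging over the choice of the straddling arc $t$ does not inflate the running time, so that the whole computation consists of $O(n)$ runs of cost $O(n^3)$ each, giving $O(n^4)$ in total.
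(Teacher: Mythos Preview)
Your reduction is sound through the point where you cut at $\ell(s)$ for some $s\in S\cap Y^{*}$ and argue, via the $S$-triangle at $\ell(s)$, that at most one other arc of $Y^{*}$ straddles the cut. The gap is exactly the part you yourself flag: you neither specify how to augment the interval dynamic program of Theorem~\ref{theo:interval} to forbid wrap-around $S$-paths between the two halves of the split arc $t$, nor show how to avoid a separate run for each choice of $t$. Naively iterating over all pairs $(v,t)$ with $t\in Q_{\ell(v)}\setminus\{v\}$ gives $O(n^{2})$ modified-interval instances and $O(n^{5})$ time, not $O(n^{4})$. The augmentation itself is also not routine: the $A/B/C$-tables of the interval DP carry state only about one or two \emph{future} intervals $x,y$, whereas forbidding a wrap-around $S$-path means tracking reachability from the head $t_{h}$ sitting at the far left --- information about the \emph{past} --- together with whether that reachable component already meets $S$. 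You would need to thread this extra bit correctly through all of Lemmas~\ref{lem:sfvs:int:i}--\ref{lem:sfvs:int:ixy} and argue that the table stays at $O(n^{3})$ entries. As written, the argument does not establish the claimed $O(n^{4})$ bound.

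The paper sidesteps both issues with a structural argument rather than a DP modification. Its first phase coincides with your ``zero straddlers'' case: for each of the $2n$ arc endpoints $p$ it runs the \emph{unmodified} interval algorithm on $G-A_{p}$. For the complementary situation --- an optimal $S$-forest whose arcs cover every point of the circle --- the paper argues that then no minimal circle-cover is an $S$-cycle, deduces that every $S$-cycle of $G$ sits inside a single clique $A_{p}$, deletes the vertices with no $S$-neighbour, and exhibits a single edge $e'$ whose removal turns the remaining graph into an interval graph while lying on no $S$-cycle. One further call to the unmodified Theorem~\ref{theo:interval} algorithm then finishes. So the paper uses only $O(n)$ calls to the plain $O(n^{3})$ routine and never opens its internals; what buys this is the edge-deletion lemma, which replaces your guessed straddling arc and augmented DP entirely.
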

\begin{proof}
Let $G$ be the input circular-arc graph and let $S \subseteq V$.
If $G$ is an interval graph then we run the algorithm given in Theorem~\ref{theo:interval}.
Otherwise in any circular-arc model the whole circle is covered by arcs.
Let the set of $n$ arcs of the circle have arc end-points $1, \ldots, 2n$.
We denote by $A_i$ the set of arcs containing point $i$.
By the circular-arc model notice that for each $i$ the graph $G-A_i$ is an interval graph.
Moreover notice that the subgraph induced by $A_i$ is a clique and, thus, an interval graph.

Our algorithm proceeds in two phases.
First it chooses all $2n$ points and for each removes all arcs containing it.
Then we run the interval graph algorithm on the remaining graph without the arcs of the point. 
More formally, for each point $i$ let $X_i$ be a maximum $S$-forest of $G-A_i$.
Observe that $X_i$ can be computed by the interval algorithm.
At the end we choose the maximum weighted set among all $X_i$ for each point $i$.
This constitutes the first phase of our algorithm.

Concerning its correctness, recall that the whole circle is covered by arcs.
Let $\mathcal{C}$ be the family of minimal set of arcs that cover the whole circle.
For any set $C \in \mathcal{C}$ observe that $G[C]$ is a cycle.
Assume first that there is an $S$-cycle $C_S \in \mathcal{C}$.
Then we must remove for some point of the circle all arcs containing it.
Thus if there is an $S$-cycle $C_S \in \mathcal{C}$ then there is a point $i$ such that the set $X_i$ is a maximum $S$-forest.

In what follows we assume that there is no $S$-cycle in $\mathcal{C}$.
Consider an $S$-cycle $C'_S$ in $G$.
We show next that $C'_S \subseteq A_i$, for some point $i$.
Suppose for contradiction that the $S$-cycle $C'_S \nsubseteq A_i$.
Choose an appropriate $A_i$ such that $s_i \in A_i \cap S$.
Then there is a neighbour $x$ of $s_i$ that belongs to $C'_S$ such that $x \notin A_i$.
Let $y$ be the other neighbour of $s_i$ in the $S$-cycle $C'_S$.
If $x$ and $y$ are adjacent then $\{x,y,s_i\}$ induce an $S$-triangle and there is a point $i'$ such that $\{x,y,s_i\} \subseteq A_{i'}$.
Otherwise, there is a path between $x$ and $y$ because of the $S$-cycle $C'_S$.
Such a path is contained in the graph $G-A_i$ so that there is a cycle $C_S \in \mathcal{C}$, leading to a contradiction.
Therefore such an $S$-cycle is completely contained in some $A_i$.

At the second phase we remove any vertex that is non-adjacent to an $S$-vertex.
Let $N_S$ be the vertices that are non-adjacent to any $S$-vertex.
In any solution $X$ of the graph $G - N_S$ we can safely add the vertices of $N_S$.
This is because no vertex of $N_S$ participates in any $S$-cycle $C'_S$ of $G$, since $C'_S \subseteq A_i$ for some point $i$.

\begin{figure}[t]
\centering
\includegraphics[scale= 1.0]{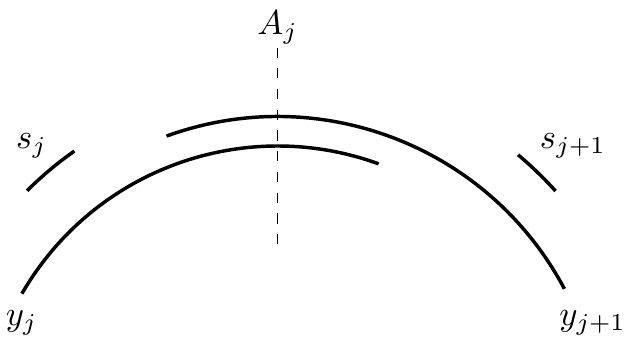}
\caption{A circular-arc graph given by its circular-arc model considered in the proof of Theorem~\ref{theo:circular}.}
\label{fig:circular}
\end{figure}

If the graph $G-N_S$ is an interval graph then we run the interval algorithm given in Theorem~\ref{theo:interval}.
Otherwise, we will show that there is an edge $e$ in $G-N_S$ such that
\vspace*{-0.1in}
\begin{itemize}
\item[(i)] no $S$-cycle of $G-N_S$ passes through $e$ and
\vspace*{-0.1in}
\item[(ii)] the graph obtained from $G-N_S$ by removing $e$ is an interval graph.
\end{itemize}
\vspace*{-0.1in}
Since $G-N_S$ is a circular-arc graph but not an interval graph, $\mathcal{C}$ is non-empty.
Let $\langle y_1, y_2, \ldots, y_k \rangle$ be a chordless cycle of $\mathcal{C}$ with $k \geq 4$.
For every two adjacent vertices $y_{j},y_{j+1}$ of the cycle there is a point $j$ such that $y_{j},y_{j+1} \in A_{j}$, $y_j \notin A_{j+1}$, and $y_{j+1} \notin A_{j-1}$.
Figure~\ref{fig:circular} shows the corresponding situation.
Then observe that $A_j \cap S = \emptyset$, since for any vertex $x\in A_j \cap S$ we have an $S$-cycle $\langle y_1, \ldots, y_j,x,y_{j+1},\ldots, y_k \rangle$ in $\mathcal{C}$.
Let $s_j$ and $s_{j+1}$ be the first counterclockwise and clockwise arcs, respectively, that belong to $S$ starting at point $j$.
It is clear that $y_j$ is adjacent to $s_j$ and $y_{j+1}$ is adjacent to $s_{j+1}$.
If $y_j$ is also adjacent to $s_{j+1}$ then $\langle y_1, \ldots, y_j,s_{j+1},y_{j+1},\ldots, y_k \rangle$ is an $S$-cycle in $\mathcal{C}$.
Thus $y_j$ is adjacent to $s_j$ and non-adjacent to $s_{j+1}$ whereas $y_{j+1}$ is adjacent to $s_{j+1}$ and non-adjacent to $s_j$.

Next we show that $A_j = \{y_j,y_{j+1}\}$.
Let $x \in A_j \setminus \{y_j,y_{j+1}\}$.
Since every vertex of $G-N_S$ is adjacent to at least one vertex of $S$, $x$ is adjacent to $s_j$ or $s_{j+1}$.
If $x$ is adjacent to $s_j$ then there is an $S$-cycle $\langle y_1, \ldots, y_j,s_{j}, x, y_{j+1}, \ldots, y_k \rangle$ in $\mathcal{C}$. 
The symmetric $S$-cycle occurs whenever $x$ is adjacent to $s_{j+1}$. 
Thus in all cases we reach a contradiction so that $A_j = \{y_j,y_{j+1}\}$. 
This shows that if we remove the edge $e'=\{y_j,y_{j+1}\}$ from $G-N_S$ then we get an interval graph by the circular-arc model of $G-N_S$.
Moreover no $S$-cycle passes through the edge $e'=\{y_j,y_{j+1}\}$ as then we would have an $S$-cycle in $\mathcal{C}$.
Hence by Theorem~\ref{theo:interval} we run the algorithm on the graph obtained from $G-N_S$ by removing the edge $e'$ and compute a solution $X'$.
Therefore $X' \cup N_S \cup e'$ is a solution of the original graph $G$.

Regarding the running time the most time consuming is the first phase of our algorithm in which we need to run for every point the interval algorithm.
Observe that there are at most $2n$ points and for each we need to run the $O(n^3)$ interval algorithm.
Therefore the total running time of our algorithm is $O(n^4)$.
\end{proof}

\section{Computing SFVS on permutation graphs}
Let $\pi = \pi(1), \ldots, \pi(n)$ be a permutation over $\{1, \ldots, n\}$, that is a bijection between $\{1, \ldots, n\}$ and $\{1, \ldots, n\}$.
The position of an integer $i$ in $\pi$ is denoted by $\pi^{-1}(i)$.
Given a permutation $\pi$, the {\it inversion graph} of $\pi$, denoted by $G(\pi)$, has vertex set $\{1, \ldots, n\}$ and two vertices $i,j$ are adjacent if $(i-j)(\pi(i)-\pi(j))<0$.
A graph is a {\it permutation graph} if it is isomorphic to the inversion graph of a permutation \cite{graph:classes:brandstadt:1999,graph:classes:Go04}.
For our purposes, we assume that a permutation graph is given as a permutation $\pi$ and equal to the defined inversion graph.
Permutation graphs also have an interesting geometric intersection model: they are the intersection graphs of segments between two horizontal parallel lines, that is, there is a one-to-one mapping from the segments onto the vertices of a graph such that there is an edge between two vertices of the graph if and only if their corresponding segments intersect.
We refer to the two horizontal lines as {\it top} and {\it bottom} lines.
This representation is called a {\it permutation diagram} and a graph is a permutation graph if and only if it has a permutation diagram.
It is important to note that every induced subgraph of a permutation graph is a permutation graph.
Every permutation graph with permutation $\pi$ has a permutation diagram in which the endpoints of the line segments on the bottom line appear in the same order as they appear in $\pi$.

We assume that we are given a connected permutation graph $G=(V,E)$ such that $G=G(\pi)$ along with a vertex set $S \subseteq V$ and a weight function $w: V \rightarrow \mathbb{R}^{+}$ as input.
We add an isolated vertex in $G$ and augment $\pi$ to $\pi'$ as follows: $\pi' = \{0\} \cup \pi$ with $\pi'(0)=0$.
Further we assign a negative value for $0$'s weight and assume that $0 \notin S$.
It is important to note that any induced cycle of a permutation graph is either an induced triangle or an induced square \cite{fvs:perm:brandstadt:1985,fvs:perm:brandstadt:1993,fvs:perm:brandstadt:1987,fvs:perm:liang:1994,Spinrad03}.

We consider the two relations on $V$ defined as follows: $i\leq_{t}j$ if and only if $i\leq j$ and $i\leq_{b}j$ if and only if $\pi^{-1}(i)\leq\pi^{-1}(j)$ for all $i,j\in V \cup \{0\}$.
It is not difficult to see that both $\leq_{t}$ and $\leq_{b}$ are total orders on $V$; they are exactly the orders in which the integers appear on the top and bottom line, respectively, in the permutation diagram.
Moreover we write $i <_{t} j$ or $i <_{b} j$ if and only if $i \neq j$ and $i \leq_{t} j$ or $i \leq_{b} j$, respectively.
We extend $\leq_{t}$ and $\leq_{b}$ to support sets of vertices as follows.
For two sets of vertices $L$ and $R$ we write $L \leq_{t} R$ (resp., $L \leq_{b} R$) if for any two vertices $u \in L$ and $v \in R$, $u \leq_{t} v$ (resp., $u \leq_{b} v$).

Let $G=\df G(\pi)$ be a permutation graph.
Two vertices $i,j\in \{0, 1, \ldots, n\}$ with $i \leq_{t} j$ are called {\it crossing pair}, denoted by $ij$, if $j \leq_{b} i$.
We denote by $\mathcal{X}$ the set of crossing pairs in $G$.
Observe that a crossing pair $ij$ with $i \neq j$ corresponds to an edge of $G$.
In order to distinguish the edges of $G$ with the crossing pairs of the form $ii$ we let $\mathcal{I} = \left\{ii \mid i \in \{1, \ldots, n\}\right\}$, so that $\mathcal{X} \setminus \mathcal{I}$ contains exactly the edges of $G$.

Given two crossing pairs $gh, ij \in \mathcal{X}$ we define two partial orderings:
\begin{displaymath}
\begin{array}{lll}
gh \leq_{\ell} ij & \Longleftrightarrow & g \leq_{t} i \text{ and } h \leq_{b} j\\
gh \leq_{r} ij & \Longleftrightarrow & g \leq_{b} i \text{ and } h \leq_{t} j.
\end{array}
\end{displaymath}
As in the case for interval graphs, we write $\lmin$ to denote the crossing pair with the minimum top and bottom integers with respect to $\leq_{\ell}$;
and we write $\rmax$ to denote the crossing pair with the maximum top and bottom integers with respect to $\leq_{r}$.

Our dynamic programming algorithm iterates on ordered crossing vertex pairs.
We next define the predecessors of a crossing pair with respect to $\leq_{r}$, which correspond to the subproblems that our dynamic programming algorithm wants to solve.
Let $ij\in\mathcal{X}\setminus\{00\}$ be a crossing pair.
We define the set of vertices that induce the part of the subproblem that we consider at each crossing pair as follows: $V_{ij}=\df\{h\in V:hh\leq_{r}ij\}$.
Let $x$ be a vertex such that $i <_{b} x$ or $j <_{t} x$.
By definition notice that $x$ does not belong in $V_{ij}$.
The predecessors of the crossing pair $ij$ are defined as follows:
\begin{itemize}
\item ${\eqslantless}ij =\df \rmax\{gh\in\mathcal{X}:gh\leq_{r}ij\textrm{ and }h\neq j\}$,
\item ${\leqslant} ij=\df\rmax\{gh\in\mathcal{X}:gh\leq_{r}ij\textrm{ and }g\neq i\}$,
\item ${<}ij =\df\rmax\{gh\in\mathcal{X}:gh<_{r}ij$,
\item ${\ll}ij =\df\rmax\{gh\in\mathcal{X}:gh<_{r}ij\textrm{ and }\{g,i\},\{g,j\},\{h,i\},\{h,j\}\notin E\}$, and
\item ${<}ij{\ll}xx =\df\rmax\{gh\in \mathcal{X}:gh<_{r}ij\textrm{ and }\{g,x\},\{h,x\}\notin E\}$.
\end{itemize}
Although it seems somehow awkward to use one the symbols $\left\{{\eqslantless}, {\leqslant}, {<}, {\ll}, {<} {\ll}\right\}$ for the defined predecessors,
we stress that such predecessors are required only to describe the necessary subset $V_{gh}$ of $V_{ij}$.
Moreover it is not difficult to see that each of the symbol gravitates towards a particular meaning with respect to the top and bottom orderings as well as the non-adjacency relationship.
An example of a permutation graph that illustrates the defined predecessors is given in Figure~\ref{fig:permutation}.

\begin{figure}[t]
\centering
\includegraphics[scale= 1.0]{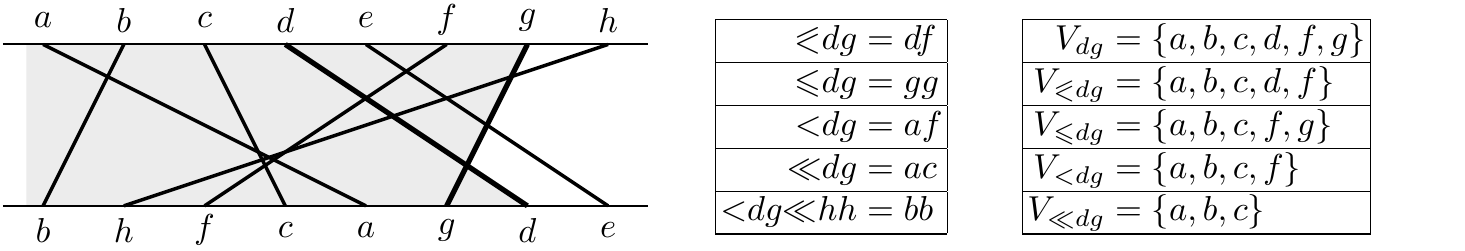}
\caption{A permutation graph given by its permutation diagram and the set $V_{dg}$ of the crossing pair $dg$ together with the corresponding predecessors of $dg$.
Observe that the line segments that are properly contained within the gray area form the set $V_{dg}$.}
\label{fig:permutation}
\end{figure}

With the above defined predecessors of $ij$, we show how the set $V_{ij}$ can be partitioned into smaller sets of vertices with respect to a suitable predecessor.

\begin{observation}\label{obs:sfvs:perm}
Let $ij\in\mathcal{X}$ and let $x\in V\setminus V_{ij}$. Then,
\begin{enumerate}[label={(\arabic*)}]
\item $V_{ij}=V_{{\eqslantless}ij}\cup\{j\} = V_{{\leqslant}ij}\cup\{i\} = V_{{<ij}}\cup\{i,j\}$,
\item $V_{{<}ij}=V_{{\ll}jj}\cup\{h\in V_{{<}ij}:\{h,j\}\in E\} = V_{{\ll} ii}\cup\{h\in V_{{<}ij}:\{h,i\}\in E\}$,
\item $V_{{\ll} ii}=V_{{\ll} ij}\cup\{h\in V_{{\ll} ii}:\{h,j\}\in E\}$,
\item $V_{{\ll} jj}=V_{{\ll} ij}\cup\{h\in V_{{\ll} jj}:\{h,i\}\in E\}$, and
\item $V_{{<}ij}=V_{{<}ij{\ll} xx}\cup\{h\in V_{{<}ij}:\{h,x\}\in E\}$.
\end{enumerate}
\end{observation}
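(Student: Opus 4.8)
The plan is to follow the route of the proof of Observation~\ref{obs:sfvs:int}: each of the five equalities is read off from the definitions of the sets $V_{gh}$ and of the predecessors of $ij$, with the permutation diagram as a guide. Two elementary facts drive everything. First, a membership reformulation: for a crossing pair $gh=\rmax(T)$ one has $h'\in V_{gh}$ if and only if $h'h'\leq_{r}g''h''$ for some $g''h''\in T$, because $\rmax(T)$ belongs to $T$ and $\leq_{r}$-dominates it; thus to place a vertex $h'$ in $V_{gh}$ it suffices to name one pair of $T$ lying $\leq_{r}$-above $h'h'$, and the witness will always be $h'h'$ itself. Second, the non-adjacency dictionary of the permutation diagram: two distinct vertices $u,v$ are non-adjacent exactly when $\leq_{t}$ and $\leq_{b}$ order them the same way, so that $\{u,v\}\notin E$ together with $u\leq_{t}v$ forces $u\leq_{b}v$, and symmetrically.

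Statement~(1) is immediate from the definitions. The predecessors ${\eqslantless}ij$, ${\leqslant}ij$, ${<}ij$ are the $\leq_{r}$-maxima obtained by forbidding, respectively, the top coordinate to reach $j$, the bottom coordinate to reach $i$, and both; since $j$ is the unique member of $V_{ij}$ with $\leq_{t}$-largest top endpoint and $i$ the unique member with $\leq_{b}$-largest bottom endpoint, this deletes exactly $j$, exactly $i$, or exactly $\{i,j\}$ from $V_{ij}$. Both inclusions then amount to checking, for any $h'$ in the appropriate subset of $V_{ij}$, that $h'h'$ is a candidate for the restricted set, and conversely that every candidate pair is $\leq_{r}ij$; this uses only that $\leq_{t},\leq_{b}$ are total orders and that $ij\in\mathcal{X}$ means $i\leq_{t}j$ and $j\leq_{b}i$.

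Statements~(2)--(5) all have the shape $V_{A}=V_{B}\cup\{h\in V_{A}:\{h,v\}\in E\}$, where $v$ is one of $i,j,x$ and the predecessor defining $V_{B}$ is obtained from the one defining $V_{A}$ by additionally forbidding both coordinates to be adjacent to $v$: in~(2) one splits $V_{{<}ij}$ according to adjacency to $j$ (resp.\ to $i$), the non-neighbour part being $V_{{\ll}jj}$ (resp.\ $V_{{\ll}ii}$); in~(3) and~(4) one splits $V_{{\ll}ii}$, $V_{{\ll}jj}$ according to adjacency to the remaining one of $i,j$, the non-neighbour part being $V_{{\ll}ij}$; in~(5) one splits $V_{{<}ij}$ according to adjacency to $x$, the non-neighbour part being $V_{{<}ij{\ll}xx}$. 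The inclusion $\{h\in V_{A}:\{h,v\}\in E\}\subseteq V_{A}$ is free, so two things remain. For $V_{A}\subseteq V_{B}\cup\{h\in V_{A}:\{h,v\}\in E\}$: if $h'\in V_{A}$ and $\{h',v\}\notin E$, then membership in $V_{A}$ already places $h'$ strictly below $v$ in one of $\leq_{t},\leq_{b}$, and the dictionary upgrades this to strict position below $v$ in both orders, so $h'h'$ meets the extra non-adjacency constraint and is a legitimate candidate for the predecessor of $V_{B}$; hence $h'\in V_{B}$. For $V_{B}\subseteq V_{A}$: given $h'\in V_{B}$, pick a crossing pair $g^{*}h^{*}$ of that predecessor set with $h'h'\leq_{r}g^{*}h^{*}$; since $g^{*}h^{*}$ precedes the defining pair of $A$ under $<_{r}$ (in~(2) using $j\leq_{b}i$ and $i\leq_{t}j$ to stay inside the smaller staircase), transitivity gives $h'\in V_{A}$. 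In cases~(2)--(4) the first inclusion also invokes a one-line propagation lemma: if $h'h'\leq_{r}g^{*}h^{*}$ and both $g^{*}$ and $h^{*}$ are non-adjacent to $v$, then so is $h'$ --- proved by pushing $h'\leq_{b}g^{*}$ and $h'\leq_{t}h^{*}$ through the crossing-pair relations $h^{*}\leq_{b}g^{*}$, $g^{*}\leq_{t}h^{*}$ together with the non-adjacency of $g^{*}$ and of $h^{*}$ to $v$.

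The step I expect to require genuine care is the bookkeeping in~(3) and~(4), where the set of vertices whose adjacency is forbidden grows from $\{i\}$ or $\{j\}$ to $\{i,j\}$: one must check that the candidate pair $h'h'$ satisfies every new constraint simultaneously, while juggling the four defining inequalities of a crossing pair, the orientation conventions $i\leq_{t}j$ and $j\leq_{b}i$, and the predecessor relation $<_{r}$. A minor preliminary worth recording is that every $\rmax$ appearing here is over a nonempty set of crossing pairs possessing a genuine $\leq_{r}$-maximum: nonemptiness holds because the padding vertex $0$ is isolated and is the $\leq_{t}$- and $\leq_{b}$-minimum, so the pair $00$ lies in each of the predecessor sets considered --- whence the predecessors, and with them $V_{{\eqslantless}ij},\dots,V_{{<}ij{\ll}xx}$, are well defined.
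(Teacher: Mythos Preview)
Your proposal is correct and follows essentially the same approach as the paper: both arguments read each equality off from the definitions of the predecessors and the permutation-diagram dictionary, partitioning the relevant $V$-sets into neighbours and non-neighbours of the distinguished vertex. Your write-up is somewhat more systematic---packaging (2)--(5) under a common template and recording the well-definedness of $\rmax$ via the padding vertex $0$---but the underlying logic is identical to the paper's terse version.
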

\begin{proof}
Let $ij_1$ be the predecessor ${\eqslantless}ij$.
By the $\rmax$ choice of $j_1$, there is no vertex $j'$ such that $j_1 <_{t} j' <_{t} j$. 
Thus $V_{ij_1} \cup \{j\}$ is the set $V_{ij}$.
The rest of the equalities in the first statement follow in a similar way.

Let $i_1j_1$ be the predecessor ${\ll}jj$. Then both $i_1$ and $j_1$ are non-adjacent to $j$ and have the maximum values such that $i_1 <_{b} j$ and $j_1 <_{t} j$, respectively.
This particularly means that $i_1 <_{t} j_1 <_{t} j$ and $j_1 <_{b} i_1 <_{b} j$.
Thus any vertex $i' \in V_{ij} \setminus \{i,j\}$ with $j_1 <_{t} i' <_{t} j$ or $i_1 <_{b} i' <_{b} j$ must be adjacent to $j$
which implies that $V_{{<}ij} \setminus V_{{\ll}jj}$ contains exactly the neighbours of $j$ in $V_{{<}ij}$.
These arguments imply the second, third, and fourth statements.

For the last statement, notice that $V_{{<}ij}$ can be partitioned into the neighbours and the non-neighbours of $x$.
By definition $V_{{<}ij{\ll} xx}$ contains the non-neighbours of $x$ so that every vertex of $V_{{<}ij} \setminus V_{{<}ij{\ll} xx}$ is adjacent to $x$.
\end{proof}

It is clear that for any edge $\{i,j\} \in E$ either $i <_t j$ and $j <_b i$ hold, or $j <_t i$ and $i <_b j$ hold.
If further $ij \in \mathcal{X}\setminus\mathcal{I}$, is a crossing pair then we know that $i <_t j$ and $j <_b i$.

Next we define the sets that our dynamic programming algorithm computes in order to compute the induced $S$-forest vertex set of $G$ that has maximum weight.
Our main idea relies on the similar sets that we used for the dynamic programming of interval graphs.
That is, we need to describe appropriate sets that define the solutions to be chosen only from a specific part of the considered subproblem.
Although for interval graphs we showed that adding two vertices into such sets is enough,
for permutation graphs we need to consider at most two newly crossing pairs which corresponds to consider at most four newly vertices.
Moreover as a crossing pair may belong to $\mathcal{I}$ we are enforced to describe such a situation into two different sets for each subproblem.
\begin{definition}[$A$-sets]\label{def:sfvs:perm:a}
Let $ij\in\mathcal{X}$. Then,
\begin{displaymath}
A_{ij}=\df\max_{w}\{X\subseteq V_{ij}:G[X]\in\mathcal{F}_{S}\}.
\end{displaymath}
\end{definition}
\begin{definition}[$B$-sets]\label{def:sfvs:perm:b:1}
Let $ij\in\mathcal{X}$ and let $x\in V\setminus V_{ij}$. Then,
\begin{displaymath}
B_{ij}^{xx}=\df\max_{w}\{X\subseteq V_{ij}:G[X\cup\{x\}]\in\mathcal{F}_{S}\}.
\end{displaymath}
\end{definition}
\begin{definition}[$B$-sets]\label{def:sfvs:perm:b:2}
Let $ij\in\mathcal{X}$ and $xy\in\mathcal{X}\setminus\mathcal{I}$ such that $j<_{t}y$, $i<_{b}x$, and $x,y\notin S$. Then,
\begin{displaymath}
B_{ij}^{xy}=\df\max_{w}\{X\subseteq V_{ij}:G[X\cup\{x,y\}]\in\mathcal{F}_{S}\}.
\end{displaymath}
\end{definition}
\begin{definition}[$C$-sets]\label{def:sfvs:perm:c:1}
Let $ij\in\mathcal{X}$, $xy\in\mathcal{X}\setminus\mathcal{I}$, and $z\in V\setminus (V_{ij} \setminus \{x,y\})$
such that $xy<_{\ell}zz$, at least one of $x,y$ is adjacent to $z$, $j<_{t}y$, $i<_{b}x$, and $x,y,z\notin S$. Then,
\begin{displaymath}
C_{ij}^{xy,zz}=\df\max_{w}\{X\subseteq V_{ij}:G[X\cup\{x,y,z\}]\in\mathcal{F}_{S}\}.
\end{displaymath}
\end{definition}
\begin{definition}[$C$-sets]\label{def:sfvs:perm:c:2}
Let $ij\in\mathcal{X}$ and $xy,zw\in\mathcal{X}\setminus\mathcal{I}$ such that $xy<_{\ell}zw$, $\{x,w\},\{y,z\}\in E$, $j<_{t}\{y,w\}$, $i<_{b}\{x,z\}$, and $x,y,z,w\notin S$. Then,
\begin{displaymath}
C_{ij}^{xy,zw}=\df\max_{w}\{X\subseteq V_{ij}:G[X\cup\{x,y,z,w\}]\in\mathcal{F}_{S}\}.
\end{displaymath}
\end{definition}

Observe that, since $V_{00}=\{0\}$ and $w(0)\leq0$, $A_{00}=\varnothing$ and, since $V_{\pi(n)n}=V$, $A_{\pi(n)n}=\max_{w}\{X\subseteq V:G[X]\in\mathcal{F}_{S}\}$.
The following lemmas state how to recursively compute all $A$-sets, $B$-sets, and $C$-sets other than $A_{00}$.
Because every crossing pair $ij$ might be of the form $ii$ we first consider
the sets $A_{ii}$, $B_{ii}^{xx}$, $B_{ii}^{xy}$, $C_{ii}^{xy,zz}$, and $C_{ii}^{xy,zw}$.

\begin{lemma}\label{lem:sfvs:perm:i}
Let $i\in V\setminus\{0\}$. Then $A_{ii}=A_{{<}ii}\cup\{i\}$.
\end{lemma}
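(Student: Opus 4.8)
The plan is to show the inclusion $A_{<ii} \cup \{i\}$ is a valid $S$-forest-inducing set contained in $V_{ii}$, and conversely that $i$ always lies in the optimal solution $A_{ii}$, from which the two sets must coincide by optimality. First I would invoke Observation~\ref{obs:sfvs:perm}~(1) with the crossing pair $ii \in \mathcal{I}$: since ${<}ii$ is the predecessor satisfying $V_{ii} = V_{{<}ii} \cup \{i,i\} = V_{{<}ii} \cup \{i\}$, every candidate set for $A_{ii}$ is a subset of $V_{{<}ii} \cup \{i\}$. The crucial observation is that the vertex $i$, when viewed as the endpoint of the trivial crossing pair $ii$, has no neighbour in $V_{{<}ii}$ at all: by definition of ${<}ii$ as $\rmax\{gh \in \mathcal{X} : gh <_r ii\}$, every vertex $h \in V_{{<}ii}$ satisfies $hh <_r ii$, i.e. $h <_b i$ and $h <_t i$ (here I use that $i <_t j$, $j <_b i$ fails trivially since $j = i$, so the crossing-pair inequality collapses to $h \le_t i$ and $h \le_b i$ with $h \ne i$). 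Since $h <_t i$ and $h <_b i$ simultaneously, the segments of $h$ and $i$ do not cross, hence $\{h,i\} \notin E$.

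With this in hand the argument is short. Because $i$ is isolated in $G[V_{{<}ii} \cup \{i\}]$, no $S$-cycle of $G$ can use the vertex $i$ within this induced subgraph; consequently, for any $X \subseteq V_{{<}ii}$ with $G[X] \in \mathcal{F}_S$ we also have $G[X \cup \{i\}] \in \mathcal{F}_S$. Applying this to $X = A_{<ii}$, which is an $S$-forest-inducing subset of $V_{{<}ii}$ of maximum weight by Definition~\ref{def:sfvs:perm:a}, shows that $A_{{<}ii} \cup \{i\}$ is a feasible candidate for $A_{ii}$, so $w(A_{ii}) \ge w(A_{{<}ii}) + w(i)$. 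Conversely, given the optimal $A_{ii}$, the set $A_{ii} \setminus \{i\}$ is an $S$-forest-inducing subset of $V_{{<}ii}$ (induced subgraphs of $S$-forests are $S$-forests), so $w(A_{ii} \setminus \{i\}) \le w(A_{<ii})$; since $i$ may be freely re-added as just argued and $A_{ii}$ is optimal, in fact $i \in A_{ii}$ and $A_{ii} \setminus \{i\}$ must itself be an optimal choice for $A_{<ii}$. Combining the two inequalities yields $A_{ii} = A_{<ii} \cup \{i\}$, where equality of the sets (not merely of weights) is obtained by the $\max_w$ tie-breaking convention, exactly as in the proof of Lemma~\ref{lem:sfvs:int:i}.

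I do not anticipate a genuine obstacle here; the only subtlety worth stating carefully is the reduction of the crossing-pair order $\le_r$ to the simultaneous top/bottom inequalities when the pair is of the degenerate form $ii \in \mathcal{I}$, which is what forces $i$ to be non-adjacent to all of $V_{{<}ii}$ and makes this the easiest of the recurrence lemmas — it is the permutation-graph analogue of Lemma~\ref{lem:sfvs:int:i} restricted to the case where the new vertex is necessarily included.
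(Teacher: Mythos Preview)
Your proposal is correct and follows essentially the same approach as the paper's own proof: both observe that $V_{ii}=V_{{<}ii}\cup\{i\}$ via Observation~\ref{obs:sfvs:perm}~(1), that $i$ has no neighbour in $G[V_{ii}]$ (so no $S$-cycle can pass through $i$ there), and hence that $i$ must belong to the optimum $A_{ii}$, giving the recurrence. You spell out more carefully why $h\in V_{{<}ii}$ forces $h<_t i$ and $h<_b i$ simultaneously (hence non-adjacency), and you make the optimality comparison explicit, but the underlying argument is identical to the paper's.
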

\begin{proof}
By Observation~\ref{obs:sfvs:perm}~(1) we have $A_{{<}ii}\cup\{i\}\in {A}_{ii}$.
Also notice that the neighbourhood of $i$ in $G[V_{ii}]$ is $\emptyset$.
Thus no subset of $V_{ii}$ that contains $i$ induces an $S$-cycle in $G$, so that $i\in A_{ii}$.
Therefore $A_{ii}=A_{{<}ii}\cup\{i\}$.
\end{proof}

\begin{lemma}\label{lem:sfvs:perm:ix}
Let $i\in V$ and let $x\in V\setminus V_{ii}$.
\begin{enumerate}
\item If $\{i,x\}\notin E$ then $B_{ii}^{xx}=A_{ii}$.
\item If $\{i,x\}\in E$ then $B_{ii}^{xx}=B_{<ii}^{xx}\cup\{i\}$.
\end{enumerate}
\end{lemma}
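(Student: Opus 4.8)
The statement to prove is Lemma~\ref{lem:sfvs:perm:ix}, which handles the $B$-sets $B_{ii}^{xx}$ at a crossing pair of the form $ii$. The plan is to mirror the structure of the proof of Lemma~\ref{lem:sfvs:int:ix} (the interval analogue), exploiting that $V_{ii}$ behaves like $V_i$ in the interval setting: by Observation~\ref{obs:sfvs:perm}~(1), $V_{ii}=V_{{<}ii}\cup\{i\}$, and the vertex $i$ has empty neighbourhood inside $G[V_{ii}]$ (this was already observed in the proof of Lemma~\ref{lem:sfvs:perm:i}, since $ii$ being a crossing pair forces $i$ to be the top-maximum and bottom-maximum vertex of $V_{ii}$).

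For part (1), assume $\{i,x\}\notin E$. I would argue that in the graph $G[V_{ii}\cup\{x\}]$ the vertex $x$ is isolated: it is non-adjacent to $i$ by hypothesis, and non-adjacent to every $h\in V_{{<}ii}$ because $hh<_r ii$ together with $x\notin V_{ii}$ forces both the top and bottom coordinates of $x$ to exceed those of $h$ (so the segments do not cross). Hence no subset of $V_{ii}\cup\{x\}$ containing $x$ induces an $S$-cycle, and by Definitions~\ref{def:sfvs:perm:a} and~\ref{def:sfvs:perm:b:1} the constraint $G[X\cup\{x\}]\in\mathcal{F}_S$ is equivalent to $G[X]\in\mathcal{F}_S$, giving $B_{ii}^{xx}=A_{ii}$.

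For part (2), assume $\{i,x\}\in E$. Since the neighbourhood of $i$ in $G[V_{ii}]$ is empty, $i$ is adjacent in $G[V_{ii}\cup\{x\}]$ only to $x$, so $i$ cannot lie on any induced cycle of $G[X\cup\{x\}]$ for $X\subseteq V_{ii}$; in particular adding $i$ never creates an $S$-cycle, so $i\in B_{ii}^{xx}$. It then remains to observe, via $V_{ii}=V_{{<}ii}\cup\{i\}$ (Observation~\ref{obs:sfvs:perm}~(1)), that $B_{ii}^{xx}\setminus\{i\}\subseteq V_{{<}ii}$ and that for $X\subseteq V_{{<}ii}$ the condition $G[X\cup\{i,x\}]\in\mathcal{F}_S$ is equivalent to $G[X\cup\{x\}]\in\mathcal{F}_S$ (again because $i$ is a pendant vertex attached only to $x$), which by Definition~\ref{def:sfvs:perm:b:1} matches exactly the defining condition of $B_{{<}ii}^{xx}$. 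Weight-optimality is preserved since $i$ is forced into the solution in both expressions. This yields $B_{ii}^{xx}=B_{{<}ii}^{xx}\cup\{i\}$.

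This lemma is essentially a base-case bookkeeping step, so I do not expect a genuine obstacle; the only point requiring care is the justification that $x\notin V_{ii}$ together with $hh\le_r ii$ implies $x$ is non-adjacent to every $h\in V_{{<}ii}$ in part (1) — i.e. that $x$'s top coordinate exceeds $i\ge_t h$ and its bottom coordinate exceeds that of $i\ge_b h$, which follows from the definition of $V_{ij}$ and the remark preceding it that $i<_b x$ or $j<_t x$ puts $x$ outside $V_{ij}$ (here with $i=j$, non-adjacency to $i$ forces both inequalities). Everything else is a direct transcription of the interval argument with $V_{ii}$ in place of $V_i$ and ${<}ii$ in place of ${<}i$.
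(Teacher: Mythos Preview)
Your proposal is correct and follows essentially the same approach as the paper: in part~(1) you show $x$ is isolated in $G[V_{ii}\cup\{x\}]$ via the ordering argument $i<_t x$ and $i<_b x$, and in part~(2) you use that $i$ is a pendant attached only to $x$ so that $i\in B_{ii}^{xx}$, then invoke Observation~\ref{obs:sfvs:perm}~(1). The paper's proof is the same, only more terse.
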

\begin{proof}
Assume first that $\{i,x\}\notin E$.
Since $x \in V\setminus V_{ii}$ we know that $i<_{t}x$ or $i<_{b}x$.
Moreover as $\{i,x\}\notin E$ we have $i<_{t}x$ and $i<_{b}x$.
Then $x$ has no neighbour in $G[V_{ii}\cup\{x\}]$.
Thus no subset of $V_{ii}\cup\{x\}$ that contains $x$ induces an $S$-cycle in $G$.
Hence $B_{ii}^{xx}=A_{ii}$ follows.

Next assume that $\{i,x\}\in E$.
Then the neighbourhood of $i$ in $G[V_{ii}\cup\{x\}]$ is $\{x\}$.
This means that no subset of $V_{ii}\cup\{x\}$ that contains $i$ induces an $S$-cycle in $G$, so that $i\in B_{ii}^{xx}$.
By Observation~\ref{obs:sfvs:perm}~(1) it follows that $B_{ii}^{xx}=B_{<ii}^{xx}\cup\{i\}$.
\end{proof}

\begin{lemma}\label{lem:sfvs:perm:ixy}
Let $i\in V$ and let $xy\in\mathcal{X}\setminus\mathcal{I}$ such that $i<_{t}y$, $i<_{b}x$, and $x,y\notin S$.
\begin{enumerate}
\item If $\{i,y\}\notin E$ then $B_{ii}^{xy}=B_{ii}^{xx}$.
\item If $\{i,x\}\notin E$ then $B_{ii}^{xy}=B_{ii}^{yy}$.
\item If $\{i,x\},\{i,y\}\in E$ then $B_{ii}^{xy}=\left\{\begin{array}{l@{}l}
B_{<ii}^{xy}&,\textrm{ if }i\in S\\[\myspace]
B_{<ii}^{xy}\cup\{i\}&,\textrm{ if }i\notin S.
\end{array}\right.$
\end{enumerate}
\end{lemma}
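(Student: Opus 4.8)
The plan is to treat the three cases separately; in each I would compare the family of feasible sets $\{X\subseteq V_{ii}:G[X\cup\{x,y\}]\in\mathcal{F}_S\}$ with the family defining the right-hand side, and show that the two families coincide (cases~1 and~2) or that the displayed set attains the maximum weight over the left-hand family (case~3). Two structural facts about the orders $\leq_t,\leq_b$ will be used throughout. First, $V_{ii}=\{h\in V:h\leq_t i\text{ and }h\leq_b i\}$, so every $h\in V_{ii}\setminus\{i\}$ precedes $i$ in both orders and is therefore non-adjacent to $i$ (as already noted in the proof of Lemma~\ref{lem:sfvs:perm:i}). Second, applying the same reasoning to $y$ and to $x$: if $\{i,y\}\notin E$ then $i<_t y$ forces $i<_b y$, so every $h\in V_{ii}$ precedes $y$ in both orders and $y$ has no neighbour in $G[V_{ii}]$; symmetrically, if $\{i,x\}\notin E$ then $x$ has no neighbour in $G[V_{ii}]$.

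For case~1 ($\{i,y\}\notin E$), in $G[X\cup\{x,y\}]$ the vertex $y$ has no neighbour in $X\subseteq V_{ii}$, so its only neighbour is $x$ (and it is one, since $xy\in\mathcal{X}\setminus\mathcal{I}$ means $\{x,y\}\in E$). As a vertex of degree at most one lies on no cycle, $G[X\cup\{x,y\}]\in\mathcal{F}_S$ if and only if $G[X\cup\{x\}]\in\mathcal{F}_S$; hence the two feasible families agree and $B_{ii}^{xy}=B_{ii}^{xx}$. Case~2 is identical with the roles of $x$ and $y$ exchanged, using that $x$ has no neighbour in $G[V_{ii}]$.

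For case~3 ($\{i,x\},\{i,y\}\in E$), $\langle i,x,y\rangle$ is a triangle because $\{x,y\}\in E$. If $i\in S$ it is an $S$-triangle, so no feasible $X$ for $B_{ii}^{xy}$ contains $i$; by Observation~\ref{obs:sfvs:perm}~(1) every feasible $X$ then lies in $V_{<ii}$, and conversely every set feasible for $B_{<ii}^{xy}$ is feasible for $B_{ii}^{xy}$ (writing ${<}ii=gh$ one has $g\leq_b i<_b x$ and $h\leq_t i<_t y$, so $B_{<ii}^{xy}$ is indeed defined); hence $B_{ii}^{xy}=B_{<ii}^{xy}$. If $i\notin S$ I would show $B_{<ii}^{xy}\cup\{i\}$ is feasible and of maximum weight. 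For feasibility, set $Z=B_{<ii}^{xy}$ and suppose an induced $S$-cycle $C$ lay in $G[Z\cup\{i,x,y\}]$; in a permutation graph $C$ is a triangle or a square. If $i\notin C$ then $C$ is an induced $S$-cycle of $G[Z\cup\{x,y\}]$, which is impossible; so $i\in C$, and since the neighbourhood of $i$ in $G[Z\cup\{i,x,y\}]$ is exactly $\{x,y\}$ (no vertex of $Z\subseteq V_{ii}\setminus\{i\}$ is adjacent to $i$), the two neighbours of $i$ on $C$ are $x$ and $y$. If $C$ is a triangle then $C=\langle i,x,y\rangle$, not an $S$-cycle since $i,x,y\notin S$; if $C$ is an induced square then $ix$ and $iy$ are its two edges at $i$, so $\{x,y\}$ is not an edge of $C$ and, $C$ being induced, $\{x,y\}\notin E$ — contradicting $\{x,y\}\in E$. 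Thus no such $C$ exists. For optimality, a feasible $X$ either contains $i$, whence $X\setminus\{i\}\subseteq V_{<ii}$ is feasible for $B_{<ii}^{xy}$ and $w(X)\le w(B_{<ii}^{xy})+w(i)$, or omits $i$, whence $X\subseteq V_{<ii}$ is feasible for $B_{<ii}^{xy}$ and $w(X)\le w(B_{<ii}^{xy})\le w(B_{<ii}^{xy})+w(i)$ since $w(i)\ge 0$ (here $i\neq 0$, as $0$ is isolated, so case~3 cannot occur when $i=0$). Therefore $B_{ii}^{xy}=B_{<ii}^{xy}\cup\{i\}$.

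The main obstacle is the feasibility step of case~3 with $i\notin S$: it is the only place where one must combine the permutation-graph structure (every induced cycle is a triangle or a square) with the exact neighbourhood $\{x,y\}$ of $i$. Everything else reduces to bookkeeping with $\leq_t,\leq_b$ and Observation~\ref{obs:sfvs:perm}~(1), together with the trivial remark that an induced subgraph of an $S$-forest is an $S$-forest.
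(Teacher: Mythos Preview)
Your proof is correct and follows essentially the same approach as the paper's: in cases~1 and~2 you show the extra vertex ($y$ or $x$) is pendant in $G[V_{ii}\cup\{x,y\}]$ and hence irrelevant, and in case~3 you use that the neighbourhood of $i$ in $G[V_{ii}\cup\{x,y\}]$ is exactly $\{x,y\}$ together with the triangle/square dichotomy for induced cycles. Your treatment is in fact slightly more careful than the paper's --- you verify that $B_{<ii}^{xy}$ is well-defined, spell out the optimality argument, and note why $i\neq 0$ in case~3 so that $w(i)\ge 0$ is available.
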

\begin{proof}
By $i<_{t}y$, $i<_{b}x$, and the fact that $xy$ is a crossing pair, we have $\{x,i\} <_{t} y$ and $\{y,i\} <_{b} x$.
Assume first that $i$ is non-adjacent to at least one of $x$ and $y$.
Let $\{i,y\}\notin E$.
Then $\{i,x\}<_{t}y$ and $i<_{b}y<_{b}x$, so that the neighbourhood of $y$ in $G[V_{ii}\cup\{x,y\}]$ is $\{x\}$.
Thus no subset of $V_{ii}\cup\{x,y\}$ that contains $y$ induces an $S$-cycle of $G$ which implies that
$B_{ii}^{xy}=B_{ii}^{xx}$.
Completely symmetric arguments apply if $\{i,x\}\notin E$ showing the second statement.

Next assume that $\{i,x\},\{i,y\}\in E$.
Then $x<_{t}i<_{t}y$ and $y<_{b}i<_{b}x$, so that the neighbourhood of $i$ in $G[X\cup\{x,y\}]$ is $\{x,y\}$.
We distinguish two cases according to whether $i$ belongs to $S$.
Suppose that $i\in S$.
Then $\langle i,x,y\rangle$ is an induced $S$-triangle of $G$, so that $i\notin B_{ii}^{xy}$.
Thus by Observation~\ref{obs:sfvs:perm}~(1), $B_{ii}^{xy}=B_{<ii}^{xy}$ holds if $i\in S$.

Suppose next that $i\notin S$.
We will show that no subset of $V_{ii}\cup\{x,y\}$ that contains $i$ induces an $S$-cycle of $G$.
Recall that $i$ is non-adjacent to any vertex of $V_{ii}$ and the only induced cycles of a permutation graph is either a triangle or a square.
Assume that $\langle v_{1},v_{2},i\rangle$ is an induced $S$-triangle of $G$ where $v_{1},v_{2}\in V_{<ii}\cup\{x,y\}$.
Then $\{v_{1},v_{2}\}=\{x,y\}$ leading to a contradiction, because $i,x,y\notin S$.
So let us assume that $\langle v_{1},v_{2},v_{3},i\rangle$ is an induced $S$-square of $G$ where $v_{1},v_{2},v_{3}\in V_{<ii}\cup\{x,y\}$.
By the fact that $i$ only adjacent to $x$ and $y$ in $G[V_{ii}\cup\{x,y\}]$ we have that $v_1,v_3$ correspond to the vertices $x$ and $y$.
This however leads to a contradiction since $\{x,y\}\in E$ and $\{v_1,v_3\}\notin E$ by the induced $S$-square.
Therefore no subset of $V_{ii}\cup\{x,y\}$ that contains $i$ induces an $S$-cycle of $G$, so that $i\in B_{ii}^{xy}$.
By Observation~\ref{obs:sfvs:perm}~(1) $B_{ii}^{xy}=B_{<ii}^{xy}\cup\{i\}$ holds and this completes the proof.
\end{proof}

\begin{lemma}\label{lem:sfvs:perm:ixyz}
Let $i\in V$, $xy\in\mathcal{X}\setminus\mathcal{I}$, and let $z\in V\setminus (V_{ii}\setminus \{x,y\})$ such that
$xy<_{\ell}zz$, at least one of $x,y$ is adjacent to $z$, $i<_{t}y$, $i<_{b}x$, and $x,y,z\notin S$.
\begin{enumerate}
\item If $\{i,z\}\notin E$ then $C_{ii}^{xy,zz}=B_{ii}^{xy}$.
\item If $\{i,z\}\in E$ then $C_{ii}^{xy,zz}=\left\{\begin{array}{l@{}l}
C_{<ii}^{xy,zz}, &\textrm{ if }i\in S\\[\myspace]
C_{<ii}^{xy,zz}\cup\{i\}, &\textrm{ if }i\notin S.
\end{array}\right.$
\end{enumerate}
\end{lemma}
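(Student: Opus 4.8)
The plan is to follow the template of the proof of Lemma~\ref{lem:sfvs:perm:ixy}, now carrying the three external vertices $x,y,z$ instead of just $x,y$. Two facts are used throughout: that $V_{ii}=V_{<ii}\cup\{i\}$ (Observation~\ref{obs:sfvs:perm}~(1)), and that $i$ is isolated in $G[V_{ii}]$ — since $h<_{t}i$ and $h<_{b}i$ for every $h\in V_{ii}\setminus\{i\}$, exactly as exploited in Lemmas~\ref{lem:sfvs:perm:i}--\ref{lem:sfvs:perm:ixy} — so that the neighbourhood of $i$ in $G[V_{ii}\cup\{x,y,z\}]$ is contained in $\{x,y,z\}$. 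I will also repeatedly invoke that every induced cycle of a permutation graph is a triangle or a square, together with the standard fact that a shortest cycle through a fixed vertex is chordless. A convenient first step is to locate $z$ relative to $i$: because $z\notin V_{ii}$ we cannot have both $z<_{t}i$ and $z<_{b}i$, so $\{i,z\}\notin E$ forces $i<_{t}z$ and $i<_{b}z$, while $\{i,z\}\in E$ leaves the two options $i<_{t}z,\,z<_{b}i$ and $z<_{t}i,\,i<_{b}z$.

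For statement~(1), with $\{i,z\}\notin E$ and hence $i<_{t}z$ and $i<_{b}z$, the chains $h<_{t}i<_{t}z$ and $h<_{b}i<_{b}z$ for $h\in V_{ii}\setminus\{i\}$ show that $z$ has no neighbour in $V_{ii}$, so the neighbourhood of $z$ in $G[V_{ii}\cup\{x,y,z\}]$ lies in $\{x,y\}$. Since $\{x,y\}\in E$ and $x,y,z\notin S$, the vertex $z$ can belong to no induced triangle (it would be $\langle x,y,z\rangle$, which is $S$-free) and to no induced square (the two $z$-neighbours would then be the adjacent pair $x,y$) of $G[X\cup\{x,y,z\}]$, for any $X\subseteq V_{ii}$. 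Hence $G[X\cup\{x,y,z\}]$ has an induced $S$-cycle if and only if $G[X\cup\{x,y\}]$ does, and, the side conditions of Definitions~\ref{def:sfvs:perm:b:2} and~\ref{def:sfvs:perm:c:1} coinciding with our hypotheses, this gives $C_{ii}^{xy,zz}=B_{ii}^{xy}$.

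For statement~(2), with $\{i,z\}\in E$: if $i\notin C_{ii}^{xy,zz}$ then $C_{ii}^{xy,zz}\subseteq V_{<ii}$, and Observation~\ref{obs:sfvs:perm}~(1) gives $C_{ii}^{xy,zz}=C_{<ii}^{xy,zz}$ at once, so it remains to decide whether $i$ may be used. From the location of $z$: in the case $i<_{t}z,\,z<_{b}i$ we get $y\leq_{b}z<_{b}i$, so (with $i<_{t}y$) $\{i,y\}\in E$; in the case $z<_{t}i,\,i<_{b}z$ we get $x\leq_{t}z<_{t}i$, so (with $i<_{b}x$) $\{i,x\}\in E$. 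Thus $i$ is adjacent to $z$ and to at least one of $x,y$. If $i\in S$, then $G[\{x,y,z\}]$ is connected ($\{x,y\}\in E$ and $z$ is adjacent to $x$ or $y$ by hypothesis) and $i$ has two neighbours among $x,y,z$, so $i$ lies on a cycle of $G[\{i,x,y,z\}]$; a shortest such cycle is an induced $S$-cycle of $G$ through $i$, which is therefore present in $G[X\cup\{x,y,z\}]$ whenever $i\in X$. Hence $i$ belongs to no feasible $X$, so $i\notin C_{ii}^{xy,zz}$ and again $C_{ii}^{xy,zz}=C_{<ii}^{xy,zz}$.

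It remains to treat $i\notin S$ and $\{i,z\}\in E$, where we must show $C_{ii}^{xy,zz}=C_{<ii}^{xy,zz}\cup\{i\}$. Given the routine reverse inequality (delete $i$ from $C_{ii}^{xy,zz}$, which then lands in $V_{<ii}$) and the positivity of $w(i)$, it suffices to prove that $C_{<ii}^{xy,zz}\cup\{i\}$ is feasible, i.e. $G[C_{<ii}^{xy,zz}\cup\{i,x,y,z\}]\in\mathcal{F}_{S}$. Suppose not and pick an induced $S$-cycle $Q$ in it; since $G[C_{<ii}^{xy,zz}\cup\{x,y,z\}]\in\mathcal{F}_{S}$, the cycle $Q$ uses $i$, and as the neighbourhood of $i$ lies in $\{x,y,z\}$ while none of $i,x,y,z$ is in $S$, $Q$ must be a square $\langle i,u,v,w\rangle$ with $\{u,w\}\subseteq\{x,y,z\}$, $\{u,w\}\notin E$, and $v\in S\cap C_{<ii}^{xy,zz}$. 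Since $\{x,y\}\in E$, the pair $\{u,w\}$ equals $\{x,z\}$ or $\{y,z\}$, and in each case the hypothesis that $z$ is adjacent to $x$ or to $y$ supplies the edge among $x,y,z$ opposite to the missing one; consequently the four vertices $v,x,y,z$ carry a $4$-cycle through $v\in S$ inside $G[C_{<ii}^{xy,zz}\cup\{x,y,z\}]$, whose only possible chord, if present, produces an induced $S$-triangle there instead — either outcome contradicting $G[C_{<ii}^{xy,zz}\cup\{x,y,z\}]\in\mathcal{F}_{S}$. This last square analysis is the crux of the lemma: it is exactly where the two structural side conditions of Definition~\ref{def:sfvs:perm:c:1}, that $\{x,y\}$ is an edge and that $z$ is adjacent to $x$ or to $y$, are needed, so that the two $i$-neighbours of a hypothetical new $S$-square are forced into $\{x,z\}$ or $\{y,z\}$ and enough edges survive among $x,y,z$ to re-route the square through the remaining external vertex; everything else is bookkeeping analogous to Lemmas~\ref{lem:sfvs:perm:ix}--\ref{lem:sfvs:perm:ixy}.
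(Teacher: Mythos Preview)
Your proof is correct and follows essentially the same route as the paper's: the reduction when $\{i,z\}\notin E$ by isolating $z$ from $V_{ii}$, the exclusion of $i$ when $i\in S$ via a cycle inside $\{i,x,y,z\}$, and the key square-rerouting argument when $i\notin S$ are all the same ideas. Your treatment of the $i\in S$ case (two neighbours in a connected $3$-vertex graph force a cycle through $i$) and of the final contradiction (building the $4$-cycle $v\text{--}x\text{--}y\text{--}z$ or $v\text{--}y\text{--}x\text{--}z$ from the surviving edge among $x,y,z$) is a cleaner packaging of what the paper spells out by explicit subcase enumeration, but the content is identical.
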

\begin{proof}
Since $z\in V\setminus (V_{ii}\setminus \{x,y\})$, we have $i <_{t} z$ or $i<_{b} z$.
Assume first that $\{i,z\}\notin E$.
Observe that this means that $i <_{t} z$ and $i<_{b} z$.
Then $z$ is non-adjacent to any vertex of $V_{ii}$ so that the neighborhood of $z$ in $G[V_{ii}\cup\{x,y,z\}]$ is a subset of $\{x,y\}$.
Since $x,y,z \notin S$, no subset of $V_{ii}\cup\{x,y,z\}$ that contains $z$ induces an $S$-cycle in $G$.
Thus $C_{ii}^{xy,zz}=B_{ii}^{xy}$.

Assume next that $\{i,z\}\in E$. This means that either $i <_t z$ and $z <_b i$ hold, or $z <_t i$ and $i <_b z$ hold.
Since $i<_{t}y$ and $i<_{b}x$, we get either $i <_t \{y,z\}$ and $z <_b i <_b x$, or $z <_t i <_t y$ and $i <_b \{x,z\}$.
Moreover since $xy$ is a crossing pair and $xy<_{\ell}zz$, exactly one of following holds:
\begin{itemize}
\item $\{i,x\}<_{t}\{y,z\}$ and $y<_{b}z<_{b}i<_{b}x$;
\item $x<_{t}z<_{t}i<_{t}y$ and $\{i,y\}<_{b}\{x,z\}$.
\end{itemize}
This means that $y,z \in N(i)$ and $x$ is adjacent to $z$, or $x,z \in N(i)$ and $y$ is adjacent to $z$.
We distinguish two cases depending on whether $i$ belongs to $S$.
\begin{itemize}
\item Let $i\in S$. We will show that $i\notin C_{ii}^{xy,zz}$.
If both $x$ and $y$ are adjacent to $i$ then $\langle i,x,y \rangle$ is an induced $S$-triangle in $G$.
Thus either $y,z \in N(i)$ and $x$ is adjacent to $z$, or $x,z \in N(i)$ and $y$ is adjacent to $z$.
Assume the former, that is, $y,z \in N(i)$, $x \notin N(i)$, and $x$ is adjacent to $z$.
If $\{y,z\}\in E$ then $\langle i,y,z\rangle$ is an induced $S$-triangle and
if $\{y,z\}\notin E$ then $\langle i,y,x,z\rangle$ is an induced $S$-square.
Similarly if $x,z \in N(i)$, $y \notin N(i)$, and $y$ is adjacent to $z$ we obtain an induced $S$-cycle in $G$.
Therefore in all cases $i\notin C_{ii}^{xy,zz}$ and by Observation~\ref{obs:sfvs:perm}~(1) we get $C_{ii}^{xy,zz}=C_{<ii}^{xy,zz}$.

\item Let $i\notin S$. We will show that $i \in C_{ii}^{xy,zz}$.
Assume for contradiction that there is an induced $S$-triangle $\langle v_{1},v_{2},i\rangle$ or $S$-square $\langle v_{1},v_{2},v_{3},i\rangle$ in $G$ where $v_{1},v_{2},v_{3}\in V_{<ii}\cup\{x,y,z\}$.
Notice that $N(i) \cap V_{<ii} = \emptyset$ so that $\{v_{1},v_{2}\}\subset\{x,y,z\}$ or $\{v_{1},v_{3}\}\subset\{x,y,z\}$, respectively.
In the former case we reach a contradiction because $i,x,y,z\notin S$.
In the latter case for the same reason notice that $v_{2} \in S$ which implies that $v_{2}\in V_{<ii}$.
If $\{v_{1},v_{3}\}=\{x,y\}$ then we reach a contradiction to the $S$-square $\langle v_{1},v_{2},v_{3},i\rangle$ because $\{x,y\}\in E$. Thus $\{v_{1},v_{3}\}=\{y,z\}$ or $\{v_{1},v_{3}\}=\{x,z\}$.
Without loss of generality assume that $\{v_{1},v_{3}\}=\{y,z\}$.
Then $\{y,z\}\notin E$, for otherwise we reach again a contradiction to the given $S$-square.
Observe that $\{y,z\}\notin E$ implies that $\{x,z\}\in E$ by the hypothesis for $z$.
This however shows that $\langle y,v_{2},x\rangle$ or
$\langle y,v_{2},z,x\rangle$ induce an $S$-triangle or an $S$-square of $G$ without $i$ depending on whether $x$ is adjacent to $v_2$, so that $v_{2}\notin C_{ii}^{xy,zz}$.
Therefore in all cases we reach a contradiction which means that $i\in C_{ii}^{xy,zz}$ and by Observation~\ref{obs:sfvs:perm}~(1), $C_{ii}^{xy,zz}=C_{<ii}^{xy,zz}\cup\{i\}$ holds.
\end{itemize}
In each case we have showed the described equations and this completes the proof.
\end{proof}

\begin{lemma}\label{lem:sfvs:perm:ixyzw}
Let $i\in V$ and let $xy,zw\in\mathcal{X}\setminus\mathcal{I}$ such that $xy<_{\ell}zw$, $\{x,w\},\{y,z\}\in E$, $i<_{t}\{y,w\}$, $i<_{b}\{x,z\}$, and $x,y,z,w\notin S$.
\begin{enumerate}
\item If $\{i,w\}\notin E$ then $C_{ii}^{xy,zw}=C_{ii}^{xy,zz}$.
\item If $\{i,z\}\notin E$ then $C_{ii}^{xy,zw}=C_{ii}^{xy,ww}$.
\item If $\{i,z\},\{i,w\}\in E$ then $C_{ii}^{xy,zw}=\left\{\begin{array}{l@{}l}
C_{<ii}^{xy,zw},& \textrm{ if }i\in S\\[\myspace]
C_{<ii}^{xy,zw}\cup\{i\},& \textrm{ if }i\notin S.
\end{array}\right.$
\end{enumerate}
\end{lemma}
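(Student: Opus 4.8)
The plan is to follow the template of Lemmas~\ref{lem:sfvs:perm:ixy} and~\ref{lem:sfvs:perm:ixyz}, using two facts established there: the neighbourhood of $i$ inside $G[V_{ii}]$ is empty (Lemma~\ref{lem:sfvs:perm:i}), and every induced cycle of a permutation graph is a triangle or a square. Before the case split I would record the geometry forced by the hypotheses. Since $xy,zw\in\mathcal{X}\setminus\mathcal{I}$ are crossing pairs with $xy<_{\ell}zw$ and $\{x,w\},\{y,z\}\in E$, one reads off $x<_{t}z<_{t}y$ on the top line and $y<_{b}w<_{b}x$ on the bottom line, hence also $z<_{t}w$ and $y<_{b}z$, and the equivalences $\{x,z\}\in E\iff z<_{b}x$ and $\{y,w\}\in E\iff w<_{t}y$. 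Combined with $i<_{t}\{y,w\}$ and $i<_{b}\{x,z\}$, this pins down in each of the three cases which of $x,y,z,w$ are adjacent to $i$. In every case the two sides of the claimed equality are maxima over feasible families that share the same ground set (parts~(1),(2)) or ground sets differing by the single vertex $i$, with $V_{ii}=V_{<ii}\cup\{i\}$ by Observation~\ref{obs:sfvs:perm}~(1) (part~(3)); so it suffices to compare those families. One inclusion is always immediate since an induced subgraph of an $S$-forest is an $S$-forest, and I would separately (and routinely, from the inherited orderings) check that the $C$-sets on the right-hand sides meet the preconditions of Definitions~\ref{def:sfvs:perm:c:1} and~\ref{def:sfvs:perm:c:2}.

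For part~(1): $\{i,w\}\notin E$ together with $i<_{t}w$ gives $i<_{b}w$, so $w$ has no neighbour in $V_{ii}$, whence $N(w)\subseteq\{x,y,z\}$ in $G[V_{ii}\cup\{x,y,z,w\}]$. Given $X\subseteq V_{ii}$ with $G[X\cup\{x,y,z\}]\in\mathcal{F}_{S}$, I argue adding $w$ creates no $S$-cycle: a new induced cycle through $w$ is either a triangle inside $\{x,y,z,w\}$ (carrying no $S$-vertex) or an induced square $\langle w,v_{2},v_{3},v_{4}\rangle$ whose $w$-neighbours $v_{2},v_{4}$ form a non-edge of $\{x,y,z\}$; since $\{x,y\},\{y,z\}\in E$, that non-edge must be $\{x,z\}$, so $v_{3}$ is adjacent to both $x$ and $z$ but not to $w$. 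If $v_{3}\in\{x,y,z\}$ the square is $\langle w,x,y,z\rangle$, again with no $S$-vertex; otherwise $v_{3}\in V_{ii}$, and then an induced $S$-triangle or $S$-square supported on $\{x,y,z,v_{3}\}$ (namely $\langle y,z,v_{3}\rangle$ if $v_{3}$ is adjacent to $y$, else $\langle y,x,v_{3},z\rangle$) already lives in $G[X\cup\{x,y,z\}]$, contradicting feasibility of $X$. Hence the feasible families coincide and $C_{ii}^{xy,zw}=C_{ii}^{xy,zz}$. Part~(2) is the mirror image with $z,w$ interchanged: $\{i,z\}\notin E$ forces $z$ to have no neighbour in $V_{ii}$, the only non-edge of $\{x,y,w\}$ is $\{y,w\}$ (because $\{x,y\},\{x,w\}\in E$), and the witnessing shorter $S$-cycle is supported on $\{x,y,w,v_{3}\}$; this gives $C_{ii}^{xy,zw}=C_{ii}^{xy,ww}$.

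For part~(3): $\{i,w\}\in E$ with $i<_{t}w$ forces $w<_{b}i$, hence $y<_{b}w<_{b}i$ and $\{i,y\}\in E$; symmetrically $\{i,z\}\in E$ with $i<_{b}z$ forces $z<_{t}i$, hence $x<_{t}z<_{t}i$ and $\{i,x\}\in E$. So $i$ is adjacent to all of $x,y,z,w$ and to nothing else in $V_{ii}$. If $i\in S$, then $\langle i,x,y\rangle$ is an induced $S$-triangle, so no feasible set contains $i$; with $V_{ii}=V_{<ii}\cup\{i\}$ the feasible families for $C_{ii}^{xy,zw}$ and $C_{<ii}^{xy,zw}$ coincide, giving $C_{ii}^{xy,zw}=C_{<ii}^{xy,zw}$. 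If $i\notin S$, I would show $i$ is always addable: for $X\subseteq V_{<ii}$ with $G[X\cup\{x,y,z,w\}]\in\mathcal{F}_{S}$, a new induced cycle through $i$ is a triangle inside $\{i,x,y,z,w\}$ (no $S$-vertex) or an induced square $\langle i,v_{2},v_{3},v_{4}\rangle$ whose $i$-neighbours $v_{2},v_{4}$ form a non-edge of $\{x,y,z,w\}$ --- necessarily $\{x,z\}$ or $\{y,w\}$, since $\{x,y\},\{y,z\},\{z,w\},\{x,w\}\in E$ --- with $v_{3}\notin\{x,y,z,w\}$, hence $v_{3}\in V_{<ii}$, adjacent to both endpoints of that non-edge; then an induced $S$-triangle or $S$-square supported on $\{x,y,z,v_{3}\}$ (in the $\{x,z\}$ case) or on $\{x,y,w,v_{3}\}$ (in the $\{y,w\}$ case) already lives in $G[X\cup\{x,y,z,w\}]$, a contradiction. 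Together with $w(i)\ge 0$ this gives $C_{ii}^{xy,zw}=C_{<ii}^{xy,zw}\cup\{i\}$.

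The crux --- and the only step that is more than bookkeeping --- is the square case in each part: one must be certain that the sole dangerous new induced cycle is a square whose apex $v_{3}$ lies in the already-processed part, and that the presence of such a $v_{3}\in S$ forces a shorter $S$-triangle or $S$-square confined to the graph defining the right-hand side. Identifying, via the four given edges among $x,y,z,w$, which pair can play the non-edge $\{v_{2},v_{4}\}$, and then exhibiting the correct shorter witness cycle, is the fiddly but routine heart of the verification.
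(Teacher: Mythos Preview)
Your proposal is correct and follows essentially the same strategy as the paper's own proof: in each case you identify the neighbourhood of the ``extra'' vertex ($w$, $z$, or $i$) inside the relevant induced subgraph, reduce any new induced $S$-cycle to a triangle or square, and then exhibit a shorter $S$-cycle avoiding that vertex whenever the apex lands in $V_{ii}$ (resp.\ $V_{<ii}$). Your write-up is in fact slightly more careful than the paper's in two places---you explicitly allow for the apex being adjacent to $y$ (giving an $S$-triangle rather than an $S$-square as witness), and you note the need for $w(i)\ge 0$ in part~(3)---but the underlying argument is the same.
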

\begin{proof}
Observe that $x,y,z,w \in V \setminus V_{ii}$ because $i<_{t}\{y,w\}$ and $i<_{b}\{x,z\}$.
Assume first that $\{i,w\}\notin E$.
Since $i<_{t}w$ and $i<_{b}w$, $w$ has no neighbour in $V_{ii}$.
Thus the neighbourhood of $w$ in $G[V_{ii}\cup\{x,y,z,w\}]$ is a subset of $\{x,y,z\}$.
We will show that $w \notin {C}_{ii}^{xy,zw}$.
Assume that a subset of $V_{ii}\cup\{x,y,z,w\}$ that contains $w$ induces an $S$-cycle in $G$.
If $\langle v_{1},v_{2},w\rangle$ is an induced $S$-triangle of $G$ then $\{v_{1},v_{2}\}\subset\{x,y,z\}$ which leads to a contradiction, because $x,y,z,w\notin S$.
Suppose that $\langle v_{1},v_{2},v_{3},w\rangle$ is an induced $S$-square of $G$.
Then $\{v_{1},v_{3}\}\subset\{x,y,z\}$ and, since $x,y,z,w\notin S$ we know that $v_2 \in S$ and $v_{2}\in V_{ii}$.
\begin{itemize}
\item Assume that $\{v_{1},v_{3}\}=\{x,y\}$ or $\{v_{1},v_{3}\}=\{y,z\}$.
Then we reach a contradiction to the induced $S$-square, because $\{x,y\},\{y,z\}\in E$.
\item Assume that $\{v_{1},v_{3}\}=\{x,z\}$. If $\{x,z\}\in E$ then $\langle v_{1},v_{2},v_{3},w\rangle$ does not induce an $S$-square. If $\{x,z\}\notin E$ then $\langle x,v_{2},y\rangle$ or $\langle x,v_{2},z,y\rangle$ induce an $S$-triangle or an $S$-square in $G$ which reach to a contradiction to $v_{2}\notin C_{ii}^{xy,zz}$.
\end{itemize}
Therefore, if a subset of $V_{ii}\cup\{x,y,z,w\}$ that contains $w$ induces an $S$-cycle of $G$, then its non-empty intersection with $V_{ii}$ is not a subset of $C_{ii}^{xy,zz}$ which implies that $C_{ii}^{xy,zw}=C_{ii}^{xy,zz}$.
The case for $\{i,z\}\notin E$ is completely symmetric showing the second statement.

Let $\{i,z\},\{i,w\}\in E$.
Then either $i <_t \{z,w\}$ and $\{z,w\}<_b i$, or $\{z,w\} <_t i$ and $i <_b \{z,w\}$.
Since $xy<_{\ell}zw$, $i<_{t}\{y,w\}$, and $i<_{b}\{x,z\}$, the following hold:
\begin{itemize}
\item $x<_{t}z<_{t}i<_{t}\{y,w\}$ and
\item $y<_{b}w<_{b}i<_{b}\{x,z\}$.
\end{itemize}
Thus the neighborhood of $i$ in $G[V_{ii}\cup\{x,y,z,w\}]$ is $\{x,y,z,w\}$.
Assume that $i\in S$. Then $\langle i,x,y\rangle$ is an $S$-triangle of $G$ which implies $i\notin C_{ii}^{xy,zw}$.
By Observation~\ref{obs:sfvs:perm}~(1) we get $C_{ii}^{xy,zw}=C_{<ii}^{xy,zw}$.
Let us assume that $i\notin S$.
We will show that if a subset of $V_{ii}\cup\{x,y,z,w\}$ that contains $i$ induces an $S$-cycle of $G$, then its non-empty intersection with $V_{<ii}$ is not a subset of $C_{ii}^{xy,zw}$.
\begin{itemize}
\item Let $v_{1},v_{2}\in V_{<ii}\cup\{x,y,z,w\}$ such that $\langle v_{1},v_{2},i\rangle$ is an induced $S$-triangle of $G$. Then $\{v_{1},v_{2}\}\subset\{x,y,z,w\}$, a contradiction, because $i,x,y,z,w\notin S$.
\item Let $v_{1},v_{2},v_{3}\in V_{<ii}\cup\{x,y,z,w\}$ such that $\langle v_{1},v_{2},v_{3},i\rangle$ is an induced $S$-square of $G$. Then $\{v_{1},v_{3}\}\subset\{x,y,z,w\}$ and, since $i,x,y,z,w\notin S$, $v_{2}\in S$. Thus $v_{2}\in V_{<ii}$.
Because $v_1,v_3$ are non-adjacent, we have $\{v_{1},v_{3}\}=\{x,z\}$ or $\{v_{1},v_{3}\}=\{y,w\}$.
In both cases we reach a contradiction since $\langle x,v_{2},z,y\rangle$ or $\langle y,v_{2},w,z\rangle$ induce $S$-squares in $G$.
\end{itemize}
Thus if $i\notin S$ then $i\in C_{ii}^{xy,zw}$.
Therefore by Observation~\ref{obs:sfvs:perm}~(1) we obtain $C_{ii}^{xy,zw}=C_{<ii}^{xy,zw}\cup\{i\}$.
\end{proof}

Based on Lemmas~\ref{lem:sfvs:perm:i}--\ref{lem:sfvs:perm:ixyzw}, for each crossing pair of the form $ii$ we can describe its subsolution by using appropriate formulations of the $A$-, $B$-, or $C$-sets.
In the forthcoming lemmas we give the recursive formulations for the sets $A_{ij}$, $B_{ij}^{xx}$, $B_{ij}^{xy}$, $C_{ij}^{xy,zz}$, and $C_{ij}^{xy,zw}$ whenever $ij \in\mathcal{X}\setminus\mathcal{I}$ which particularly means that $i$ and $j$ are distinct vertices in $G$.

\begin{lemma}\label{lem:sfvs:perm:ij}
Let $ij\in\mathcal{X}\setminus\mathcal{I}$. Then,
$$
A_{ij}=\left\{\begin{array}{l@{}l}
\max_{w}\left\{A_{\eqslantless ij},A_{\leqslant ij},B_{\ll jj}^{ii}\cup\{i,j\},B_{\ll ii}^{jj}\cup\{i,j\}\right\}, &
\textrm{ if }i\in S\textrm{ or }j\in S\\[\myspace]
\max_{w}\left\{A_{\eqslantless ij},A_{\leqslant ij},B_{<ij}^{ij}\cup\{i,j\}\right\}, &
\textrm{ if }i,j\notin S.
\end{array}\right.$$
\end{lemma}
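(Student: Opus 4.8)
The plan is to prove two statements whose conjunction is the claimed equality (read, as throughout, in the $\max_w$ sense): (i) every set appearing inside the $\max_w$ on the right is a subset of $V_{ij}$ inducing an $S$-forest, hence a feasible set for $A_{ij}$, so that $w(A_{ij})\ge w$ of that set; and (ii) $A_{ij}$ itself coincides with one of those sets. For (i): by Observation~\ref{obs:sfvs:perm}~(1) we have $V_{\eqslantless ij},V_{\leqslant ij}\subseteq V_{ij}$, so $A_{\eqslantless ij}$ and $A_{\leqslant ij}$ are feasible for $A_{ij}$ by Definition~\ref{def:sfvs:perm:a}, and $B_{<ij}^{ij}\cup\{i,j\}$ is feasible by Definition~\ref{def:sfvs:perm:b:2}. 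For $B_{\ll jj}^{ii}\cup\{i,j\}$: Observation~\ref{obs:sfvs:perm}~(2)--(4) says $V_{\ll jj}$ is exactly the set of non-neighbours of $j$ in $V_{<jj}$, and since $j<_b i$ each such vertex lies in $V_{<ij}\subseteq V_{ij}$; thus $B_{\ll jj}^{ii}\cup\{i,j\}\subseteq V_{ij}$, the only neighbour of $j$ in $G[B_{\ll jj}^{ii}\cup\{i,j\}]$ is $i$, so $j$ lies on no cycle, and as $G[B_{\ll jj}^{ii}\cup\{i\}]\in\mathcal{F}_S$ by Definition~\ref{def:sfvs:perm:b:1} we conclude $G[B_{\ll jj}^{ii}\cup\{i,j\}]\in\mathcal{F}_S$. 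Exchanging $i$ with $j$ (and the top line with the bottom line) handles $B_{\ll ii}^{jj}\cup\{i,j\}$.

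For (ii) I would split on whether $i,j\in A_{ij}$, using Observation~\ref{obs:sfvs:perm}~(1) to detach $i$ and $j$ from $V_{ij}$. If $i\notin A_{ij}$ then $A_{ij}\subseteq V_{ij}\setminus\{i\}\subseteq V_{\leqslant ij}$, so $A_{ij}$ is feasible for $A_{\leqslant ij}$, whence $w(A_{ij})\le w(A_{\leqslant ij})$ and, with (i), $A_{ij}=A_{\leqslant ij}$; symmetrically $j\notin A_{ij}$ gives $A_{ij}=A_{\eqslantless ij}$. So assume $i,j\in A_{ij}$, hence $A_{ij}\setminus\{i,j\}\subseteq V_{<ij}$. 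If $i,j\notin S$, then $G[A_{ij}]\in\mathcal{F}_S$ makes $A_{ij}\setminus\{i,j\}$ feasible for $B_{<ij}^{ij}$, so $w(A_{ij})\le w(B_{<ij}^{ij}\cup\{i,j\})$ and $A_{ij}=B_{<ij}^{ij}\cup\{i,j\}$ by (i).

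The remaining, and main, case is $i,j\in A_{ij}$ with $i\in S$ or $j\in S$. Reading the permutation diagram, a vertex $h\in V_{ij}\setminus\{i,j\}$ is adjacent to $i$ exactly when $h$ lies after $i$ on the top line, and adjacent to $j$ exactly when $h$ lies after $j$ on the bottom line. No vertex of $A_{ij}\setminus\{i,j\}$ is adjacent to both $i$ and $j$, as it would close an induced $S$-triangle with them. Furthermore, $A_{ij}\setminus\{i,j\}$ cannot contain both a vertex $h_1$ adjacent to $i$ but not $j$ and a vertex $h_2$ adjacent to $j$ but not $i$: by the coordinate description $h_2$ precedes $h_1$ on the top line while $h_1$ precedes $h_2$ on the bottom line, so $\{h_1,h_2\}\in E$, and then $\langle h_2,h_1,i,j\rangle$ is an induced $S$-square (its only possible chords $\{h_2,i\}$ and $\{h_1,j\}$ are non-edges), contradicting $G[A_{ij}]\in\mathcal{F}_S$. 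Hence either no vertex of $A_{ij}\setminus\{i,j\}$ is adjacent to $j$ --- then $A_{ij}\setminus\{i,j\}\subseteq V_{\ll jj}$, this set is feasible for $B_{\ll jj}^{ii}$ since $G[A_{ij}]\in\mathcal{F}_S$, and $w(A_{ij})\le w(B_{\ll jj}^{ii}\cup\{i,j\})$ --- or no vertex of $A_{ij}\setminus\{i,j\}$ is adjacent to $i$, giving symmetrically $w(A_{ij})\le w(B_{\ll ii}^{jj}\cup\{i,j\})$. Together with (i) this identifies $A_{ij}$ with one of the two sets and finishes the proof.

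I expect the square argument in the last case to be the only delicate point: one must carefully translate adjacency to $i$ and to $j$ into positions on the two lines, verify that a vertex one-sidedly adjacent to $i$ and a vertex one-sidedly adjacent to $j$ necessarily have crossing segments (so that, together with the edge $\{i,j\}$, they span a chordless $4$-cycle), and notice that the hypothesis $i\in S$ or $j\in S$ is exactly what makes this cycle an $S$-square. Everything else is routine manipulation of the set identities recorded in Observation~\ref{obs:sfvs:perm}.
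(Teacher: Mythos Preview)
Your proof is correct and follows essentially the same approach as the paper: split on whether $i$ and $j$ lie in $A_{ij}$, use Observation~\ref{obs:sfvs:perm}~(1) to peel off $i$ or $j$, and in the case $i,j\in A_{ij}$ with $\{i,j\}\cap S\neq\emptyset$ rule out a common neighbour (triangle) and a pair of one-sided neighbours (square) to force $A_{ij}\setminus\{i,j\}$ into $V_{\ll jj}$ or $V_{\ll ii}$. The only noteworthy difference is that you make explicit the ``feasibility'' direction (your part~(i)), verifying that each candidate on the right is actually an $S$-forest in $V_{ij}$; the paper leaves this implicit.
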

\begin{proof}
Let $j\notin A_{ij}$. Then by Observation~\ref{obs:sfvs:perm}~(1) it follows that $A_{ij}=A_{\eqslantless ij}$.
Similarly if $i\notin A_{ij}$ then $A_{ij}=A_{\leqslant ij}$.
For the rest of the proof we assume that $i,j\in A_{ij}$.
Notice that by Observation~\ref{obs:sfvs:perm}~(1)  we have $A_{ij}\setminus\{i,j\}\subseteq V_{<ij}$.
We distinguish two cases according to whether $i$ or $j$ belong to $S$.
\begin{itemize}
\item Assume that $i,j\notin S$. Then $A_{ij}=B_{<ij}^{ij}\cup\{i,j\}$ holds which completes the second description in the formula.
\item Assume that $i\in S$ or $j\in S$.
Let $h\in A_{ij}\setminus\{i,j\}$ such that $\{h,i\},\{h,j\}\in E$.
Then $\langle h,i,j\rangle$ is an induced $S$-triangle in $G$, resulting a contradiction to $i,j\in A_{ij}$.
Thus for every $h\in A_{ij}\setminus\{i,j\}$ we know that $\{h,i\}\notin E$ or $\{h,j\}\notin E$.
Let $g,h\in A_{ij}\setminus\{i,j\}$ such that $\{g,j\},\{h,i\}\in E$ and $\{g,i\},\{h,j\}\notin E$.
Observe that $\{g,h\} <_{b} i$ and $\{g,h\} <_{t} j$.
Since $ij$ is a crossing pair we know that $i <_t j$ and $j <_b i$.
If $i<_{t}g$ or $j<_{b}h$ then $g$ is adjacent to $i$ or $h$ is adjacent to $j$, leading to a contradiction.
%
Thu $g<_{t}i<_{t}h$ and $h<_{b}j<_{b}g$ hold which imply that $\{g,h\}\in E$.
Hence $\langle g,h,i,j\rangle$ is an induced $S$-square in $G$, a contradiction.
This means that all vertices of $A_{ij}\setminus\{i,j\}$ are non-adjacent to $i$ or $j$ or both.
Then by Observation~\ref{obs:sfvs:perm}~(2) it follows that either $A_{ij}\setminus\{i,j\}\subseteq V_{\ll jj}$ or $A_{ij}\setminus\{i,j\}\subseteq V_{\ll ii}$.

Suppose that the former holds, that is $A_{ij}\setminus\{i,j\}\subseteq V_{\ll jj}$.
The neighborhood of $j$ in $G[V_{\ll jj}\cup\{i,j\}]$ is $\{i\}$.
Thus no subset of $V_{\ll jj}\cup\{i,j\}$ that contains $j$ induces an $S$-cycle in $G$.
This means that $A_{ij}=B_{\ll jj}^{ii}\cup\{i,j\}$ as described in the first description in the given formula.
If $A_{ij}\setminus\{i,j\}\subseteq V_{\ll ii}$ then completely symmetric we have $A_{ij}=B_{\ll ii}^{jj}\cup\{i,j\}$.
\end{itemize}
Therefore the corresponding formulas given in the statement follow.
\end{proof}

With the next two lemmas we describe recursively the sets $B_{ij}^{xx}$ and $B_{ij}^{xy}$.
Given a set of vertices $L \subseteq V$ we define the following crossing pair.
The {\it leftmost crossing pair} of $L$ is the crossing pair $x'y' \in \mathcal{X}\setminus\mathcal{I}$ with $x',y' \in L$ such that for any $z'w' \in \mathcal{X}\setminus\mathcal{I}$ with $z',w' \in L$,
$x'y' \leq_{\ell} z'w'$ holds.

\begin{lemma}\label{lem:sfvs:perm:ijx}
Let $ij\in\mathcal{X}\setminus\mathcal{I}$ and let $x\in V\setminus V_{ij}$.
Moreover let $x'y'$ be the leftmost crossing pair of $\{i,j,x\}$ and
let $z'$ be the vertex of $\{i,j,x\}\setminus \{x',y'\}$.
\begin{enumerate}
\item If $\{i,x\},\{j,x\}\notin E$ then $B_{ij}^{xx}=A_{ij}$.
\item If $\{i,x\}\in E$ and $\{j,x\}\notin E$ then
\begin{displaymath}
B_{ij}^{xx}=\left\{\begin{array}{l@{}l}
\max_{w}\left\{B_{\eqslantless ij}^{xx},B_{\leqslant ij}^{xx},B_{\ll jj}^{ii}\cup\{i,j\},B_{\ll ix}^{jj}\cup\{i,j\}\right\}, & \textrm{ if }i\in S\textrm{ or }j\in S\\[\myspace]
\max_{w}\left\{B_{\eqslantless ij}^{xx},B_{\leqslant ij}^{xx},B_{<ij\ll xx}^{ij}\cup\{i,j\}\right\}, & \textrm{ if }i,j\notin S\textrm{ and }x\in S\\[\myspace]
\max_{w}\left\{B_{\eqslantless ij}^{xx},B_{\leqslant ij}^{xx},C_{<ij}^{x'y',z'z'}\cup\{i,j\}\right\}, & \textrm{ if }i,j,x\notin S.
\end{array}\right.
\end{displaymath}
\item If $\{i,x\}\notin E$ and $\{j,x\}\in E$ then
\begin{displaymath}
B_{ij}^{xx}=\left\{\begin{array}{l@{}l}
\max_{w}\left\{B_{\eqslantless ij}^{xx},B_{\leqslant ij}^{xx},B_{\ll xj}^{ii}\cup\{i,j\},B_{\ll ii}^{jj}\cup\{i,j\}\right\}, & \textrm{ if }i\in S\textrm{ or }j\in S\\[\myspace]
\max_{w}\left\{B_{\eqslantless ij}^{xx},B_{\leqslant ij}^{xx},B_{<ij\ll xx}^{ij}\cup\{i,j\}\right\}, & \textrm{ if }i,j\notin S\textrm{ and }x\in S\\[\myspace]
\max_{w}\left\{B_{\eqslantless ij}^{xx},B_{\leqslant ij}^{xx},C_{<ij}^{x'y',z'z'}\cup\{i,j\}\right\}, & \textrm{ if }i,j,x\notin S.
\end{array}\right.
\end{displaymath}
\item If $\{i,x\},\{j,x\}\in E$ then
\begin{displaymath}
B_{ij}^{xx}=\left\{\begin{array}{l@{}l}
\max_{w}\left\{B_{\eqslantless ij}^{xx},B_{\leqslant ij}^{xx}\right\}, & \textrm{ if }i\in S\textrm{ or }j\in S\textrm{ or }x\in S\\[\myspace]
\max_{w}\left\{B_{\eqslantless ij}^{xx},B_{\leqslant ij}^{xx},C_{<ij}^{x'y',z'z'}\cup\{i,j\}\right\}, & \textrm{ if }i,j,x\notin S.
\end{array}\right.
\end{displaymath}
\end{enumerate}
\end{lemma}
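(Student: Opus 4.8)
The plan is to follow the template of Lemma~\ref{lem:sfvs:int:ix} and of the preceding permutation lemmas, now for a crossing pair $ij\in\mathcal{X}\setminus\mathcal{I}$, so that $\{i,j\}\in E$ with $i<_{t}j$ and $j<_{b}i$. The first step is a purely positional analysis: from $x\in V\setminus V_{ij}$ (equivalently $i<_{b}x$ or $j<_{t}x$), the crossing-pair inequalities, and the rule ``$a\sim b$ with $a<_{t}b$ iff $b<_{b}a$'', I would determine in each of the four adjacency patterns the complete top- and bottom-orderings of $\{i,j,x\}$ and hence the neighbourhood $N(x)\cap V_{ij}$. In pattern~(1) this forces $x$ to the right of both $i$ and $j$ on both lines, so $N(x)\cap V_{ij}=\varnothing$; since every induced cycle of a permutation graph is a triangle or a square and $x$ would need two neighbours in $X$ to close one, no subset of $V_{ij}\cup\{x\}$ containing $x$ induces an $S$-cycle, and hence $B_{ij}^{xx}=A_{ij}$ by Definitions~\ref{def:sfvs:perm:a} and~\ref{def:sfvs:perm:b:1}.

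For patterns~(2)--(4) I would first peel off $j$ and then $i$ using Observation~\ref{obs:sfvs:perm}~(1): if $j\notin B_{ij}^{xx}$ then $B_{ij}^{xx}=B_{\eqslantless ij}^{xx}$, and if $i\notin B_{ij}^{xx}$ then $B_{ij}^{xx}=B_{\leqslant ij}^{xx}$; these give the first two options of every displayed $\max_{w}$. Assuming henceforth $i,j\in B_{ij}^{xx}$, so $B_{ij}^{xx}\setminus\{i,j\}\subseteq V_{<ij}$, I would read off the forced obstructions. In pattern~(4), $\{i,j,x\}$ is a triangle, so if any of $i,j,x$ lies in $S$ we get an $S$-triangle inside $B_{ij}^{xx}\cup\{x\}$, a contradiction, leaving only the two $B_{\bullet}^{\bullet}$ options. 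In patterns~(2),(3) exactly one of $\{i,x\},\{j,x\}$ is an edge, so there is no forced triangle, but if $i\in S$ or $j\in S$ the argument of Lemma~\ref{lem:sfvs:perm:ij} applies verbatim: a vertex of $B_{ij}^{xx}\setminus\{i,j\}$ adjacent to both $i,j$ gives an $S$-triangle, and two vertices adjacent one to $i$ and one to $j$ an $S$-square $\langle g,h,i,j\rangle$, forcing $B_{ij}^{xx}\setminus\{i,j\}$ into $V_{\ll ii}$ or $V_{\ll jj}$; combined with the single $x$-edge -- which turns one free slot of the base predecessor into $x$, since $x$ crosses exactly that endpoint -- this yields the options $B_{\ll jj}^{ii}\cup\{i,j\}$ and $B_{\ll ix}^{jj}\cup\{i,j\}$ of pattern~(2) and their mirror images in pattern~(3).

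It then remains to treat the subcases with $i,j\notin S$. If moreover $x\in S$ (patterns~(2),(3)): the edge $\{i,x\}$ (resp.\ $\{j,x\}$) makes $x$ the apex of every potential $S$-obstruction, so any $h\in B_{ij}^{xx}\setminus\{i,j\}$ adjacent to $x$ is excluded -- if $h$ is also adjacent to $i$ we get the $S$-triangle $\langle h,i,x\rangle$, and otherwise the positions force $h<_{t}i$ and $h\sim j$, whence $\langle h,j,i,x\rangle$ is an induced $S$-square -- so $B_{ij}^{xx}\setminus\{i,j\}\subseteq V_{<ij\ll xx}$ by Observation~\ref{obs:sfvs:perm}~(5), and nothing else is violated, giving the option $B_{<ij\ll xx}^{ij}\cup\{i,j\}$. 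If $i,j,x\notin S$: let $x'y'$ be the leftmost crossing pair of $\{i,j,x\}$ and $z'$ the remaining vertex, so $\{x',y',z'\}=\{i,j,x\}$; then, with $i,j\in B_{ij}^{xx}$ and using $V_{ij}\setminus\{i,j\}=V_{<ij}\setminus\{i,j\}$ from Observation~\ref{obs:sfvs:perm}~(1), the constraint ``$G[X\cup\{x\}]\in\mathcal{F}_{S}$'' on $X=X'\cup\{i,j\}$ is literally the constraint ``$G[X'\cup\{x',y',z'\}]\in\mathcal{F}_{S}$'' defining $C_{<ij}^{x'y',z'z'}$, so $B_{ij}^{xx}=C_{<ij}^{x'y',z'z'}\cup\{i,j\}$ -- provided $C_{<ij}^{x'y',z'z'}$ is legitimately an instance of Definition~\ref{def:sfvs:perm:c:1}. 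Verifying this is exactly where the first step is reused: the leftmost-crossing-pair choice is precisely what makes $x'y'<_{\ell}z'z'$ hold (another choice fails it, e.g.\ $ij<_{\ell}xx$ is false in pattern~(4) when $x$ is bottom-extremal, so one must take $x'y'=ix$ there), and one must likewise check $x'y'\in\mathcal{X}\setminus\mathcal{I}$, that one of $x',y'$ is adjacent to $z'$, that $z'\in V\setminus(V_{<ij}\setminus\{x',y'\})$, and the two predecessor inequalities relating $<ij$ to $x'y'$.

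Finally, symmetry under swapping $i\leftrightarrow j$ (equivalently the top and bottom lines) carries pattern~(2) to pattern~(3), so they need not be argued separately. I expect the main obstacle to be organizational together with the positional bookkeeping: three adjacency patterns times up to three $S$-membership cases, and in every case that invokes a $C$-set one must re-derive all the side conditions of Definition~\ref{def:sfvs:perm:c:1} from the ordering facts of the first step, paying particular attention to the strict-versus-nonstrict predecessor inequalities. The only genuinely non-mechanical ingredient is spotting, in each $S$-case, the specific triangle or square that produces the contradiction, in the spirit of the proofs of Lemmas~\ref{lem:sfvs:perm:ij}--\ref{lem:sfvs:perm:ixyzw}.
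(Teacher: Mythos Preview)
Your plan is correct and mirrors the paper's proof essentially step for step: the same positional analysis for pattern~(1), the same peeling of $i$ and $j$ via Observation~\ref{obs:sfvs:perm}~(1), and the same case split on adjacencies and $S$-membership, with the same $S$-triangles and $S$-squares producing the contradictions. The only place your sketch is thinner than the paper is the sub-case $i\in S$ or $j\in S$ of pattern~(2): after the Lemma~\ref{lem:sfvs:perm:ij} argument lands you in $V_{\ll ii}$, you still need one more obstruction---the induced $S$-square $\langle h,j,i,x\rangle$---to rule out $h\sim x$ and descend to $V_{\ll ix}$, but this is exactly the argument you already spelled out for the $x\in S$ sub-case, so you have all the pieces.
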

\begin{proof}
Let us assume first that $\{i,x\},\{j,x\}\notin E$.
Since $i<_{t}j$, $j<_{b}i$, and $x \in V\setminus V_{ij}$, we know that $i<_{t}j<_{t}x$ and $j<_{b}i<_{b}x$.
Thus the neighborhood of $x$ in $G[V_{ij}\cup\{x\}]$ is $\emptyset$.
Hence no subset of $V_{ij}\cup\{x\}$ that contains $x$ induces an $S$-cycle of $G$ and it follows that $B_{ij}^{xx}=A_{ij}$ as described in the first statement.

Assume next that $\{i,x\}\in E$ or $\{j,x\}\in E$.
Let $j\notin B_{ij}^{xx}$. By Observation~\ref{obs:sfvs:perm}~(1) we get $B_{ij}^{xx}=B_{\eqslantless ij}^{xx}$.
Similarly, if $i\notin B_{ij}^{xx}$ then $B_{ij}^{xx}=B_{\leqslant ij}^{xx}$.
So suppose next that $i,j\in B_{ij}^{xx}$.
Notice that $B_{ij}^{xx}\setminus\{i,j\}\subseteq V_{<ij}$ by Observation~\ref{obs:sfvs:perm}~(1).
We distinguish the following cases.
\begin{itemize}
\item Assume that $\{i,x\}\in E$ and $\{j,x\}\notin E$.
Since $x \notin V_{ij}$, $j <_t x$ or $i <_b x$. If $i <_b x$ then $x<_t i$ as $\{i,x\}\in E$ but then $x <_t j$ and $j <_b i <_b x$ so that $\{j,x\}\in E$, leading to a contradiction. Thus $j <_t x$ holds.
Since $\{i,x\}\in E$ and $\{j,x\}\notin E$, we have
$j <_b < x <_b i$ and $i <_t < j <_t x$.
We further reduce to the situations depending on whether $i,j,x$ belong to $S$.
\begin{itemize}
\item Let $i\in S$ or $j\in S$.
Let $h\in B_{ij}^{xx}\setminus\{i,j\}$ such that $\{h,i\},\{h,j\}\in E$. Then $\langle h,i,j\rangle$ is an induced $S$-triangle in $G$, a contradiction.
So $\{h,i\}\notin E$ or $\{h,j\}\notin E$ for every $h\in B_{ij}^{xx}\setminus\{i,j\}$.
Let $g,h\in B_{ij}^{xx}\setminus\{i,j\}$ such that $\{g,j\},\{h,i\}\in E$.
Since $\{g,h\}<_{b}i$ and $\{g,h\}<_{t}j$ by the choice of $g,h \in B_{ij}$, it follows that $g<_{t}i<_{t}h$ and $h<_{b}j<_{b}g$.
Thus $\{g,h\}\in E$.
This however results in an induced $S$-square $\langle g,h,i,j\rangle$ in $G$.
This means that for every $h\in B_{ij}^{xx}\setminus\{i,j\}$ either $\{h,i\}\notin E$ or $\{h,j\}\notin E$.
By Observation~\ref{obs:sfvs:perm}~(2) it follows that either $B_{ij}^{xx}\setminus\{i,j\}\subseteq V_{\ll jj}$ or $B_{ij}^{xx}\setminus\{i,j\}\subseteq V_{\ll ii}$.

In the former case notice that both $j$ and $x$ in $G[V_{\ll jj}\cup\{i,j,x\}]$ are adjacent only to $i$.
Thus no subset of $V_{\ll jj}\cup\{i,j,x\}$ that contains $j$ or $x$ induces an $S$-cycle of $G$ so that $B_{ij}^{xx}=B_{\ll jj}^{ii}\cup\{i,j\}$ as described.

In the latter case we have $B_{ij}^{xx}\setminus\{i,j\}\subseteq V_{\ll ii}$.
Let $h\in B_{ij}^{xx}\setminus\{i,j\}$. We show that $\{h,x\}\notin E$.
Assume for contradiction that $\{h,x\}\in E$.
This means that either $h <_t x$ and $x <_b h$, or $x <_t h$ and $h <_b x$.
Observe that $h <_t j$ and $h <_b i$.
Since $j <_b < x <_b i$ and $i <_t < j <_t x$, we get the following:
\begin{itemize}
\item $h<_{t}i<_{t}j<_{t}x$ and
\item $j<_{b}x<_{b}h<_{b}i$.
\end{itemize}
Thus $\{h,j\}\in E$.
This however shows that $\langle h,j,i,x\rangle$ is an induced $S$-square in $G$, leading to a contradiction.
Thus $\{h,x\}\notin E$ for every $h\in B_{ij}^{xx}\setminus\{i,j\}$.
Then by Observation \ref{obs:sfvs:perm}~(3) it follows that $B_{ij}^{xx}\setminus\{i,j\}\subseteq V_{\ll ix}$.
This means that $i$ and $x$ are only adjacent to $j$ in $G[V_{\ll ix}\cup\{i,j,x\}]$.
Hence no subset of $V_{\ll ix}\cup\{i,j,x\}$ that contains $i$ or $x$ induces an $S$-cycle in $G$, so that $B_{ij}^{xx}=B_{\ll ix}^{jj}\cup\{i,j\}$ as described.

\item Let $i,j\notin S$ and $x\in S$. Let $h\in B_{ij}^{xx}\setminus\{i,j\}$.
We show that $\{h,x\}\notin E$. Assume for contradiction that $\{h,x\}\in E$.
Then either $h <_t x$ and $x <_b h$ hold, or $x <_t h$ and $h <_b x$ hold.
Since $\{i,x\}\in E$, $\{j,x\}\notin E$, and $ij$ is a crossing pair, we have
\begin{itemize}
\item $\{h,i\}<_{t}j<_{t}x$ and
\item $j<_{b}x<_{b}h<_{b}i$
\end{itemize}
implying that $\{h,j\}\in E$.
If $\{h,i\}\in E$ then $\langle h,i,j\rangle$ is an induced $S$-triangle whereas if $\{h,i\}\notin E$ then $\langle h,j,i,x\rangle$ is an induced $S$-square.
Thus we reach a contradiction so that $\{h,x\}\notin E$. Then by Observation~\ref{obs:sfvs:perm}~(5) we get $B_{ij}^{xx}=B_{<ij\ll xx}^{ij}\cup\{i,j\}$, as described.

\item Let $i,j,x\notin S$. By the fact $B_{ij}^{xx}\setminus\{i,j\}\subseteq V_{<ij}$, we have $B_{ij}^{xx}=B_{<ij}^{x'y',z'z'}\cup\{i,j\}$.
\end{itemize}
\item Assume that $\{i,x\}\notin E$ and $\{j,x\}\in E$. This case is symmetric to the one above, so that the following hold:
\begin{itemize}
\item If $i\in S$ or $j\in S$ then either $B_{ij}^{xx}=B_{\ll xj}^{ii}\cup\{i,j\}$ or $B_{ij}^{xx}=B_{\ll ii}^{jj}\cup\{i,j\}$.
\item If $x\in S$ then $B_{ij}^{xx}=B_{<ij\ll xx}^{ij}\cup\{i,j\}$.
\item If $i,j,x\notin S$ then $B_{ij}^{xx}=B_{<ij}^{x'y',z'z'}$.
\end{itemize}
\item Assume that both $\{i,x\}, \{j,x\}\in E$.
Then no vertex of $\{i,x,y\}$ can belong to $S$ as $\langle i,j,x\rangle$ is an induced triangle in $G$.
Since $B_{ij}^{xx}\setminus\{i,j\}\subseteq V_{<ij}$, we get $B_{ij}^{xx}=B_{<ij}^{x'y',z'z'}\cup\{i,j\}$.
\end{itemize}
Therefore every case results in the described statement of the formulas as required.
\end{proof}

Let $ij,xy\in\mathcal{X}\setminus\mathcal{I}$ such that $j<_{t}y$ and $i<_{b}x$.
It is not difficult to see that if we remove any crossing pair $uv$ from $\{i,j,x,y\}$ then the remaining set contains exactly two vertices that are adjacent.

\begin{lemma}\label{lem:sfvs:perm:ijxy}
Let $ij,xy\in\mathcal{X}\setminus\mathcal{I}$ such that $j<_{t}y$, $i<_{b}x$ and $x,y\notin S$.
Moreover let $x'y'$ be the leftmost crossing pair of $\{i,j,x,y\}$ and
let $z'w'$ be the crossing pair of $\{i,j,x,y\}\setminus\{x',y'\}$.
\begin{enumerate}
\item If $\{i,y\}\notin E$ then $B_{ij}^{xy}=B_{ij}^{xx}$.
\item If $\{j,x\}\notin E$ then $B_{ij}^{xy}=B_{ij}^{yy}$.
\item If $\{i,y\},\{j,x\}\in E$ then
\begin{displaymath}
B_{ij}^{xy}=\left\{\begin{array}{l@{}l}
\max_{w}\left\{B_{\eqslantless ij}^{xy},B_{\leqslant ij}^{xy}\right\}, & \textrm{ if }i\in S\textrm{ or }j\in S\\[\myspace]
\max_{w}\left\{B_{\eqslantless ij}^{xy},B_{\leqslant ij}^{xy},C_{<ij}^{x'y',z'w'}\cup\{i,j\}\right\}, & \textrm{ if }i,j\notin S.
\end{array}\right.
\end{displaymath}
\end{enumerate}
\end{lemma}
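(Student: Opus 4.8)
The plan is to follow the template of Lemmas~\ref{lem:sfvs:perm:ixy} and~\ref{lem:sfvs:perm:ij}. First I would record the order information forced by the hypotheses: from $ij,xy\in\mathcal{X}\setminus\mathcal{I}$ we get $i<_{t}j$, $j<_{b}i$, $x<_{t}y$, $y<_{b}x$, and adding $j<_{t}y$ and $i<_{b}x$ puts $i,x$ before $j,y$ in $\leq_{t}$ and $j,y$ before $i,x$ in $\leq_{b}$; in particular $x,y\in V\setminus V_{ij}$. For statement~(1), $\{i,y\}\notin E$ together with $i<_{t}j<_{t}y$ forces $i<_{b}y$, so every $h\in V_{ij}$ satisfies $h<_{t}y$ and $h<_{b}y$ and is therefore non-adjacent to $y$. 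Hence the only neighbour of $y$ in $G[V_{ij}\cup\{x,y\}]$ is $x$, so for every $X\subseteq V_{ij}$ the vertex $y$ lies on no cycle of $G[X\cup\{x,y\}]$; thus $G[X\cup\{x,y\}]\in\mathcal{F}_{S}$ if and only if $G[X\cup\{x\}]\in\mathcal{F}_{S}$, and Definitions~\ref{def:sfvs:perm:b:1} and~\ref{def:sfvs:perm:b:2} give $B_{ij}^{xy}=B_{ij}^{xx}$. Statement~(2) is the mirror image: $\{j,x\}\notin E$ forces $j<_{t}x$, the only neighbour of $x$ in $G[V_{ij}\cup\{x,y\}]$ is $y$, and $B_{ij}^{xy}=B_{ij}^{yy}$.

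For statement~(3) assume $\{i,y\},\{j,x\}\in E$. Then the four edges $\{i,j\},\{j,x\},\{x,y\},\{y,i\}$ are all present, so $\{i,j,x,y\}$ carries a $4$-cycle with diagonals $\{i,x\}$ and $\{j,y\}$; using the observation stated just before the lemma I would first show that these two diagonals are simultaneously edges or simultaneously non-edges (if $\{i,x\}\in E$ then $x<_{t}i$, so $xi\in\mathcal{X}\setminus\mathcal{I}$, and removing it from $\{i,j,x,y\}$ leaves $\{j,y\}$, which must then be an edge; symmetrically for the converse). Hence $G[\{i,j,x,y\}]$ is either an induced $C_{4}$ or a $K_{4}$, and in either case it contains an induced cycle through both $i$ and $j$ (the square itself, or a triangle). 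I would then split on whether $i,j$ belong to $B_{ij}^{xy}$: if $j\notin B_{ij}^{xy}$ then Observation~\ref{obs:sfvs:perm}~(1) gives $B_{ij}^{xy}=B_{\eqslantless ij}^{xy}$, and symmetrically $i\notin B_{ij}^{xy}$ gives $B_{ij}^{xy}=B_{\leqslant ij}^{xy}$; both of these sets are always feasible for $B_{ij}^{xy}$ since $V_{\eqslantless ij},V_{\leqslant ij}\subseteq V_{ij}$ and the superscript constraints survive, so they never overshoot in a $\max_{w}$. It remains to handle $i,j\in B_{ij}^{xy}$, in which case Observation~\ref{obs:sfvs:perm}~(1) also gives $B_{ij}^{xy}\setminus\{i,j\}\subseteq V_{<ij}$.

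In the subcase $i\in S$ or $j\in S$, the induced $S$-cycle on $\{i,j,x,y\}$ described above lies inside $G[B_{ij}^{xy}\cup\{x,y\}]$, contradicting $G[B_{ij}^{xy}\cup\{x,y\}]\in\mathcal{F}_{S}$; hence $i,j$ cannot both be chosen and $B_{ij}^{xy}=\max_{w}\{B_{\eqslantless ij}^{xy},B_{\leqslant ij}^{xy}\}$. In the subcase $i,j\notin S$ (recall $x,y\notin S$), the key point is that the leftmost crossing pair $x'y'$ of $\{i,j,x,y\}$ pairs the $\leq_{t}$-smaller of $\{i,x\}$ with the $\leq_{b}$-smaller of $\{j,y\}$, so $z'w'$ is the remaining pair and $\{x',y',z',w'\}=\{i,j,x,y\}$. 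Consequently Definition~\ref{def:sfvs:perm:c:2} reads $C_{<ij}^{x'y',z'w'}=\max_{w}\{X\subseteq V_{<ij}:G[X\cup\{i,j,x,y\}]\in\mathcal{F}_{S}\}$, so $C_{<ij}^{x'y',z'w'}\cup\{i,j\}$ is feasible for $B_{ij}^{xy}$ and, conversely, $B_{ij}^{xy}\setminus\{i,j\}$ is feasible for $C_{<ij}^{x'y',z'w'}$; comparing weights gives $B_{ij}^{xy}=C_{<ij}^{x'y',z'w'}\cup\{i,j\}$, and combining the three cases yields the claimed formula. The step I expect to be the main obstacle is justifying this last invocation of a $C$-set: one must verify, in each of the $C_{4}$ and $K_{4}$ configurations, that $x'y',z'w'\in\mathcal{X}\setminus\mathcal{I}$ with $x'y'<_{\ell}z'w'$, that the cross-edges $\{x',w'\},\{y',z'\}$ required by Definition~\ref{def:sfvs:perm:c:2} are present, and that the top/bottom inequalities relating $x'y',z'w'$ to the predecessor ${<}ij$ hold — all routine consequences of the ordering facts gathered at the outset, but requiring a careful enumeration of cases.
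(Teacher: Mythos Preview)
Your overall plan mirrors the paper's proof: reduce statements~(1) and~(2) to the fact that $y$ (resp.\ $x$) has a single neighbour in $G[V_{ij}\cup\{x,y\}]$, and in~(3) branch on whether $i,j$ lie in the optimum, using Observation~\ref{obs:sfvs:perm}~(1) for the two ``drop'' cases and reducing the ``keep both'' case to a $C$-set. That part is fine.

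There is, however, a genuine error in your treatment of the subcase $i\in S$ or $j\in S$. You claim that the diagonals $\{i,x\}$ and $\{j,y\}$ are simultaneously edges or simultaneously non-edges, so that $G[\{i,j,x,y\}]$ is either a $K_{4}$ or an induced $C_{4}$. This is false. Take the permutation with top order $x<_{t}i<_{t}j<_{t}y$ and bottom order $j<_{b}y<_{b}i<_{b}x$: then $ij,xy\in\mathcal{X}\setminus\mathcal{I}$, $j<_{t}y$, $i<_{b}x$, $\{i,y\},\{j,x\}\in E$, and one checks $\{i,x\}\in E$ while $\{j,y\}\notin E$. Your appeal to the remark preceding the lemma does not save this: that remark concerns what is left after removing the \emph{leftmost} crossing pair from $\{i,j,x,y\}$ (here $xj$, leaving the adjacent pair $\{i,y\}$), not after removing an arbitrary crossing pair such as $xi$.

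The repair is exactly what the paper does and is simpler than your dichotomy: argue directly that $G[\{i,j,x,y\}]$ always contains an induced cycle through both $i$ and $j$. If $\{i,x\}\in E$ then $\langle i,j,x\rangle$ is a triangle (using $\{i,j\},\{j,x\}\in E$); if $\{j,y\}\in E$ then $\langle i,j,y\rangle$ is a triangle; if both diagonals are absent then $\langle i,j,x,y\rangle$ is an induced square. In each case the cycle is an $S$-cycle whenever $i\in S$ or $j\in S$, contradicting $i,j\in B_{ij}^{xy}$. With this correction your proposal matches the paper's proof; your explicit check that $C_{<ij}^{x'y',z'w'}$ is well-defined in the $i,j\notin S$ branch is a welcome addition that the paper leaves implicit.
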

\begin{proof}
Assume that $\{i,y\}\notin E$.
Then $i <_b y$ since $i <_t x <_t y$. Thus
\begin{itemize}
\item $i<_{t}j$, $\{j,x\}<_{t}y$, and
\item $j<_{b}i<_{b}y<_{b}x$,
\end{itemize}
so that the neighborhood of $y$ in $G[V_{ij}\cup\{x,y\}]$ is $\{x\}$.
Thus no subset of $V_{ij}\cup\{x,y\}$ that contains $y$ induces an $S$-cycle in $G$.
Therefore $B_{ij}^{xy}=B_{ij}^{xx}$ as described.
If $\{j,x\}\notin E$ then $i$ is non-adjacent to $x$ and similar to the previous case we obtain $B_{ij}^{xy}=B_{ij}^{yy}$.

Assume that $\{i,y\},\{j,x\}\in E$.
We distinguish cases depending on whether $i$ or $j$ belong to the solution.
Assume first that at least one of $i$ or $j$ does not belong to $B_{ij}^{xy}$.
Let $j\notin B_{ij}^{xy}$. By Observation~\ref{obs:sfvs:perm}~(1) we have $B_{ij}^{xy}=B_{\eqslantless ij}^{xy}$.
If $i\notin B_{ij}^{xy}$ then in a similar fashion we get $B_{ij}^{xy}=B_{\leqslant ij}^{xy}$.

Next assume that $i,j\in B_{ij}^{xy}$.
Notice that by Observation~\ref{obs:sfvs:perm}~(1), we have $B_{ij}^{xy}\setminus\{i,j\}\subseteq V_{<ij}$.
Let us show that both $i$ and $j$ do not belong to $S$.
If $\{i,x\}\in E$ or $\{j,y\}\in E$ then $\langle i,x,y\rangle$ or $\langle j,x,y\rangle$ induce a triangle in $G$, since $\{i,y\},\{j,x\}\in E$.
Otherwise, $\{i,x\},\{j,y\}\notin E$, so that $\langle i,j,x,y\rangle$ is an induced square in $G$.
Thus in any case there is an $S$-cycle in $G$ whenever $i \in S$ or $j\in S$ which lead to a contradiction to the fact $i,j\in B_{ij}^{xy}$.
Hence $i,j\notin S$.
Since $B_{ij}^{xy}\setminus\{i,j\}\subseteq V_{<ij}$, it follows $B_{ij}^{xy}=C_{<ij}^{x'y',z'w'}\cup\{i,j\}$ as required.
\end{proof}

With the following two lemmas we consider the last two cases that correspond to the sets $C_{ij}^{xy,zz}$ and $C_{ij}^{xy,zw}$, respectively.

\begin{lemma}\label{lem:sfvs:perm:ijxyz}
Let $ij,xy\in\mathcal{X}\setminus\mathcal{I}$ and let $z\in V\setminus V_{ij}$ such that $xy<_{\ell}zz$,
at least one of $x,y$ is adjacent to $z$, $j<_{t}y$, $i<_{b}x$, and $x,y,z\notin S$.
Moreover let $x'y'$ be the leftmost crossing pair of $\{i,j,x,y,z\}$ and
let $z'w'$ be the leftmost crossing pair of $\{i,j,x,y,z\}\setminus\{x',y'\}$.
\begin{enumerate}
\item If $\{i,z\},\{j,z\}\notin E$ then $C_{ij}^{xy,zz}=B_{ij}^{xy}$.
\item If $\{i,z\}\in E$ or $\{j,z\}\in E$ then
\begin{displaymath}
C_{ij}^{xy,zz}=\left\{\begin{array}{l@{}l}
\max_{w}\left\{C_{\eqslantless ij}^{xy,zz},C_{\leqslant ij}^{xy,zz}\right\}, & \textrm{ if }i\in S\textrm{ or }j\in S\\[\myspace]
\max_{w}\left\{C_{\eqslantless ij}^{xy,zz},C_{\leqslant ij}^{xy,zz},C_{<ij}^{x'y',z'w'}\cup\{i,j\}\right\}, & \textrm{ if }i,j\notin S.
\end{array}\right.
\end{displaymath}
\end{enumerate}
\end{lemma}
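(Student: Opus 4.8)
The plan is to follow the template of Lemmas~\ref{lem:sfvs:perm:ixyz} and \ref{lem:sfvs:perm:ijxy}: the predecessors ${\eqslantless}ij$ and ${\leqslant}ij$ will absorb the sub-solutions that omit $j$ or $i$, while the remaining term handles sub-solutions containing both $i$ and $j$. We may assume $z\notin\{x,y\}$, since otherwise $G[X\cup\{x,y,z\}]=G[X\cup\{x,y\}]$ for every $X\subseteq V_{ij}$ and both sides collapse to $B_{ij}^{xy}$; hence $\{i,j,x,y,z\}$ consists of five distinct vertices, with $i,j\in V_{ij}$ and $x,y,z\notin V_{ij}$. For part~(1) I would first locate $z$: from $z\in V\setminus V_{ij}$ we get $i<_{b}z$ or $j<_{t}z$, and combining this with $\{i,z\},\{j,z\}\notin E$ and with $i<_{t}j$, $j<_{b}i$ forces $i<_{t}j<_{t}z$ and $j<_{b}i<_{b}z$. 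Consequently every $h\in V_{ij}$ satisfies $h\leq_{t}j<_{t}z$ and $h\leq_{b}i<_{b}z$, hence $\{h,z\}\notin E$, so the neighbourhood of $z$ in $G[V_{ij}\cup\{x,y,z\}]$ is a subset of $\{x,y\}$; since $x,y,z\notin S$, no subset of $V_{ij}\cup\{x,y,z\}$ containing $z$ induces an $S$-cycle of $G$, and Definitions~\ref{def:sfvs:perm:b:2} and \ref{def:sfvs:perm:c:1} yield $C_{ij}^{xy,zz}=B_{ij}^{xy}$.

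For part~(2), write $X=C_{ij}^{xy,zz}$. If $j\notin X$ then $X\subseteq V_{{\eqslantless}ij}$ by Observation~\ref{obs:sfvs:perm}~(1), and since $(xy,zz)$ still satisfies the constraints of Definition~\ref{def:sfvs:perm:c:1} relative to ${\eqslantless}ij$ (a consequence of ${\eqslantless}ij\leq_{r}ij$ together with $j<_{t}y$ and $i<_{b}x$), we get $X=C_{{\eqslantless}ij}^{xy,zz}$; symmetrically $i\notin X$ gives $X=C_{{\leqslant}ij}^{xy,zz}$. It remains to treat the case $i,j\in X$, where $X\setminus\{i,j\}\subseteq V_{<ij}$ by Observation~\ref{obs:sfvs:perm}~(1). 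Suppose first that $i\in S$ or $j\in S$; I claim this case cannot occur, so that $C_{ij}^{xy,zz}=\max_{w}\{C_{{\eqslantless}ij}^{xy,zz},C_{{\leqslant}ij}^{xy,zz}\}$. Indeed $\{i,j\}\in E$, $\{x,y\}\in E$, at least one of $x,y$ is adjacent to $z$, and at least one of $i,j$ is adjacent to $z$, so inside $G[\{i,j,x,y,z\}]$ the vertex $z$ links the edge $\{i,j\}$ to the edge $\{x,y\}$; a case analysis on the position of $z$ relative to $i,j,x,y$ — driven by $j<_{t}y$, $i<_{b}x$, $xy<_{\ell}zz$ and the two adjacency disjuncts — always forces one further edge among $\{i,j,x,y,z\}$, and the resulting induced triangle or square passes through whichever of $i,j$ belongs to $S$, an $S$-cycle contradicting $G[X\cup\{x,y,z\}]\in\mathcal{F}_{S}$.

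Now suppose $i,j\notin S$, so that none of $i,j,x,y,z$ lies in $S$; I would prove $C_{ij}^{xy,zz}=C_{<ij}^{x'y',z'w'}\cup\{i,j\}$. One direction is routine: if $i,j\in X$, then $X\setminus\{i,j\}\subseteq V_{<ij}$ induces, together with $\{x',y',z',w'\}\subseteq\{i,j,x,y,z\}$, an $S$-forest, hence it is an admissible set for $C_{<ij}^{x'y',z'w'}$, and so $X$ has weight at most that of $C_{<ij}^{x'y',z'w'}\cup\{i,j\}$. For the converse, let $v$ be the unique vertex of $\{i,j,x,y,z\}\setminus\{x',y',z',w'\}$; since $x'y'$ and $z'w'$ are the two $\leq_{\ell}$-leftmost crossing pairs of $\{i,j,x,y,z\}$, the vertex $v$ is the \emph{rightmost} of the five, and exactly as in the closing paragraph of the proof of Lemma~\ref{lem:sfvs:perm:ixyz} any induced $S$-triangle or $S$-square through $v$ in $G[V_{<ij}\cup\{i,j,x,y,z\}]$ can be re-routed through $x'$ and $y'$: its $S$-vertex lies in $V_{<ij}$ and must then be adjacent to both $x'$ and $y'$. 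Hence an admissible set for $C_{<ij}^{x'y',z'w'}$ together with $\{i,j\}$ is admissible for $C_{ij}^{xy,zz}$, and the two sides agree.

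The main obstacle is the positional case analysis in the branch $i\in S$ or $j\in S$: one has to verify, for each relative placement of $z$ with respect to $\{i,j,x,y\}$ and each combination of the disjuncts $\{i,z\}\in E$ / $\{j,z\}\in E$ and $i\in S$ / $j\in S$, that the forced additional edge produces a short $S$-cycle through the relevant $S$-vertex. This bookkeeping is the only genuinely new ingredient beyond Lemmas~\ref{lem:sfvs:perm:ixyz} and \ref{lem:sfvs:perm:ijxy}; the remaining steps parallel arguments already present in the paper.
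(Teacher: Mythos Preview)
Your overall plan mirrors the paper's: split on whether $i$ and $j$ lie in the optimal set, absorb the cases $j\notin X$ and $i\notin X$ into the predecessors ${\eqslantless}ij$ and ${\leqslant}ij$, and when $i,j\in X$ either exhibit an $S$-cycle (if $i\in S$ or $j\in S$) or reduce to $C_{<ij}^{x'y',z'w'}$ (if $i,j\notin S$). The sketches for part~(1) and for the sub-case $i,j\notin S$ are fine in outline, though in part~(1) you should add one line ruling out $S$-squares through $z$ (the two $z$-neighbours in such a square would have to be $x$ and $y$, but $\{x,y\}\in E$).

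The genuine gap is in the branch $i\in S$ or $j\in S$. You assert that a positional case analysis ``always forces one further edge among $\{i,j,x,y,z\}$, and the resulting induced triangle or square passes through whichever of $i,j$ belongs to~$S$''. That last clause is false. Take the permutation $\pi=(2,4,5,1,3)$ and set $i=1$, $j=2$, $x=3$, $y=4$, $z=5$, $S=\{j\}$. All hypotheses of the lemma hold, $\{i,z\}\in E$ while $\{j,z\}\notin E$, and one checks that $j$ is adjacent to none of $x,y,z$; hence $j$ is a leaf in $G[\{i,j,x,y,z\}]$ and no cycle --- let alone an $S$-cycle --- passes through it. Thus $X=\{i,j\}$ is feasible for $C_{ij}^{xy,zz}$, and with positive weights $C_{ij}^{xy,zz}=\{1,2\}$ strictly dominates both $C_{{\eqslantless}ij}^{xy,zz}=\{1\}$ and $C_{{\leqslant}ij}^{xy,zz}=\{2\}$. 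So the case analysis you propose cannot be completed: when the $S$-vertex among $\{i,j\}$ happens to be non-adjacent to all of $x,y,z$, there is no short $S$-cycle through it. The paper's own argument splits instead on which of $i,j$ is adjacent to $z$ and exhibits cycles only through that vertex; it therefore shares the same gap.
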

\begin{proof}
Assume first that $\{i,z\},\{j,z\}\notin E$. Then $i<_{t}j$, $\{j,x\}<_{t}\{y,z\}$, $\{i,y\}<_{b}\{x,z\}$, and $j<_{b}i$.
This means that the neighborhood of $z$ in $G[V_{ij}\cup\{x,y,z\}]$ is a subset of $\{x,y\}$.
We will show that no subset of $V_{ij}\cup\{x,y,z\}$ that contains $z$ induces an $S$-cycle of $G$.
\begin{itemize}
\item Let $\langle v_{1},v_{2},z\rangle$ be an induced $S$-triangle such that $v_{1},v_{2}\in V_{ij}\cup\{x,y\}$.
Then $\{v_{1},v_{2}\}=\{x,y\}$ which leads to a contradiction, because $x,y,z\notin S$.

\item Let $\langle v_{1},v_{2},v_{3},z\rangle$ be an induced $S$-square such that $v_{1},v_{2},v_{3}\in V_{ij}\cup\{x,y\}$.
Then $\{v_{1},v_{3}\}=\{x,y\}$ which leads to a contradiction, because $\{x,y\}\in E$.
\end{itemize}
Thus no subset of $V_{ij}\cup\{x,y,z\}$ that contains $z$ induces an $S$-cycle of $G$.
Therefore $C_{ij}^{xy,zz}=B_{ij}^{xy}$ holds.

Assume that $\{i,z\}\in E$ or $\{j,z\}\in E$. We distinguish cases depending on whether $i$ or $j$ belong to $C_{ij}^{xy,zz}$.
If $j\notin C_{ij}^{xy,zz}$ or $i\notin C_{ij}^{xy,zz}$ then by Observation~\ref{obs:sfvs:perm}~(1) we get $C_{ij}^{xy,zz}=C_{\eqslantless ij}^{xy,zz}$ or $C_{ij}^{xy,zz}=C_{\leqslant ij}^{xy,zz}$, respectively.
The remaining case is $i,j\in C_{ij}^{xy,zz}$.
Here we will show the described formula given in the second statement.
By Observation~\ref{obs:sfvs:perm}~(3), notice that $C_{ij}^{xy,zz}\setminus\{i,j\}\subseteq V_{<ij}$.

\medskip

\noindent\textrm{Case 1:} \ Assume that $i\in S$ or $j\in S$.
We will show that there is an $S$-cycle that contains $i$ or $j$ leading to a contradiction to the assumption $i,j\in C_{ij}^{xy,zz}$.
Let us assume that $\{i,z\}\in E$; the case for $\{j,z\}\in E$ is completely symmetric.
Thus $i<_t z$ and $z <_b i$ hold or $z<_t i$ and $i <_b z$ hold.
Moreover we know that $x <_{t} z$ and $y <_b z$ because $xy <_{\ell} zz$.
Since $ij$, $xy$ are crossing pairs and $i <_{t} j<_{t}y$, $i<_{b}x$, exactly one of the following holds:
\begin{itemize}
\item $x <_{t} z <_{t} < i <_{t} y$ and $\{i,y\} <_{b} \{x,z\}$;
\item $\{i,x\}<_{t} \{y,z\}$ and $y<_{b}z<_{b} i <_{b} x$.
\end{itemize}
If the former inequalities hold then it is not difficult to see that $\{i,x\},\{y,z\}\in E$.
And if the latter inequalities hold then $\{i,y\},\{x,z\}\in E$.
Suppose that $\{i,x\},\{y,z\}\in E$.
\begin{itemize}
\item If $\{i,y\}\in E$ then $\langle i,x,y\rangle$ is an induced $S$-triangle.
\item If $\{x,z\}\in E$ then $\langle i,x,z\rangle$ is an induced $S$-triangle.
\item If $\{i,y\},\{x,z\}\notin E$ then $\langle i,x,y,z\rangle$ is an induced $S$-square.
\end{itemize}
Next suppose that $\{i,y\},\{x,z\}\in E$.
\begin{itemize}
\item If $\{i,x\}\in E$ then $\langle i,x,y\rangle$ is an induced $S$-triangle.
\item If $\{y,z\}\in E$ then $\langle i,y,z\rangle$ is an induced $S$-triangle.
\item If $\{i,x\},\{y,z\}\notin E$ then $\langle i,y,x,z\rangle$ is an induced $S$-square.
\end{itemize}
Therefore if $i\in S$ or $j\in S$ then $i,j\notin C_{ij}^{xy,zz}$ so that $C_{ij}^{xy,zz}$ can be expressed as $C_{\eqslantless ij}^{xy,zz}$ or $C_{\leqslant ij}^{xy,zz}$, as already explained previously.

\medskip

\noindent\textrm{Case 2:} \ Assume that $i,j\notin S$.
Let $a'$ be the vertex of $\{i,j,x,y,z\}\setminus\{x',y',z',w'\}$.
Observe that $a' \notin S$ since $S \cap \{i,j,x,y,z\} = \emptyset$.
We will show that if a subset of $V_{<ij}\cup\{x',y',z',w',a'\}$ that contains $a'$ induces an $S$-cycle of $G$, then its non-empty intersection with $V_{<ij}$ is not a subset of $C_{<ij}^{x'y',z'w'}$.
Assume for contradiction that a subset of vertices of an induced $S$-cycle that contains $a'$ belongs to $C_{<ij}^{x'y',z'w'}$.
Since the only induced cycles in a permutation graph are triangles or squares we assume that $a'$ is contained in an $S$-triangle or an $S$-square.
\begin{itemize}
\item Let $\langle v_{1},v_{2},a'\rangle$ be an induced $S$-triangle such that $v_{1},v_{2}\in V_{<ij}\cup\{x',y',z',w'\}$.
Since $x',y',z',w'\notin S$, without loss of generality, assume that $v_{1}\in S$ which implies that $v_{1}\in V_{<ij}$.
This means that $v_1 <_{t} j \leq_t y'$ and $v_1 <_b i \leq_{b} x'$.
By the choices of $x'y'$, $z'w'$, and $a'$ we know that $x' <_t z' <_t a'$ and $y' <_b w' <_b a'$.
Since $\{v_{1},a'\} \in E$, $a' <_t v_1$ and $v_1 <_b a'$ hold or $v_1 <_t a'$ and $a' <_b v_1$ hold.
Thus exactly one of the following holds:
\begin{itemize}
\item $x'<_{t}z'<_{t}<a'<_{t}v_{1}<_{t}y'$ and $\{v_{1},y'\}<_{b}\{x',z',a'\}$;
\item $\{v_{1},x'\}<_{t}\{y',w',a'\}$ and $y'<_{b}w'<_{b}<a'<_{b}v_{1}<_{b}x'$.
\end{itemize}
If the former inequalities hold then it is not difficult to see that $\{v_{1},x'\},\{v_{1},z'\}\in E$.
And if the latter inequalities hold then $\{v_{1},y'\},\{v_{1},w'\}\in E$.
Suppose that $\{v_{1},x'\},\{v_{1},z'\}\in E$.
\begin{itemize}
\item If $\{v_{1},y'\}\in E$ then $\langle v_{1},x',y'\rangle$ is an induced $S$-triangle.
\item If $\{x',z'\}\in E$ then $\langle v_{1},x',z'\rangle$ is an induced $S$-triangle.
\item If $\{v_{1},y'\},\{x',z'\}\notin E$ then $\langle v_{1},x',y',z'\rangle$ is induced an $S$-square.
\end{itemize}
Next suppose that $\{v_{1},y'\},\{v_{1},w'\}\in E$.
\begin{itemize}
\item If $\{v_{1},x'\}\in E$ then $\langle v_{1},x',y'\rangle$ is an induced $S$-triangle.
\item If $\{y',w'\}\in E$ then $\langle v_{1},y',w'\rangle$ is an induced $S$-triangle.
\item If $\{v_{1},x'\},\{y',w'\}\notin E$ then $\langle v_{1},y',x',w'\rangle$ is induced an $S$-square.
\end{itemize}
Therefore in all cases we obtain that $v_{1}\notin C_{<ij}^{x'y',z'w'}$.

\item Let $\langle v_{1},v_{2},v_{3},a'\rangle$ be an induced $S$-square such that $v_{1},v_{2},v_{3}\in V_{<ij}\cup\{x',y',z',w'\}$.
If $v_{1}\in S$ or $v_{3}\in S$ then by the previous argument the $S$-vertex is not an element of $C_{<ij}^{x'y',z'w'}$.
So assume that $v_{2}\in S$.
Since $x',y',z',w'\notin S$, we have $v_{2}\in V_{<ij}$, so that $v_2 <_{t} j \leq_t y'$ and $v_2 <_b i \leq_{b} x'$.
By the choices of $x'y'$ and $a'$ we know that $x' <_t a'$ and $y' <_b a'$.
Moreover the induced $S$-square imply that either $\{v_2,a'\} <_t \{v_{1},v_{3}\}$ and $\{v_{1},v_{3}\} <_b \{v_2,a'\}$ hold or $\{v_{1},v_{3}\} <_t \{v_2,a'\}$ and $\{v_2,a'\} <_b \{v_{1},v_{3}\}$ hold.
Thus exactly one of the following holds:
\begin{itemize}
\item $\{v_{2},x'\}<_{t}a'<_{t}\{v_{1},v_{3}\}$ and $\{v_{1},v_{3}\}<_{b}v_{2}<_{b}\{x',a'\}$;
\item $\{v_{1},v_{3}\}<_{t}v_2<_{t}\{y',a'\}$ and $\{v_2,y'\}<_{b}a'<_{b}\{v_{1},v_{3}\}$.
\end{itemize}
The first inequalities imply that $\{v_{1},x'\},\{v_{3},x'\}\in E$ whereas the second inequalities imply that $\{v_{1},y'\},\{v_{3},y'\}\in E$.
Suppose that $\{v_{1},x'\},\{v_{3},x'\}\in E$.
\begin{itemize}
\item If $\{v_{2},x'\}\in E$ then $\langle v_{1},v_{2},x'\rangle$ is an induced $S$-triangle.
\item If $\{v_{2},x'\}\notin E$ then $\langle v_{1},v_{2},v_{3},x'\rangle$ is an induced $S$-square.
\end{itemize}
Similarly if $\{v_{1},y'\},\{v_{3},y'\}\in E$ then we obtain an $S$-triangle $\langle v_{1},v_{2},x'\rangle$ or an $S$-square $\langle v_{1},v_{2},v_{3},x'\rangle$.
Thus $\{v_{1},v_{2},v_{3}\}\cap V_{<ij}$ is not a subset of $C_{<ij}^{x'y',z'w'}$.
\end{itemize}
Therefore if a subset of $V_{<ij}\cup\{x',y',z',w',a'\}$ that contains $a'$ induces an $S$-cycle then its non-empty intersection with $V_{<ij}$ is not a subset of $C_{<ij}^{x'y',z'w'}$.
This particularly implies that $C_{ij}^{xy,zz}=C_{<ij}^{x'y',z'w'}\cup\{i,j\}$ as described in the second statement.
\end{proof}

The next lemma shows how to recursively compute $C_{ij}^{xy,zw}$.
Note that in each case we describe $C_{ij}^{xy,zw}$ as a predefined smaller set of a subsolution that is either in the same form or
has already been described in one of the previous lemmas.

\begin{lemma}\label{lem:sfvs:perm:ijxyzw}
Let $ij,xy,zw\in\mathcal{X}\setminus\mathcal{I}$ such that $xy<_{\ell}zw$, $\{x,w\},\{y,z\}\in E$, $j<_{t}\{y,w\}$, $i<_{b}\{x,z\}$, and $x,y,z,w\notin S$.
Moreover let $x'y'$ be the leftmost crossing pair of $\{i,j,x,y,z,w\}$ and
let $z'w'$ be the leftmost crossing pair of $\{i,j,x,y,z,w\}\setminus\{x',y'\}$.
\begin{enumerate}
\item If $\{i,w\}\notin E$ then $C_{ij}^{xy,zw}=C_{ij}^{xy,zz}$.
\item If $\{j,z\}\notin E$ then $C_{ij}^{xy,zw}=C_{ij}^{xy,ww}$.
\item If $\{i,w\},\{j,z\}\in E$ then
\begin{displaymath}
C_{ij}^{xy,zw}=\left\{\begin{array}{l@{}l}
\max_{w}\left\{C_{\eqslantless ij}^{xy,zw},C_{\leqslant ij}^{xy,zw}\right\}, & \textrm{ if }i\in S\textrm{ or }j\in S\\[\myspace]
\max_{w}\left\{C_{\eqslantless ij}^{xy,zw},C_{\leqslant ij}^{xy,zw},C_{<ij}^{x'y',z'w'}\cup\{i,j\}\right\}, & \textrm{ if }i,j\notin S.
\end{array}\right.
\end{displaymath}
\end{enumerate}
\end{lemma}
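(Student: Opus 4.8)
The plan is to follow the same three-part template used in Lemmas~\ref{lem:sfvs:perm:ixyzw} and~\ref{lem:sfvs:perm:ijxyz}: first dispose of the two ``missing edge'' cases by showing that the affected extra vertex is irrelevant, and then, in the remaining case, split according to whether $i$ and $j$ lie in the partial solution. Throughout I will use that $x,y,z,w\in V\setminus V_{ij}$, which is immediate from $j<_{t}\{y,w\}$, $i<_{b}\{x,z\}$ and the definition of $V_{ij}$, and that the only induced cycles of a permutation graph are triangles and squares.

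For Case~1, assume $\{i,w\}\notin E$. Since $j<_{t}w$ forces $i<_{t}w$, non-adjacency of $i$ and $w$ gives $i<_{b}w$, so $w$ has no neighbour in $V_{ij}$ and its neighbourhood in $G[V_{ij}\cup\{x,y,z,w\}]$ is a subset of $\{x,y,z\}$. One then repeats verbatim the analysis in the proof of Lemma~\ref{lem:sfvs:perm:ixyzw}: an induced $S$-triangle through $w$ would lie in $\{x,y,z,w\}\subseteq V\setminus S$, a contradiction, while in an induced $S$-square $\langle v_{1},v_{2},v_{3},w\rangle$ the $S$-vertex must be $v_{2}\in V_{ij}$, and then $\{x,y\},\{y,z\}\in E$ rule out $\{v_{1},v_{3}\}\in\{\{x,y\},\{y,z\}\}$, whereas for $\{v_{1},v_{3}\}=\{x,z\}$ either $\{x,z\}\in E$ destroys the square or $\{x,z\}\notin E$ produces an $S$-cycle on $\{x,v_{2},y\}$ or $\{x,v_{2},z,y\}$ that avoids $w$, witnessing $v_{2}\notin C_{ij}^{xy,zz}$. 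Hence adjoining $w$ never creates a new obstruction, so $C_{ij}^{xy,zw}=C_{ij}^{xy,zz}$; Case~2 is the mirror image in $z$ and yields $C_{ij}^{xy,zw}=C_{ij}^{xy,ww}$.

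Now assume $\{i,w\},\{j,z\}\in E$. Feeding these two edges into the hypotheses $\{x,w\},\{y,z\}\in E$, $xy<_{\ell}zw$, $j<_{t}\{y,w\}$, $i<_{b}\{x,z\}$ and the crossing-pair relations pins down the mutual top- and bottom-positions of $i,j,x,y,z,w$ into a rigid configuration; the point I extract from it is that, whichever of the remaining adjacencies hold, each vertex of $S\cap\{i,j\}$ lies in an induced triangle or square all of whose other vertices are among $\{x,y,z,w\}\cup(\{i,j\}\setminus S)$. If $j\notin C_{ij}^{xy,zw}$ then Observation~\ref{obs:sfvs:perm}~(1) gives $C_{ij}^{xy,zw}=C_{\eqslantless ij}^{xy,zw}$, and symmetrically $i\notin C_{ij}^{xy,zw}$ gives $C_{ij}^{xy,zw}=C_{\leqslant ij}^{xy,zw}$; these are the first two terms of the claimed maximum. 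So suppose $i,j\in C_{ij}^{xy,zw}$. Then by the configuration just mentioned, if $i\in S$ or $j\in S$ there is an induced $S$-triangle or $S$-square entirely inside $C_{ij}^{xy,zw}\cup\{x,y,z,w\}$, contradicting that this set induces an $S$-forest; hence $i,j\notin S$, and since also $x,y,z,w\notin S$ the whole set $\{i,j,x,y,z,w\}$ is $S$-free. Moreover $C_{ij}^{xy,zw}\setminus\{i,j\}\subseteq V_{<ij}$ by Observation~\ref{obs:sfvs:perm}~(1).

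It remains to establish $C_{ij}^{xy,zw}=C_{<ij}^{x'y',z'w'}\cup\{i,j\}$ in this case, after first checking that $x'y'$ and $z'w'$ satisfy the requirements of Definition~\ref{def:sfvs:perm:c:2} relative to ${<}ij$ — routine from the choice of leftmost crossing pairs, the relation $x'y'<_{\ell}z'w'$, the edges $\{x',w'\},\{y',z'\}\in E$ which persist, and the position inequalities already derived. One inclusion is immediate: as $\{x',y',z',w'\}\subseteq\{i,j,x,y,z,w\}$, every induced $S$-cycle of $G[X\cup\{x',y',z',w'\}]$ is one of $G[X\cup\{i,j,x,y,z,w\}]$, so the $V_{<ij}$-part of $C_{ij}^{xy,zw}$ is feasible for $C_{<ij}^{x'y',z'w'}$. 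The reverse inclusion — the heart of the lemma — is the rerouting step: writing $\{a',b'\}=\{i,j,x,y,z,w\}\setminus\{x',y',z',w'\}$, I must show that every induced $S$-triangle or $S$-square of $G$ that meets $\{a',b'\}$ and is otherwise contained in $V_{<ij}\cup\{x',y',z',w'\}$ forces its (unique) $S$-vertex $v\in V_{<ij}$ to sit in an induced $S$-triangle or $S$-square together with vertices of $\{x',y',z'\}$ or $\{x',y',w'\}$ alone, whence $v\notin C_{<ij}^{x'y',z'w'}$; consequently the optimal $X$ for $C_{<ij}^{x'y',z'w'}$ already satisfies $G[X\cup\{i,j,x,y,z,w\}]\in\mathcal{F}_{S}$. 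This is precisely the geometric argument carried out for the single leftover vertex $a'$ in Case~2 of the proof of Lemma~\ref{lem:sfvs:perm:ijxyz}: the position of $v$ (namely $v<_{t}j\leq_{t}y'$ and $v<_{b}i\leq_{b}x'$), together with $x'<_{t}z'<_{t}\{a',b'\}$, $y'<_{b}w'<_{b}\{a',b'\}$ and the surviving adjacencies $\{x',w'\},\{y',z'\}\in E$, forces $v$ adjacent to $x'$ or to $y'$, after which $\langle v,x',y'\rangle$ or a suitable square on $\{v,x',y',z'\}$ or $\{v,x',y',w'\}$ is the desired $S$-cycle avoiding $\{a',b'\}$. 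The main obstacle is organising this case split uniformly: since $\{a',b'\}$ can a priori be several different pairs and the offending cycle may be a triangle or a square, there are many sub-cases, but each is finite — permutation graphs admit only triangle and square induced cycles — and each is resolved exactly as above. Combining the three terms gives the stated recursion.
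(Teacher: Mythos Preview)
Your proposal is correct and follows essentially the same approach as the paper: the same three-case split, the same use of Observation~\ref{obs:sfvs:perm}~(1) to peel off $i$ or $j$, the same contradiction via an $S$-cycle on $\{i,j,z,w\}$ when $i\in S$ or $j\in S$, and the same geometric rerouting of an $S$-cycle through the leftover pair $\{a',b'\}$ onto $\{x',y',z',w'\}$.

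One small imprecision worth flagging: in the rerouting step for an $S$-square $\langle v_{1},v_{2},v_{3},b'\rangle$ with $v_{2}\in S$, you write that $v_{2}$ ends up in an $S$-cycle ``together with vertices of $\{x',y',z'\}$ or $\{x',y',w'\}$ alone.'' That is not quite what happens in the paper (nor what is needed): the rerouted cycle is $\langle v_{1},v_{2},v_{3},y'\rangle$, which still uses $v_{1},v_{3}$---these may lie in $V_{<ij}$ rather than in $\{x',y',z',w'\}$. The conclusion $v_{2}\notin C_{<ij}^{x'y',z'w'}$ nonetheless follows, since the rerouted cycle is contained in $V_{<ij}\cup\{x',y',z',w'\}$. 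So the argument is fine, only your summary of its output is slightly overstated.
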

\begin{proof}
Assume first that $\{i,w\}\notin E$. Then
$i<_t w$ and $i<_b w$ hold, since $i <_t j <_t w$.
Thus the following hold, as $ij,xy,zw$ are crossing pairs:
\begin{itemize}
\item $i<_{t}j<_{t}\{y,w\}$, $x<_{t}\{y,z\}$, $z<_{t}w$,
\item $j<_{b}i<_{b}w<_{b}z$, and $\{i,y\}<_{b} \{x,w\}$.
\end{itemize}
Then the neighborhood of $w$ in $G[V_{ij}\cup\{x,y,z,w\}]$ is a subset of $\{x,y,z\}$.
We will show that if a subset of $V_{ij}\cup\{x,y,z,w\}$ that contains $w$ induces an $S$-cycle then its non-empty intersection with $V_{ij}$ is not a subset of $C_{ij}^{xy,zz}$.
Since an $S$-cycle in $G$ is only an $S$-triangle or an $S$-square, we distinguish the following two cases:
\begin{itemize}
\item Let $\langle v_{1},v_{2},w\rangle$ be an induced $S$-triangle such that $v_{1},v_{2}\in V_{ij}\cup\{x,y,z\}$.
Then $\{v_{1},v_{2}\}\subset\{x,y,z\}$ since $N(w) \subseteq \{x,y,z\}$, which leads to a contradiction, because $x,y,z,w\notin S$.
\item Let $\langle v_{1},v_{2},v_{3},w\rangle$ be an induced $S$-square such that $v_{1},v_{2},v_{3}\in V_{ij}\cup\{x,y,z\}$.
Then $\{v_{1},v_{3}\}\subset\{x,y,z\}$. Since $x,y,z,w\notin S$, we have $v_{2}\in S$ so that $v_{2}\in V_{ij}$.
\begin{itemize}
\item Assume that $\{v_{1},v_{3}\}=\{x,y\}$ or $\{y,z\}$. Then we reach a contradiction because $\{x,y\},\{y,z\}\in E$.
\item Assume that $\{v_{1},v_{3}\}=\{x,z\}$. Then it is clear that $\{x,z\}\notin E$. This however shows that $\langle x,v_{2},z,y\rangle$ is an induced $S$-square, so that $v_{2}\notin C_{ij}^{xy,zz}$.
\end{itemize}
\end{itemize}
Therefore, if a subset of $V_{ij}\cup\{x,y,z,w\}$ that contains $w$ induces an $S$-cycle then its non-empty intersection with $V_{ij}$ is not a subset of $C_{ij}^{xy,zz}$.
Thus $C_{ij}^{xy,zw}=C_{ij}^{xy,zz}$ holds.

If we assume that $\{j,z\}\notin E$ then similar arguments with the previous case for $\{i,w\}\notin E$ show that $C_{ij}^{xy,zw}=C_{ij}^{xy,ww}$.

Our remaining case is $\{i,w\},\{j,z\}\in E$.
If $j\notin C_{ij}^{xy,zw}$ then by Observation~\ref{obs:sfvs:perm}~(1) we get $C_{ij}^{xy,zw}=C_{\eqslantless ij}^{xy,zw}$.
Similarly if $i\notin C_{ij}^{xy,zw}$ then $C_{ij}^{xy,zw}=C_{\leqslant ij}^{xy,zw}$.
So let us assume that both $i,j$ belong to the solution $C_{ij}^{xy,zw}$, that is $i,j\in C_{ij}^{xy,zw}$.
Notice that by Observation~\ref{obs:sfvs:perm}~(1) we know that $C_{ij}^{xy,zw}\setminus\{i,j\}\subseteq V_{<ij}$.
We distinguish two cases depending on whether $i,j$ belong to $S$.
If $i\in S$ or $j\in S$ then the following induced $S$-cycles show that we reach a contradiction to $i,j\in C_{ij}^{xy,zw}$:
\begin{itemize}
\item If $\{i,z\}\in E$ then $\langle i,j,z\rangle$ is an induced $S$-triangle.
\item If $\{j,w\}\in E$ then $\langle i,j,w\rangle$ is an induced $S$-triangle.
\item If $\{i,z\},\{j,w\}\notin E$, then $\langle i,j,z,w\rangle$ is an induced $S$-square.
\end{itemize}
Thus if $i\in S$ or $j\in S$ then we know that $j\notin C_{ij}^{xy,zw}$ or $i\notin C_{ij}^{xy,zw}$ which shows the first description of $C_{ij}^{xy,zw}$ in the third statement.

Let $i,j\notin S$ and recall that $i,j\in C_{ij}^{xy,zw}$ and $\{i,w\},\{j,z\}\in E$.
Observe that the set $\{i,j,x,y,z,w\}\setminus\{x',y',z',w'\}$ contains two adjacent vertices. 
Let $a'b'$ be the crossing pair of $\{i,j,x,y,z,w\}\setminus\{x',y',z',w'\}$ so that $a' <_t b'$ and $b' <_b a'$.
We will show that if a subset of $V_{<ij}\cup\{x',y',z',w',a',b'\}$ that contains $a'$ or $b'$ induces an $S$-cycle then its non-empty intersection with $V_{<ij}$ is not a subset of $C_{<ij}^{x'y',z'w'}$. Such an $S$-cycle is either an $S$-triangle or an $S$-square.
\begin{itemize}
\item Let $\langle v_{1},v_{2},b'\rangle$ be an induced $S$-triangle such that $v_{1},v_{2}\in V_{<ij} \cup\{x',y',z',w',a'\}$.
Since $x',y',z',w',a',b'\notin S$, without loss of generality, assume that $v_{1}\in S$ so that $v_{1}\in V_{<ij}$.
Then $v_1 <_t j <_t \{y,w\}$ and $v_1 <_b i <_b \{x,z\}$.
Our goal is to show that $v_{1}\notin C_{<ij}^{x'y',z'w'}$.
Since $\{v_1,b'\} \in E$ and $x'<_t a' <_t b'$, we get $v_1 <_t b'$ and $b' <_b v_1$.
Also notice that $x'y'<_{\ell} z'w' <_{\ell} a'b'$ so that $y'<_{b}w'<_{b}b'$.
Thus the following hold:
\begin{itemize}
\item $\{v_{1},x'\}<_{t} \{y',w',b'\}$ and
\item $y'<_{b}w'<_{b}b'<_{b}\{v_{1},x'\}$.
\end{itemize}
This shows that $\{v_{1},y'\},\{v_{1},w'\}\in E$. With these facts we obtain the following $S$-cycles so that $v_{1}\notin C_{<ij}^{x'y',z'w'}$:
\begin{itemize}
\item If $\{v_{1},x'\}\in E$ then $\langle v_{1},x',y'\rangle$ is an induced $S$-triangle.
\item If $\{y',w'\}\in E$ then $\langle v_{1},y',w'\rangle$ is an induced $S$-triangle.
\item If $\{v_{1},x'\},\{y',w'\}\notin E$ then $\langle v_{1},y',x',w'\rangle$ is an induced $S$-square.
\end{itemize}
\item Let $\langle v_{1},v_{2},v_{3},b'\rangle$ be an $S$-square such that $v_{1},v_{2},v_{3}\in V_{<ij}\cup\{x',y',z',w',a'\}$.
If $v_{1}\in S$ or $v_{3}\in S$ then similar to the previous argument we can show that the $S$-vertex does not belong to $C_{<ij}^{x'y',z'w'}$.
Assume that $v_{2}\in S$. Since $x',y',z',w',a',b'\notin S$, $v_{2}\in V_{<ij}$.
Thus $v_2 <_t j <_t \{y,w\}$ and $v_2 <_b i <_b \{x,z\}$ which mean that $v_2 <_t \{y',w',b'\}$ and $v_2 <_b \{x',z',a'\}$.
Moreover $x'y'<_{\ell} z'w' <_{\ell} a'b'$ imply $x'<_{t}z'<_{t}a'$ and $y'<_{b}w'<_{b}b'$.
By the given $S$-square we have $\{v_1,v_3\} <_t \{v_2,b'\}$ and $\{v_2,b'\} <_b \{v_1,v_3\}$.
Since $\{v_2,b'\} \notin E$, we also have $v_2 <_t b'$ and $v_2 <_b b'$. Then the following hold:
\begin{itemize}
\item $\{v_{1},v_{3}\}<_{t}v_{2}<_{t}\{y',b'\}$ and
\item $\{v_{2},y'\}<_{b}b'<_{b}\{v_{1},v_{3}\}$.
\end{itemize}
Thus $\{v_{1},y'\},\{v_{3},y'\}\in E$. 
Now the following $S$-cycles show that $v_2 \notin C_{<ij}^{x'y',z'w'}$.
\begin{itemize}
\item If $\{v_{2},y'\}\in E$ then $\langle v_{1},v_{2},y'\rangle$ is an induced $S$-triangle.
\item If $\{v_{2},y'\}\notin E$ then $\langle v_{1},v_{2},v_{3},y'\rangle$ is an induced $S$-square.
\end{itemize}

\item Following the same lines as above we can show that if a subset of $V_{<ij}\cup\{x',y',z',w',a',b'\}$ that contains $a'$ induces an $S$-cycle of $G$ then its non-empty intersection with $V_{<ij}$ is not a subset of $C_{<ij}^{x'y',z'w'}$.
\end{itemize}
Therefore, if a subset of $V_{<ij}\cup\{x',y',z',w',a',b'\}$ that contains $a'$ or $b'$ induces an $S$-cycle then its non-empty intersection with $V_{<ij}$ is not a subset of $C_{<ij}^{x'y',z'w'}$.
By this fact it follows that $C_{ij}^{xy,zw}=C_{<ij}^{x'y',z'w'}\cup\{i,j\}$ as described in the third statement.
\end{proof}

It is important to notice that all described formulations are given recursively based on Lemmas~\ref{lem:sfvs:perm:i}--\ref{lem:sfvs:perm:ijxyzw}.
Now we are in position to state our claimed polynomial-time algorithm for the SFVS problem on permutation graphs.

\begin{theorem}\label{theo:permutation}
{\sc Subset Feedback Vertex Set} can be solved in $O(m^3)$ time on permutation graphs.
\end{theorem}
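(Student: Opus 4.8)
The plan is to mirror the interval-graph argument behind Theorem~\ref{theo:interval}, replacing the sweep over intervals by a sweep over crossing pairs. First I would augment the input to $\pi'$ as described (adding the isolated, negatively weighted vertex $0$) and, in a preprocessing phase, compute for every crossing pair $ij\in\mathcal{X}$ all of its predecessors ${\eqslantless}ij$, ${\leqslant}ij$, ${<}ij$, ${\ll}ij$, ${\ll}ii$, ${\ll}jj$, and, for every vertex $x$, the predecessor ${<}ij{\ll}xx$; since $\leq_{t}$ and $\leq_{b}$ are read off from $\pi'$, each predecessor is located by a single scan. I would then process the crossing pairs of $\mathcal{X}$ in increasing order with respect to $\leq_{r}$. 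For a pair $ij$ I would first check whether $i=j$: if $ij\in\mathcal{I}$ I compute $A_{ii}$ (Lemma~\ref{lem:sfvs:perm:i}), then $B_{ii}^{xx}$ (Lemma~\ref{lem:sfvs:perm:ix}), $B_{ii}^{xy}$ (Lemma~\ref{lem:sfvs:perm:ixy}), $C_{ii}^{xy,zz}$ (Lemma~\ref{lem:sfvs:perm:ixyz}) and $C_{ii}^{xy,zw}$ (Lemma~\ref{lem:sfvs:perm:ixyzw}) over all admissible $x$, $xy$, $z$, $zw$; if $i\neq j$ I compute the analogous sets $A_{ij}$, $B_{ij}^{xx}$, $B_{ij}^{xy}$, $C_{ij}^{xy,zz}$, $C_{ij}^{xy,zw}$ using Lemmas~\ref{lem:sfvs:perm:ij}--\ref{lem:sfvs:perm:ijxyzw}. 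The leftmost crossing pairs $x'y'$, $z'w'$ occurring in these recurrences are minimal $\leq_{\ell}$-pairs inside index sets of size at most six, hence found in $O(1)$ time. At the end I output $A_{\pi(n)n}$.

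For correctness I would argue by induction on the position of $ij$ in the $\leq_{r}$-order: the base case is $A_{00}=\varnothing$, forced since $V_{00}=\{0\}$ and $w(0)\leq0$, while each recurrence of Lemmas~\ref{lem:sfvs:perm:i}--\ref{lem:sfvs:perm:ijxyzw} expresses the current $A$-, $B$- or $C$-set in terms of sets that are attached either to a crossing pair strictly preceding $ij$ in $\leq_{r}$ or to $ij$ itself but computed earlier in its per-pair phase ($A$ before $B$, $B$ before the $C$-sets depending on them). Thus every referenced quantity is available when needed, $A_{\pi(n)n}$ equals $\max_{w}\{X\subseteq V:G[X]\in\mathcal{F}_{S}\}$ by Definition~\ref{def:sfvs:perm:a}, and because $0$ is isolated with negative weight we have $0\notin A_{\pi(n)n}$, so this set is a maximum-weight $S$-forest of $G$ and its complement a minimum-weight subset feedback vertex set.

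For the running time I would count the subproblems, using that $\mathcal{X}\setminus\mathcal{I}$ has exactly $m$ elements (one per edge of $G$) and $|\mathcal{X}|=O(m)$ for connected $G$. Then the $A$-sets number $O(m)$, the $B$-sets $O(m^{2})$, the sets $C_{ij}^{xy,zz}$ at most $O(m^{2}n)=O(m^{3})$, and the sets $C_{ij}^{xy,zw}$ at most $O(m^{3})$; each is evaluated in $O(1)$ time from the precomputed predecessors and the $O(1)$-size leftmost-pair computations, and the predecessor preprocessing, dominated by computing ${<}ij{\ll}xx$ over all $ij$ and $x$, also fits within $O(m^{3})$. Hence the total is $O(m^{3})$. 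The step I expect to be the main obstacle is the bookkeeping that makes this assembly correct: verifying that sweeping crossing pairs in $\leq_{r}$-order is a valid topological order for \emph{all} the recurrences — in particular that predecessors such as ${\ll}ix$ appearing in Lemma~\ref{lem:sfvs:perm:ijx} and ${<}ij{\ll}xx$ always designate crossing pairs that genuinely precede $ij$, and that the side conditions attached to the $B$- and $C$-sets ($x,y,z,w\notin S$, the various $\leq_{t},\leq_{b}$ inequalities, membership in $\mathcal{X}\setminus\mathcal{I}$) are met whenever a recurrence invokes a smaller set of that type.
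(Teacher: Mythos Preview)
Your proposal is correct and follows essentially the same route as the paper: precompute the predecessors of every crossing pair, sweep the crossing pairs in ascending $\leq_{r}$-order, and for each $ij$ fill in $A_{ij}$, then the $B$-sets, then the $C$-sets via Lemmas~\ref{lem:sfvs:perm:i}--\ref{lem:sfvs:perm:ijxyzw}, outputting $A_{\pi(n)n}$; the $O(m^{3})$ bound comes, as you say, from the $O(m)$ crossing pairs and the $O(m^{2})$ choices of superscripts. The one point where the paper is slightly more explicit is the ordering \emph{within} a fixed $ij$: because $B_{ij}^{xy}$ may call $B_{ij}^{xx}$ or $B_{ij}^{yy}$ and $C_{ij}^{xy,zw}$ may call $C_{ij}^{xy,zz}$ or $C_{ij}^{xy,ww}$, the paper processes the superscript pairs $xy$ (and, for each $xy$, the pairs $zw$) in descending $\leq_{\ell}$-order, which is precisely the bookkeeping concern you flag at the end.
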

\begin{proof}
Let us describe such an algorithm.
Given the permutation diagram, that is the ordering $<_{t}$ and $<_{b}$ on $V$, we first compute all crossing pairs $ij$ of $\mathcal{X}$.
Observe that the number of such pairs is $n+m$.
For each crossing pair $ij$ we compute its predecessors $\left\{{\eqslantless}, {\leqslant}, {<}, {\ll}, {<} {\ll}\right\}$ according to the corresponding definition.
Note that such a simple application requires $O(n^2)$ time for every crossing pair $ij$, giving a total running time of $O(n^2m)$.
Next we scan all crossing pairs of $\mathcal{X}$ according to their ascending order with respect to $<_{r}$.
For every crossing pair $ij$ we compute $A_{ij}$ according to Lemmas~\ref{lem:sfvs:perm:i} and \ref{lem:sfvs:perm:ij}.
That is, for every crossing pair $xy$ of $V \setminus V_{ij}$ in descending order with respect to $<_{\ell}$ we compute $B_{ij}^{xy}$ according to Lemmas~\ref{lem:sfvs:perm:ix}, \ref{lem:sfvs:perm:ixy}, \ref{lem:sfvs:perm:ijx}, and \ref{lem:sfvs:perm:ijxy}.
By the recursive formulations of $B_{ij}^{xy}$, for every crossing pair $zw$ of $V \setminus V_{xy}$ in descending order with respect to $<_{\ell}$ we compute $C_{ij}^{xy,zw}$ according to Lemmas~\ref{lem:sfvs:perm:ixyz}, \ref{lem:sfvs:perm:ixyzw}, \ref{lem:sfvs:perm:ijxyz}, and \ref{lem:sfvs:perm:ijxyzw}.
In total all such computations require $O(m^3)$ time.
At the end the set $A_{\pi(n)n}$ is the maximum weighted $S$-forest for $G$ so that $V \setminus A_{\pi(n)n}$ is exactly the subset feedback vertex set of $G$.
\end{proof}

\section{Concluding remarks}
In accordance to Theorem~\ref{theo:circular} that uses the algorithm for interval graphs, we believe that our algorithm for permutation graphs can be suitably applied for the class of circular permutation graphs (the class of circular permutation graphs generalizes permutation graphs analogous to that of circular-arc graphs with respect to interval graphs \cite{graph:classes:brandstadt:1999,graph:classes:Go04}).
From the complexity point of view, since FVS is polynomial-time
solvable on the class of AT-free graphs \cite{KratschMT08},
a natural problem is to settle the complexity of SFVS on AT-free graphs.
Interestingly most problems that are hard on AT-free graphs are already hard on co-bipartite graphs (see for e.g., \cite{MAWSHANG1997135}).
Co-bipartite graphs are the complements of bipartite graphs and are unrelated to permutation graphs or interval graphs.
Here we show that SFVS admits a simple and efficient (polynomial-time) solution on co-bipartite
graphs and therefore excluding such an approach through a hardness result on co-bipartite graphs.

\begin{theorem}\label{theo:cobipartite}
The number of maximal $S$-forests of a co-bipartite graph is at most $22n^4$ and these can be enumerated in time $O(n^4)$.
\end{theorem}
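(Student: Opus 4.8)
The plan is to use the fact that the vertex set of a co-bipartite graph $G$ splits into two cliques $K_1,K_2$ (take a proper $2$-colouring of the bipartite complement; the colour classes are cliques of $G$), and that every induced cycle of $G$ is a triangle or an induced $4$-cycle which meets each $K_t$ in at most two vertices (a clique contained in an induced cycle has at most two vertices). Since a shortest cycle through a fixed vertex is always induced, a set $Y\subseteq V$ is an $S$-forest if and only if $G[Y]$ contains neither an $S$-triangle nor an induced $S$-$C_4$. The engine of the whole argument is a single observation: \emph{if $Y$ is an $S$-forest and $Y\cap K_t$ contains a vertex of $S$, then $|Y\cap K_t|\le 2$}, since otherwise three pairwise adjacent vertices of $Y\cap K_t$, one of them in $S$, form an $S$-triangle.

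First I would dispose of the maximal $S$-forests $Y$ with $Y\cap S=\varnothing$: any such $Y$ satisfies $Y\subseteq V\setminus S$ and $V\setminus S$ is itself an $S$-forest, so the only candidate is the single set $V\setminus S$. From now on $Y$ is a maximal $S$-forest with $Y\cap S\neq\varnothing$; by the observation there is a side, say $K_1$, with $Y_1:=Y\cap K_1$ satisfying $1\le|Y_1|\le 2$. If moreover $Y_2:=Y\cap K_2$ also has $|Y_2|\le 2$ (\emph{Case A}), then $Y$ is the union of a subset of $K_1$ of size at most $2$ and a subset of $K_2$ of size at most $2$, so the number of such $Y$ is at most $\bigl(1+|K_1|+\binom{|K_1|}{2}\bigr)\bigl(1+|K_2|+\binom{|K_2|}{2}\bigr)$, which a crude estimate bounds by at most $n^4$ (and in fact much less); each such $Y$ has $O(1)$ vertices.

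The remaining case is \emph{Case B}: $|Y_2|\ge 3$. Then, by the observation applied to $K_2$, we have $Y_2\cap S=\varnothing$, so every $S$-cycle of $G[Y]$ must pass through a vertex of $Y_1$ and, being a triangle or a $C_4$, uses one or two vertices of $Y_1$ and one or two vertices of $Y_2$. Writing $Y_1=\{s\}$ or $Y_1=\{s,a\}$ with $s\in S$, I would translate ``$G[Y_1\cup Y_2]$ has no $S$-cycle'' into a short list of conditions on $Y_2\subseteq K_2\setminus S$: at most one vertex of $Y_2$ lies in $N(s)$ (no triangle $\{s,b,b'\}$); $Y_2$ avoids $N(s)\cap N(a)$ (no triangle $\{s,a,b\}$); $Y_2$ does not meet both $N(s)\setminus N(a)$ and $N(a)\setminus N(s)$ (no induced $S$-$C_4$ on $\{s,a,b,b'\}$); and, when $a\in S$, at most one vertex of $Y_2$ lies in $N(a)$. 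Partitioning $K_2\setminus S$ according to adjacency to $s$ and to $a$, these conditions force any inclusion-maximal admissible $Y_2$ to consist of a fixed ``forced'' block of $K_2\setminus S$ together with at most one extra vertex taken from a fixed set; hence there are $O(n)$ inclusion-maximal admissible sets $Y_2$ for each fixed $Y_1$. As $Y_1$ ranges over the $O(n^2)$ subsets of $K_1$ of size at most $2$, Case B and its mirror image (roles of $K_1,K_2$ swapped) contribute $O(n^3)$ maximal $S$-forests, each explicitly described.

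Summing the three contributions yields the claimed bound $\le 22n^4$ (with plenty of slack) and an $O(n^4)$ enumeration: output $V\setminus S$, then iterate over the $O(n^4)$ bounded-size candidates of Case A and the $O(n^3)$ structured candidates of Case B, keeping those that are $S$-forests; each Case-A set is written in $O(1)$ time and each Case-B set in $O(n)$ time, for $O(n^4)$ total. The step I expect to be the main obstacle is pinning down the Case-B structural description precisely — an elementary but fiddly enumeration of exactly which induced $S$-triangles and $S$-$C_4$'s can occur between a $\le 2$-element, $S$-meeting part of one clique and an $S$-free part of the other, and the verification that the resulting maximal $Y_2$'s really do form an $O(n)$-size family; a secondary, minor point is the maximality bookkeeping inside the enumeration, though for the \textsc{Subset Feedback Vertex Set} application it already suffices that this polynomial-size family provably contains every maximal, hence every maximum-weight, $S$-forest.
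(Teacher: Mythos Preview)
Your proposal is correct and follows essentially the same approach as the paper. Both arguments rest on the clique bipartition $V=K_1\cup K_2$ and the observation that an $S$-forest can meet a clique in at most two vertices once that intersection contains an $S$-vertex; the paper organises the ensuing case analysis by the pair $(|F\cap K_1\cap S|,|F\cap K_2\cap S|)\in\{0,1,2\}^2$, yielding the $22$ explicit subcases that give the constant, while you organise it by whether each side $F\cap K_t$ is ``small'' ($\le 2$) or ``large'' ($\ge 3$, hence $S$-free), arriving at the same structural description and the same $O(n^4)$ enumeration.
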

\begin{proof}
Let $G=(V,E)$ be a co-bipartite graph and let $(A,B)$ be a partition of $V$ such that such that $G[A]$ and $G[B]$ are cliques.
We further partition $V$ as $(A_S,A_R,B_S,B_R)$
where $A_S = A \cap S$, $A_R = A \setminus S$,
$B_S = B \cap S$ and $B_R = B \setminus S$.
For a vertex $v$ of $G$ and a set $U \in \{A_S,B_S,A_R,B_R\}$ we denote by $N_{U}(v)$ the neighbours of $v$ in the set $U$, that is, $N_{U}(v) =\df N(v) \cap U$.
Moreover the {\it symmetric difference} of two sets $L$ and $R$ is the set $(L \setminus R) \cup (R \setminus L)$ and is denoted by $L \vartriangle R$.
Let $(X,Y,Z,W)$ be the partition of the vertex set of a maximal $S$-forest of $G$ such that $X \subseteq A_S$, $Y \subseteq A_R$, $Z \subseteq B_S$ and $W \subseteq B_R$.
It is clear that $|X| \leq 2$ and $|Z| \leq 2$.
Thus it is sufficient to consider the following cases with respect to $X$ and $Z$:
\begin{itemize}[itemsep=-0.1em,topsep=0pt]
\item Let $X=\emptyset$ and $Z=\emptyset$.
Then the maximal $S$-forest contains no vertex of $S$, so we can safely include all vertices of $V \setminus S$.
Thus the following set is a maximal $S$-forest of $G$:
\begin{enumerate}[topsep=0pt]
\item $\left(\emptyset,A_R,\emptyset,B_R\right)$.
\end{enumerate}

\item Let $X=\{a_S\}$ and $Z=\emptyset$. Observe that $|Y|\leq 1$, since $G[X \cup Y]$ is a clique.
If $Y=\emptyset$ then including at least two neighbours of $a_S$ that are contained in $B_R$ leads to an $S$-cycle.
Thus we can safely include all non-neighbours of $a_S$ and exactly one neighbour of $a_S$ contained in $B_R$ in the maximal $S$-forest.
If $Y=\{a_R\}$ then including a neighbour of $a_S$ and a neighbour of $a_R$ (may well be the same) that are contained in $B_R$ leads to an $S$-cycle.
If we do not include a neighbour of $a_S$ then we can safely include all other vertices of $B_{R}$.
However if we include a neighbour of $a_S$ that is non-adjacent to $a_R$ then we can safely include all other vertices that are non-adjacent to both.
Thus the following sets induce the corresponding maximal $S$-forests of $G$:
\begin{enumerate}[resume,topsep=0pt]
\item $\left(\{a_S\}, \emptyset, \emptyset, B_R\right)$, where $N_{B_R}(a_S)=\emptyset$;
\item $\left(\{a_S\}, \emptyset, \emptyset, \{b_R\} \cup \left(B_R \setminus N(a_S)\right)\right)$, where $b_R \in N_{B_{R}}(a_S)$;
\item $\left(\{a_S\}, \{a_R\}, \emptyset, B_R \setminus N(a_S)\right)$;
\item $\left(\{a_S\}, \{a_R\}, \emptyset, \{b_R\} \cup \left(B_R \setminus N(\{a_S,a_R\})\right)\right)$, where $b_R \in N_{B_{R}}(a_S) \setminus N_{B_{R}}(a_R)$.
\end{enumerate}

\item Let $X=\emptyset$ and $Z=\{b_S\}$. Completely symmetric arguments with the previous case imply that the following sets induce the corresponding maximal $S$-forests of $G$:
\begin{enumerate}[resume,topsep=0pt]
\item $\left(\emptyset, A_R, \{b_S\}, \emptyset\right)$, where $N_{A_R}(b_S)=\emptyset$;
\item $\left(\emptyset, \{a_R\} \cup \left(A_R \setminus N(b_S)\right), \{b_S\}, \emptyset\right)$, where $a_R \in N_{A_{R}}(b_S)$;
\item $\left(\emptyset, A_R \setminus N(b_S), \{b_S\}, \{b_R\}\right)$;
\item $\left(\emptyset, \{a_R\} \cup \left(A_R \setminus N(\{b_S,b_R\})\right), \{b_S\}, \{b_R\}\right)$, where $a_R \in N_{A_{R}}(b_S) \setminus N_{A_{R}}(b_R)$.
\end{enumerate}

\item Let $X=\{a_S\}$ and $Z=\{b_S\}$. Then both $|Y|\leq 1$ and $|W|\leq 1$.
Thus the following sets induce the maximal $S$-forest of $G$:
\begin{enumerate}[resume,topsep=0pt]
\item $\left(\{a_S\}, \emptyset, \{b_S\}, \emptyset\right)$, where $\{a_S,b_S\}\in E$ and $V \setminus S \subseteq N(a_S)\cap N(b_S)$;
\item $\left(\{a_S\}, \{a_R\}, \{b_S\}, \emptyset\right)$, where $G[\left\{a_S,a_R,b_S\right\}]$ is acyclic and $B_R \subseteq N(a_S) \cup N(a_R)$;
\item $\left(\{a_S\}, \emptyset, \{b_S\}, \{b_R\}\right)$, where $G[\left\{a_S,b_S,b_R\right\}]$ is acyclic and $A_R \subseteq N(b_S) \cup N(b_R)$;
\item $\left(\{a_S\}, \{a_R\}, \{b_S\}, \{b_R\}\right)$, where $G[\left\{a_S,a_R,b_S,b_R\right\}]$ is acyclic.
\end{enumerate}

\item Let $X=\{a_S,a'_S\}$ and $Z=\emptyset$. Then $|Y|=0$, since $G[X \cup Y]$ is a clique.
Adding a vertex of $B_R$ that is adjacent to both $a_S$ and $a'_S$ leads to an $S$-cycle.
If we add a vertex of $B_R$ that is adjacent to either $a_S$ or $a'_S$ then adding another such vertex leads to an $S$-cycle.
Thus we can safely include all other vertices that are non-adjacent to either $a_S$ or $a'_S$.
\begin{enumerate}[resume,topsep=0pt]
\item $\left(\{a_S,a'_S\}, \emptyset, \emptyset, B_R \setminus (N(a_S) \cap N(a'_S))\right)$, where $N_{B_R}(a_S) \vartriangle N_{B_R}(a'_S)=\emptyset$;
\item $\left(\{a_S,a'_S\}, \emptyset, \emptyset, \{b_R\} \cup \left(B_R\setminus N(\{a_S,a'_S\})\right)\right)$, where $b_R \in N_{B_R}(a_S) \vartriangle N_{B_R}(a'_S)$.
\end{enumerate}

\item Let $X=\emptyset$ and $Z=\{b_S,b'_S\}$. Completely symmetric arguments with the previous case imply that the following sets induce the corresponding maximal $S$-forest of $G$:
\begin{enumerate}[resume,topsep=0pt]
\item $\left(\emptyset, A_R \setminus (N(b_S) \cap N(b'_S)), \{b_S,b'_S\}, \emptyset\right)$, where $N_{A_R}(b_S) \vartriangle N_{A_R}(b'_S)=\emptyset$;
\item $\left(\emptyset, \{a_R\} \cup \left(A_R \setminus (N(b_S) \cup N(b'_S))\right), \{b_S,b'_S\}, \emptyset\right)$, where $a_R \in N_{A_R}(b_S) \vartriangle N_{A_R}(b'_S)$.
\end{enumerate}

\item Let $X=\{a_S,a'_S\}$ and $Z=\{b_S\}$. Then $|Y|=0$ and $|W|\leq 1$. Thus the following sets induce the maximal $S$-forest of $G$:
\begin{enumerate}[resume,topsep=0pt]
\item $\left(\{a_S,a'_S\}, \emptyset, \{b_S\}, \emptyset\right)$, where $G[\left\{a_S,a'_S,b_S\right\}]$ is acyclic;
\item $\left(\{a_S,a'_S\}, \emptyset, \{b_S\}, \{b_R\}\right)$, where $G[\left\{a_S,a'_S,b_S,b_R\right\}]$ is acyclic.
\end{enumerate}

\item Let $X=\{a_S\}$ and $Z=\{b_S,b'_S\}$.
Then similarly to the previous case we obtain the following:
\begin{enumerate}[resume,topsep=0pt]
\item $\left(\{a_S\}, \emptyset, \{b_S,b'_S\}, \emptyset\right)$, where $G[\left\{a_S,b_S,b'_S\right\}]$ is acyclic.
\item $\left(\{a_S\}, \{a_R\}, \{b_S,b'_S\}, \emptyset\right)$, where $G[\left\{a_S,b_S, b'_S, a\right\}]$ is acyclic.
\end{enumerate}

\item Let $X=\{a_S,a'_S\}$ and $Z=\{b_S,b'_S\}$. Then $|Y|=0$ and $|W|=0$ so that the following set induces such a maximal $S$-forest:
\begin{enumerate}[resume,topsep=0pt]
\item $\left(\{a_S,a'_S\}, \emptyset, \{b_S,b'_S\}, \emptyset\right)$, where $G[\left\{a_S,a'_S,b_S,b'_S\right\}]$ is acyclic.
\end{enumerate}
\end{itemize}
\medskip
\noindent Because $|X|, |Y|, |Z|, |W| \leq n$, each described maximal $S$-forest gives at most $n^4$ maximal $S$-forests.
Therefore in total there are at most $22n^4$ maximal $S$-forests that correspond to each particular case.
Taking into account that any maximal $S$-forest has at most $n$ vertices,
these arguments can be applied to obtain an enumeration algorithm that runs in time $O(n^4)$.
\end{proof}

We strongly believe that such an approach towards AT-free graphs should deal first with the complexity of the unweighted version of the SFVS.
Concerning graph classes related to bounded graph parameters, we would like to note that FVS is polynomial-time solvable on bounded clique-width graphs \cite{Bui-XuanSTV13}.
The computational complexity for SFVS on such graphs, or even on bounded linear clique-width graphs, has not been resolved yet and stimulates an intriguing open question.

Another interesting open question is concerned with problems related to {\it terminal-sets} such as the {\sc Multiway Cut} problem in which we want to disconnect a given set of terminals by removing vertices of minimum total weight.
As already mentioned in the Introduction, the {\sc Multiway Cut} problem reduces to the \textsc{Subset Feedback Vertex Set} problem by adding a vertex $s$ with $S = \{s\}$ that is adjacent to all terminals and whose weight is larger than the sum of the weights of all vertices in the original graph \cite{FominHKPV14}.
Notice that through such an approach in order to solve even the unweighted \textsc{Multiway Cut} problem one needs to solve the weighted \textsc{Subset Feedback Vertex Set} problem.
This actually implies that {\sc Multiway Cut} is polynomial-time solvable in permutation graphs and interval graphs by using our algorithms for the SFVS problem.
However polynomial-time algorithms for the {\sc Multiway Cut} problem were already known for permutation and interval graphs by a more general terminal-set problem \cite{GFKNU08,Papadopoulos12}.
Nevertheless it is still interesting to consider the computational complexity of the unweighted {\sc Multiway Cut} problem on subclasses of AT-free graphs such as co-comparability graphs.

\bibliography{subsetFVS_classes}{}

\begin{thebibliography}{10}

\bibitem{fvs:app:bayes:bar-yehuda:1998}
R.~Bar-Yehuda, D.~Geiger, J.~Naor, and R.~M. Roth.
\newblock Approximation algorithms for the feedback vertex set problem with
  applications to constraint satisfaction and bayesian inference.
\newblock {\em SIAM Journal on Computing}, 27(4):942--959, 1998.

\bibitem{fvs:approx:becker:1996}
A.~Becker and D.~Geiger.
\newblock Optimization of pearl's method of conditioning and greedy-like
  approximation algorithms for the vertex feedback set problem.
\newblock {\em Artificial Intelligence}, 83(1):167 -- 188, 1996.

\bibitem{fvs:perm:brandstadt:1993}
A.~Brandst{\"a}dt.
\newblock On improved time bounds for permutation graph problems.
\newblock In {\em Proceedings of WG 1992}, pages 1--10, 1985.

\bibitem{fvs:perm:brandstadt:1985}
A.~Brandst{\"a}dt and D.~Kratsch.
\newblock On the restriction of some {NP}-complete graph problems to
  permutation graphs.
\newblock In {\em Proceedings of FCT 1985}, pages 53--62, 1985.

\bibitem{fvs:perm:brandstadt:1987}
A.~Brandst{\"a}dt and D.~Kratsch.
\newblock On domination problems for permutation and other graphs.
\newblock {\em Theoretical Computer Science}, 54(2):181 -- 198, 1987.

\bibitem{graph:classes:brandstadt:1999}
A.~Brandst\"{a}dt, V.~B. Le, and J.~Spinrad.
\newblock {\em Graph Classes: A Survey}.
\newblock Society for Industrial and Applied Mathematics, 1999.

\bibitem{Bui-XuanSTV13}
B.{-}M. Bui{-}Xuan, O.~Such{\'{y}}, J.~A. Telle, and M.~Vatshelle.
\newblock Feedback vertex set on graphs of low clique-width.
\newblock {\em Eur. Journal of Combinatorics}, 34(3):666--679, 2013.

\bibitem{Calinescu08}
G.~Calinescu.
\newblock Multiway cut.
\newblock In {\em Encyclopedia of Algorithms}. Springer, 2008.

\bibitem{ChitnisFLMRS13}
R.~H. Chitnis, F.~V. Fomin, D.~Lokshtanov, P.~Misra, M.~S. Ramanujan, and
  S.~Saurabh.
\newblock Faster exact algorithms for some terminal set problems.
\newblock In {\em Proceedings of IPEC 2013}, pages 150--162, 2013.

\bibitem{fvs:chord:corneil:1988}
D.~G. Corneil and J.~Fonlupt.
\newblock The complexity of generalized clique covering.
\newblock {\em Discrete Applied Mathematics}, 22(2):109 -- 118, 1988.

\bibitem{CyganPPW13}
M.~Cygan, M.~Pilipczuk, M.~Pilipczuk, and J.~O. Wojtaszczyk.
\newblock Subset feedback vertex set is fixed-parameter tractable.
\newblock {\em {SIAM} J. Discrete Math.}, 27(1):290--309, 2013.

\bibitem{EvenNZ00}
G.~Even, J.~Naor, and L.~Zosin.
\newblock An 8-approximation algorithm for the subset feedback vertex set
  problem.
\newblock {\em SIAM J. Comput.}, 30(4):1231--1252, 2000.

\bibitem{problems:survey:festa:2009}
P.~Festa, P.~M. Pardalos, and M.~G.~C. Resende.
\newblock Feedback set problems.
\newblock In {\em Encyclopedia of Optimization}, pages 1005--1016. Springer,
  2009.

\bibitem{FominGLS16}
F.~V. Fomin, S.~Gaspers, D.~Lokshtanov, and S.~Saurabh.
\newblock Exact algorithms via monotone local search.
\newblock In {\em Proceedings of STOC 2016}, pages 764--775, 2016.

\bibitem{FominHKPV14}
F.~V. Fomin, P.~Heggernes, D.~Kratsch, C.~Papadopoulos, and Y.~Villanger.
\newblock Enumerating minimal subset feedback vertex sets.
\newblock {\em Algorithmica}, 69(1):216--231, 2014.

\bibitem{GJ}
M.~R. Garey and D.~S. Johnson.
\newblock {\em Computers and Intractability}.
\newblock W. H. Freeman and Co., 1978.

\bibitem{GargVY04}
N.~Garg, V.~V. Vazirani, and M.~Yannakakis.
\newblock Multiway cuts in node weighted graphs.
\newblock {\em J. Algorithms}, 50(1):49--61, 2004.

\bibitem{GolovachHKS14}
P.~A. Golovach, P.~Heggernes, D.~Kratsch, and R.~Saei.
\newblock Subset feedback vertex sets in chordal graphs.
\newblock {\em J. Discrete Algorithms}, 26:7--15, 2014.

\bibitem{graph:classes:Go04}
M.~C. Golumbic.
\newblock {\em Algorithmic Graph Theory and Perfect Graphs}.
\newblock Annals of Discrete Mathematics 57, Elsevier, 2004.

\bibitem{GFKNU08}
J.~Guo, F.~Huffner, E.~Kenar, R.~Niedermeier, and J.~Uhlmann.
\newblock Complexity and exact algorithms for vertex multicut in interval and
  bounded treewidth graphs.
\newblock {\em European Journal of Operational Research}, 186:542–--553, 2008.

\bibitem{KratschMT08}
D.~Kratsch, H.~M{\"u}ller, and I.~Todinca.
\newblock Feedback vertex set on {AT}-free graphs.
\newblock {\em Discrete Applied Mathematics}, 156(10):1936--1947, 2008.

\bibitem{fvs:perm:liang:1994}
Y.~D. Liang.
\newblock On the feedback vertex set problem in permutation graphs.
\newblock {\em Information Processing Letters}, 52(3):123 -- 129, 1994.

\bibitem{fvs:cocomp:liang:1997}
Y.~D. Liang and M.-S. Chang.
\newblock Minimum feedback vertex sets in cocomparability graphs and convex
  bipartite graphs.
\newblock {\em Acta Informatica}, 34(5):337--346, 1997.

\bibitem{fvs:int:lu:1997}
C.~L. Lu and C.~Y. Tang.
\newblock A linear-time algorithm for the weighted feedback vertex problem on
  interval graphs.
\newblock {\em Information Processing Letters}, 61(2):107--111, 1997.

\bibitem{MAWSHANG1997135}
C.~Maw-Shang.
\newblock Weighted domination of cocomparability graphs.
\newblock {\em Discrete Applied Mathematics}, 80(2):135 -- 148, 1997.

\bibitem{Papadopoulos12}
C.~Papadopoulos.
\newblock Restricted vertex multicut on permutation graphs.
\newblock {\em Discrete Applied Mathematics}, 160(12):1791--1797, 2012.

\bibitem{Spinrad03}
J.~P. Spinrad.
\newblock {\em Efficient Graph Representations}.
\newblock American Mathematical Society, Fields Institute Monograph Series 19,
  2003.

\bibitem{Yannakakis81a}
M.~Yannakakis.
\newblock Node-deletion problems on bipartite graphs.
\newblock {\em SIAM J. Comput.}, 10(2):310--327, 1981.

\end{thebibliography}
\bibliographystyle{plain}

\end{document}